\documentclass[11pt, hidelinks]{article}

\usepackage{graphicx, bbm, setspace, longtable, lscape, verbatim, dcolumn, natbib, multirow, hhline, xr, amsmath, authblk, fullpage, amsthm, amssymb, ifthen, mathrsfs, float, xcolor, tikz, etoolbox, wrapfig, subcaption, bibentry, eurosym, pdfpages, rotating, adjustbox, endnotes, mathtools, thmtools, thm-restate, hyperref, mdframed}
\usepackage{algorithm}
\usepackage{algpseudocode}

\usepackage{arydshln}
\usepackage{booktabs}
\usepackage{longtable, array}
\declaretheorem[name=Proposition]{proposition}
\declaretheorem[name=Theorem]{theorem}
\declaretheorem[name=Lemma]{lemma}

\declaretheoremstyle[
headfont=\normalfont\bfseries, 
bodyfont = \normalfont,
qed=$\square$
]{simpleQED}

\declaretheoremstyle[
headfont=\normalfont\bfseries, 
bodyfont = \normalfont,
]{simple}
\newmdtheoremenv{algo}{Algorithm}

\nobibliography*
\DeclareMathOperator*{\argmax}{arg\,max}
\DeclareMathOperator*{\argmin}{arg\,min}

\NewDocumentCommand{\cites}{m}{\citeauthor{#1}'s~\citeyearpar{#1}}
\NewDocumentCommand{\citesp}{m o}{%
  \citeauthor{#1}'s~(%
    \citeyear{#1}%
    \IfValueT{#2}{, p.~#2}%
  )%
}
\newcommand{\indicator}[1]{\mathbbm{1}\{#1\}}

\newcommand{\vertii}[1]{{\vert\kern-0.25ex\vert #1 \vert\kern-0.25ex\vert}}
\newcommand{\vertiib}[1]{{\left\vert\kern-0.25ex\left\vert #1 \right\vert\kern-0.25ex\right\vert}}
\newcommand{\vertiii}[1]{{\vert\kern-0.25ex\vert\kern-0.25ex\vert #1 \vert\kern-0.25ex\vert\kern-0.25ex\vert}}
\newcommand{\vertiiib}[1]{{\left\vert\kern-0.25ex\left\vert\kern-0.25ex\left\vert #1 \right\vert\kern-0.25ex\right\vert\kern-0.25ex\right\vert}}

\newcommand{\citets}[2]{\citeauthor{#2}'s \citeyearpar[#1]{#2}}
\def\E{\text{E}}

\newcommand{\tableSpeedupCombinedNew}{
\footnotesize
\renewcommand{\arraystretch}{0.95}
\setlength{\tabcolsep}{3.5pt}
\scalebox{1.0}{%
\begin{tabular}{llrrrrrrrr}
\hline
 & & \multicolumn{4}{c}{R=1} & \multicolumn{4}{c}{R=21} \\
\cmidrule(lr){3-6} \cmidrule(lr){7-10}
 & & NFXP & CCP & SC & MPEC & NFXP & CCP & SC & MPEC \\ \hline
\multicolumn{10}{l}{540 State Model} \\
Adam & ReLU Balanced & 1681 & 2146 & 310 &  & 2077 & 2652 & 384 &  \\
 & ReLU Deep & 1402 & 1914 & 293 &  & 1709 & 2332 & 357 &  \\
 & ReLU Wide & 1841 & 2308 & 334 &  & 2349 & 2944 & 426 &  \\
 & Softplus Balanced & 1560 & 2081 & 296 &  & 1908 & 2546 & 362 &  \\
 & Softplus Deep & 1311 & 1814 & 267 &  & 1566 & 2167 & 319 &  \\
 & Softplus Wide & 1703 & 2352 & 354 &  & 2174 & 3002 & 452 &  \\
Knitro & ReLU Balanced & 2003 & 2562 & 826 & 9 & 2312 & 2957 & 954 & 11 \\
 & ReLU Deep & 990 & 1141 & 276 & 7 & 1152 & 1327 & 321 & 9 \\
 & ReLU Wide & 1526 & 1948 & 625 & 10 & 1662 & 2120 & 680 & 11 \\
 & Softplus Balanced & 4738 & 4926 & 792 & 23 & 6530 & 6790 & 1091 & 32 \\
 & Softplus Deep & 2622 & 2302 & 321 & 25 & 3551 & 3117 & 435 & 34 \\
 & Softplus Wide & 7575 & 5669 & 712 & 37 & 13234 & 9903 & 1243 & 64 \\
L-BFGS & ReLU Balanced & 590 & 699 & 98 &  & 2154 & 2553 & 357 &  \\
 & ReLU Deep & 449 & 419 & 61 &  & 2226 & 2073 & 303 &  \\
 & ReLU Wide & 642 & 815 & 94 &  & 1885 & 2393 & 277 &  \\
 & Softplus Balanced & 1239 & 1528 & 193 &  & 3989 & 4920 & 622 &  \\
 & Softplus Deep & 1043 & 1332 & 159 &  & 2845 & 3634 & 433 &  \\
 & Softplus Wide & 1576 & 1809 & 222 &  & 5462 & 6271 & 769 &  \\
Trust & ReLU Balanced & 1491 & 2358 & 473 &  & 1929 & 3050 & 612 &  \\
 & ReLU Deep & 1490 & 1608 & 679 &  & 2322 & 2506 & 1058 &  \\
 & ReLU Wide & 2072 & 2506 & 495 &  & 2580 & 3120 & 617 &  \\
 & Softplus Balanced & 662 & 629 & 63 &  & 1226 & 1166 & 117 &  \\
 & Softplus Deep & 726 & 701 & 69 &  & 1248 & 1205 & 119 &  \\
 & Softplus Wide & 435 & 456 & 59 &  & 861 & 903 & 116 &  \\
\multicolumn{10}{l}{5400 State Model} \\
L-BFGS & Softplus Balanced & 486 & 497 & 307 &  & 6998 & 7160 & 4414 &  \\
Trust & Softplus Balanced & 468 & 456 & 104 &  & 4067 & 3959 & 906 &  \\
\hline
\end{tabular}
}%
}

\newcommand{\medTimeNestedOneSmall}{6}
\newcommand{\medTimeNestedTwentyOneSmall}{123}
\newcommand{\medTimeUfxpOneSmall}{0.009}
\newcommand{\medTimeUfxpTwentyOneSmall}{0.099}
\newcommand{\medTimeOufxpOneSmall}{0.016}
\newcommand{\medTimeOufxpTwentyOneSmall}{0.16}

\newcommand{\medTimeNestedOneLarge}{115}
\newcommand{\medTimeNestedTwentyOneLarge}{2416}
\newcommand{\medTimeUfxpOneLarge}{0.248}
\newcommand{\medTimeUfxpTwentyOneLarge}{0.369}
\newcommand{\medTimeOufxpOneLarge}{0.748}
\newcommand{\medTimeOufxpTwentyOneLarge}{0.962}

\newcommand{\tableOufxpSpeedupCombinedNew}{
\footnotesize
\renewcommand{\arraystretch}{1.0}
\setlength{\tabcolsep}{3.5pt}
\begin{tabular}{llrrrrrrrr}
\hline
 & & \multicolumn{4}{c}{R=1} & \multicolumn{4}{c}{R=21} \\
\cmidrule(lr){3-6} \cmidrule(lr){7-10}
 & & NFXP & CCP & SC & MPEC & NFXP & CCP & SC & MPEC \\ \hline
\multicolumn{10}{l}{540 State Model} \\
Adam & ReLU Balanced & 1157 & 1477 & 214 &  & 1707 & 2179 & 315 &  \\
 & ReLU Deep & 938 & 1280 & 196 &  & 1224 & 1670 & 256 &  \\
 & ReLU Wide & 1329 & 1665 & 241 &  & 1933 & 2423 & 351 &  \\
 & Softplus Balanced & 1035 & 1380 & 197 &  & 1392 & 1858 & 264 &  \\
 & Softplus Deep & 867 & 1200 & 177 &  & 1105 & 1529 & 225 &  \\
 & Softplus Wide & 1276 & 1761 & 265 &  & 1897 & 2620 & 394 &  \\
Knitro & ReLU Balanced & 591 & 755 & 244 & 3 & 654 & 836 & 270 & 3 \\
 & ReLU Deep & 466 & 537 & 130 & 4 & 535 & 616 & 149 & 4 \\
 & ReLU Wide & 681 & 869 & 279 & 4 & 734 & 937 & 300 & 5 \\
 & Softplus Balanced & 2794 & 2905 & 467 & 14 & 4255 & 4425 & 711 & 21 \\
 & Softplus Deep & 738 & 648 & 90 & 7 & 866 & 760 & 106 & 8 \\
 & Softplus Wide & 4474 & 3348 & 420 & 22 & 9142 & 6841 & 859 & 44 \\
L-BFGS & ReLU Balanced & 264 & 313 & 44 &  & 1303 & 1544 & 216 &  \\
 & ReLU Deep & 237 & 221 & 32 &  & 1318 & 1228 & 179 &  \\
 & ReLU Wide & 366 & 465 & 54 &  & 1491 & 1893 & 219 &  \\
 & Softplus Balanced & 667 & 822 & 104 &  & 3131 & 3862 & 489 &  \\
 & Softplus Deep & 605 & 773 & 92 &  & 2283 & 2917 & 348 &  \\
 & Softplus Wide & 866 & 994 & 122 &  & 4128 & 4740 & 582 &  \\
Trust & ReLU Balanced & 614 & 971 & 195 &  & 798 & 1261 & 253 &  \\
 & ReLU Deep & 404 & 436 & 184 &  & 499 & 538 & 227 &  \\
 & ReLU Wide & 729 & 881 & 174 &  & 846 & 1023 & 202 &  \\
 & Softplus Balanced & 353 & 336 & 34 &  & 737 & 700 & 70 &  \\
 & Softplus Deep & 402 & 388 & 38 &  & 747 & 722 & 71 &  \\
 & Softplus Wide & 231 & 243 & 31 &  & 495 & 519 & 67 &  \\
\multicolumn{10}{l}{5400 State Model} \\
L-BFGS & Softplus Balanced & 162 & 166 & 102 &  & 2891 & 2958 & 1824 &  \\
Trust & Softplus Balanced & 159 & 155 & 35 &  & 1912 & 1861 & 426 &  \\
\hline
\end{tabular}
}

\newcommand{\medSpanNestedSmall}{7642}
\newcommand{\medSpanOufxpSmall}{3}
\newcommand{\medSpanUfxpSmall}{1}
\newcommand{\medWorkloadNestedSmall}{25444}
\newcommand{\medWorkloadOufxpSmall}{282}
\newcommand{\medWorkloadUfxpSmall}{100}

\newcommand{\tableFpCountsHierarchical}{
\footnotesize
\renewcommand{\arraystretch}{1.0}
\setlength{\tabcolsep}{3.5pt}
\scalebox{0.95}{%
\begin{tabular}{llrrrrrrrrrr}
\hline
 & & \multicolumn{5}{c}{Workload} & \multicolumn{5}{c}{Span} \\
\cmidrule(lr){3-7} \cmidrule(lr){8-12}
 & & NFXP & CCP & SC & UFXP & OUFXP & NFXP & CCP & SC & UFXP & OUFXP \\ \hline
\multicolumn{12}{l}{540 State Model} \\
Adam & ReLU Balanced & 254031 & 25974 & 25921 & 100 & 286 & 254031 & 25974 & 25921 & 1 & 3 \\
 & ReLU Deep & 238470 & 24516 & 24453 & 100 & 270 & 238470 & 24516 & 24453 & 1 & 3 \\
 & ReLU Wide & 252468 & 25189 & 25292 & 100 & 282 & 252468 & 25189 & 25292 & 1 & 3 \\
 & Softplus Balanced & 260990 & 26209 & 26277 & 100 & 286 & 260990 & 26209 & 26277 & 1 & 3 \\
 & Softplus Deep & 257027 & 25967 & 26441 & 100 & 270 & 257027 & 25967 & 26441 & 1 & 3 \\
 & Softplus Wide & 250916 & 25728 & 25595 & 100 & 282 & 250916 & 25728 & 25595 & 1 & 3 \\
Knitro & ReLU Balanced & 53214 & 18527 & 72955 & 100 & 286 & 27387 & 1285 & 5561 & 1 & 3 \\
 & ReLU Deep & 30427 & 7894 & 17149 & 100 & 270 & 16464 & 553 & 1330 & 1 & 3 \\
 & ReLU Wide & 70940 & 30664 & 88138 & 100 & 282 & 38042 & 2239 & 6737 & 1 & 3 \\
 & Softplus Balanced & 219570 & 126818 & 222086 & 100 & 286 & 87850 & 6219 & 11122 & 1 & 3 \\
 & Softplus Deep & 118774 & 54905 & 68766 & 100 & 270 & 48223 & 2942 & 3666 & 1 & 3 \\
 & Softplus Wide & 300925 & 188398 & 207529 & 100 & 282 & 117483 & 9127 & 9639 & 1 & 3 \\
L-BFGS & ReLU Balanced & 29520 & 2998 & 3389 & 100 & 286 & 29520 & 2998 & 3389 & 1 & 3 \\
 & ReLU Deep & 19304 & 1624 & 1751 & 100 & 270 & 19304 & 1624 & 1751 & 1 & 3 \\
 & ReLU Wide & 37951 & 3861 & 3610 & 100 & 282 & 37951 & 3861 & 3610 & 1 & 3 \\
 & Softplus Balanced & 77709 & 7718 & 7566 & 100 & 286 & 77709 & 7718 & 7566 & 1 & 3 \\
 & Softplus Deep & 72577 & 7343 & 6521 & 100 & 270 & 72577 & 7343 & 6521 & 1 & 3 \\
 & Softplus Wide & 100997 & 9113 & 8649 & 100 & 282 & 100997 & 9113 & 8649 & 1 & 3 \\
Trust & ReLU Balanced & 20265 & 12349 & 21137 & 100 & 286 & 4562 & 423 & 715 & 1 & 3 \\
 & ReLU Deep & 10946 & 6922 & 13156 & 100 & 270 & 2613 & 257 & 490 & 1 & 3 \\
 & ReLU Wide & 24971 & 15795 & 27104 & 100 & 282 & 5587 & 547 & 926 & 1 & 3 \\
 & Softplus Balanced & 20247 & 17239 & 15527 & 100 & 286 & 4632 & 590 & 534 & 1 & 3 \\
 & Softplus Deep & 27822 & 18491 & 16760 & 100 & 270 & 6557 & 691 & 630 & 1 & 3 \\
 & Softplus Wide & 14407 & 13207 & 14265 & 100 & 282 & 3420 & 466 & 502 & 1 & 3 \\
\multicolumn{12}{l}{5400 State Model} \\
L-BFGS & Softplus Balanced & 4177 & 392 & 4410 & 100 & 286 & 4177 & 392 & 4410 & 1 & 3 \\
Trust & Softplus Balanced & 1985 & 2498 & 11082 & 100 & 286 & 465 & 88 & 384 & 1 & 3 \\
\hline
\end{tabular}
}%
}

\newcommand{\rSqCcpLarge}{97.4\%}
\newcommand{\rSqNfxpLarge}{95.4\%}
\newcommand{\rSqOufxpLarge}{97.4\%}
\newcommand{\rSqScLarge}{98.6\%}
\newcommand{\rSqUfxpLarge}{97.7\%}
\newcommand{\rSqCcpSmall}{98.4\%}
\newcommand{\rSqMpecSmall}{57.8\%}
\newcommand{\rSqNfxpSmall}{98.2\%}
\newcommand{\rSqOufxpSmall}{97\%}
\newcommand{\rSqScSmall}{98.9\%}
\newcommand{\rSqUfxpSmall}{97.3\%}

\newcommand{\badCcpSmall}{48\%}
\newcommand{\badNfxpSmall}{49\%}
\newcommand{\badOufxpSmall}{67\%}
\newcommand{\badScSmall}{50\%}
\newcommand{\badUfxpSmall}{66\%}

\newcommand{\mpecInfeasibleSmall}{7.2\%}

\newcommand{\exMinMaeSmall}{0.0609}
\newcommand{\exThreshSmall}{0.0913}
\newcommand{\inadeqCcpSmall}{80\%}
\newcommand{\rawCcpSmall}{0.80}
\newcommand{\startsCcpSmall}{21}
\newcommand{\inadeqNfxpSmall}{81\%}

\newcommand{\inadeqOufxpSmall}{86\%}

\newcommand{\inadeqScSmall}{81\%}

\newcommand{\inadeqUfxpSmall}{85\%}

\newcommand{\ufxpFpTimeLarge}{872}
\newcommand{\ufxpTimeSpBalLbfgsLarge}{20}

\newcommand{\nfxpTimeSpBalLbfgsLarge}{433586}
\newcommand{\ccpTimeSpBalLbfgsLarge}{443627}
\newcommand{\scTimeSpBalLbfgsLarge}{273470}

\newcommand{\timeoutCcpLarge}{60\%}
\newcommand{\timeoutNfxpLarge}{62\%}

\newcommand{\timeoutScLarge}{13\%}

\newcommand{\timeoutCcpSmall}{5.3\%}

\newcommand{\timeoutNfxpSmall}{3.6\%}

\newcommand{\timeoutScSmall}{0.4\%}

\newcommand{\workLbfgsSoftplusBalancedNFXPSmall}{77709}
\newcommand{\contrLbfgsSoftplusBalancedNFXPSmall}{21764}

\newcommand{\minWorkCcpSmall}{1624}
\newcommand{\maxWorkCcpSmall}{188398}
\newcommand{\minWorkNfxpSmall}{10946}
\newcommand{\maxWorkNfxpSmall}{300925}

\newcommand{\minMeanRatioCcpROne}{419}
\newcommand{\maxMeanRatioCcpROne}{5669}
\newcommand{\minMeanRatioCcpRTwentyOne}{903}
\newcommand{\maxMeanRatioCcpRTwentyOne}{9903}
\newcommand{\minMeanRatioMpecROne}{7}
\newcommand{\maxMeanRatioMpecROne}{37}
\newcommand{\minMeanRatioMpecRTwentyOne}{9}
\newcommand{\maxMeanRatioMpecRTwentyOne}{64}
\newcommand{\minMeanRatioNfxpROne}{435}
\newcommand{\maxMeanRatioNfxpROne}{7575}
\newcommand{\minMeanRatioNfxpRTwentyOne}{861}
\newcommand{\maxMeanRatioNfxpRTwentyOne}{13234}
\newcommand{\minMeanRatioScROne}{59}
\newcommand{\maxMeanRatioScROne}{826}
\newcommand{\minMeanRatioScRTwentyOne}{116}
\newcommand{\maxMeanRatioScRTwentyOne}{4414}

\newcommand{\detergentFpTimeMins}{208.4}
\newcommand{\detergentAvgTimeMins}{1.4}

\newcommand{\detergentProjRatio}{7.7}

\newcommand{\detergentNearBestCount}{383}
\newcommand{\detergentKappaMean}{6.79}
\newcommand{\detergentKappaSd}{0.003}
\newcommand{\detergentEtaMean}{0.41}
\newcommand{\detergentEtaSd}{0.008}

\begin{document}

\author{Ecenur O\u{g}uz}
\author{Robert L. Bray}
\affil{Kellogg School of Management, Northwestern University}

\title{Training Neural Networks Embedded in Dynamic Discrete Choice Models}
\maketitle

\begin{abstract}
	\singlespacing \noindent
	We develop the first general-purpose estimator for infinite-horizon dynamic discrete choice models whose estimation problem, after pre-computation, is unencumbered by large systems of linear equations—either imposed as constraints, or embedded in the objective function. Our unnested fixed point (UFXP) and optimal unnested fixed point (OUFXP) estimators exploit a dual representation of Bellman's equation to separate the utility parameters from the dynamic programming fixed point. We establish the consistency and asymptotic normality of UFXP and OUFXP, as well as the efficiency of the latter. Our estimators enable researchers to model utility functions non-parametrically via flexible neural-network approximations.
    
    \vspace{0.5cm}
    \noindent \textbf{Keywords:} Dynamic discrete choice, nested fixed point (NFXP), conditional choice probability (CCP), structural estimation, neural networks
\end{abstract}

\section{Introduction}

Dynamic programs are notoriously difficult to estimate. To reduce the computational burden, classic estimators leverage two properties commonly found in dynamic discrete choice models: likelihood concavity and utility function linearity. When the empirical likelihood is concave, the optimizer finds the (pseudo) MLE on the first try---it doesn't get led astray by local optima or saddle points. And when the utility function is linear, the estimation problem decouples into two much easier problems: one that solves a set of Bellman equations, and the other that maximizes the empirical likelihood, given the corresponding fixed point solutions. Further, utility function linearity almost always implies likelihood function concavity, so we generally enjoy both benefits when the utility function is affine.

Unfortunately, affine functions have an Achilles heel: dimensionality. Splines and polynomial basis functions yield excellent approximations---when there's a single state variable. However, the number of ``factors'' or linear basis functions required to achieve a good approximation grows \emph{exponentially} with the dimension of the state space. 

Neural networks mitigate the problem of dimensionality, as they can recognize patterns in high-dimensional spaces with a parsimonious set of parameters. Breathtaking advances in neural networks mean that we must no longer model organic human preferences with flat hyperplanes. Indeed, a growing literature demonstrates the efficacy of neural networks to model utility functions in discrete choice models.\footnote{This structural approach is distinct from earlier work that applies machine learning primarily as a predictive classifier in discrete choice settings (\citealp{VANCRANENBURGH2022100340}), or from approaches that use neural networks to map moments to parameters \citep{Wei25}.} \cite{HAN2022166} and \cite{SIFRINGER2020236} replace fixed coefficients with neural network layers, allowing preferences to vary flexibly with observed characteristics and capturing heterogeneity that linear specifications cannot accommodate. \cite{hsieh25} further formalize this approach by proposing neural utility representations that preserve economic structure while accommodating behavioral nonlinearities. \cite{aouad25} and \cite{ARKOUDI2023} extend these ideas to general random utility maximization frameworks and high-dimensional choice environments. However, all these recent works correspond to \emph{static models}. 

We bring neural networks to \emph{dynamic models} by introducing them to the celebrated dynamic discrete choice problem.\footnote{A notable recent utilization of neural networks in dynamic settings appears in \cite{kallus} for the estimation of choice probabilities and continuation values. In contrast, we embed the neural network directly into the economic model as a structural primitive to recover the underlying utility function.} This model was an instant classic, winning its creator, John Rust, the 1992 Ragnar Frisch Medal, and inspiring four survey articles: \citet{rust94}, \citet{Aguirregabiria2009}, \citet{Arcidiacono2011a}, and \cite{Kasahara2018}. We update this model by casting its utility function \emph{non-parametrically}: rather than a pre-specified functional form, we allow utilities to arise from a neural network of arbitrary complexity (we make no other changes besides this one substitution: e.g., the same Bellman equation governs the system dynamics).\footnote{
    This non-parametric reframing echoes \citesp{rust1987optimal}[1001] original paper, which cautioned against an over-reliance on ``closed-form'' specifications:
    \begin{quote}
        The second objective of this paper is to illustrate a new estimation method that allows me to compute maximum likelihood estimates of the primitive parameters of a class of controlled stochastic processes, even though there is no analytic formula for the associated likelihood function. ... I argue, however, that models with closed-form solutions have certain inherent limitations which make them poor candidates for empirical work.
    \end{quote}
}

Unfortunately, embedding a neural network into Rust's framework sacrifices the two key properties that facilitate estimation: concavity and linearity. Losing concavity means running the optimizer from many starting points, as few gradient ascent paths will end at the global optimum. And losing linearity means nesting the dynamic programming problem within the estimation problem, so that the difficulty of the first multiplies, rather than adds to, the difficulty of the second (contrary to popular belief, even CCP estimators suffer from nested fixed points, in which additional likelihood gradient ascents necessitate additional Bellman solutions: see Section \ref{subsec: classical_est}). Further, losing these two speed-ups is especially burdensome when training neural networks, given their high intrinsic training costs. Indeed, estimating a neural network utility function with traditional methods would require solving at least one Bellman equation for every neural network backpropagation.

We develop a new estimator that can efficiently train a neural network embedded in dynamic discrete choice problem. Rather than maximize a (pseudo) likelihood problem, as other estimators do, we minimize the norm of a random compression of the first-order conditions. Whereas the old maximum likelihood problem decouples from the underlying dynamic programming problem only when the utility function is linear, our new norm minimization problem decouples \emph{always}. Put differently, for the maximum likelihood problem, we must solve a Bellman equation every time we evaluate the objective (when the utility function is non-linear), but for the norm minimization problem, we must solve upfront slightly more Bellman equations than there are utility parameters (e.g., neural network weights), after which we can evaluate the objective with ease. This change shortens the time between neural network backpropagations from hours to seconds (see Section \ref{s:numericalSims}). Our estimator thus frees us from a reliance on linearity.

Our estimator also frees us from a reliance on concavity (or convexity, in the norm minimization case) by enabling rapid and inexpensive exploration of the optimization landscape---precisely what is required when optimizing a non-concave objective, where exhaustive search becomes unavoidable. The curse of dimensionality ensures that almost all general purpose dynamic discrete choice estimators spend almost all of their time solving Bellman-style fixed points.\footnote{
    We write `general purpose' estimators to exclude the special-purpose estimators designed for the rare cases in which the econometrician can circumvent Bellman's equation. See Section \ref{s:numericalSims}.
}
Our estimator is no exception: the norm minimization time is negligible relative to the fixed point solution time. And since we can recycle our fixed point solutions across optimization runs, we can thus blanket the objective function with many inexpensive optimization runs. For example, initializing our optimization routine with 1000 different starting values scales the estimation time by only a factor of \detergentProjRatio\ (see Section \ref{s:empRes}). In contrast, initializing a traditional estimator with 1000 different starting values would take 1000 times as long. 

Further, our Bellman solutions are \emph{independent of the neural network configuration}, which means that estimating many versions of the model is essentially no more costly than estimating a single version. Accordingly, our estimators enable the econometrician to iteratively craft the neural network to the data---assigning the most flexibility where the sample is the richest---rather than blindly specifying its configuration upfront. 

\section{Model}
\subsection{Discrete Choice Preliminaries}\label{subsec: ddnotdynamic}

We begin with some basic discrete choice theory. Consider an agent that chooses one action from a set of actions $\mathbb{A}\equiv\{1,\dots,A\}$. The utility of choosing action $a \in \mathbb{A}$ is $e_a + c_a$, where $c \in \mathbb{R}^{A}$ is a vector of deterministic values and $e \in \mathbb{R}^{A}$ is a vector of stochastic shocks that resolves from continuous density function $g$, which has full support over $\mathbb{R}^A$. McFadden's social surplus function denotes the expected utility of the top choice, unconditional on the random component:
\begin{align*}
	\nu(c) &\equiv \int_{e \in \mathbb{R}^{A}} \max_{a \in \mathbb{A}} (e_a + c_a) g(e)de .
\end{align*}
The Williams-Daly-Zachary theorem establishes that the gradient of the social surplus function yields the corresponding choice probabilities:
\begin{align*}
	\nabla \nu(c) & = p(c), \\
	\text{where} \quad p_{a}(c) & \equiv \Pr\big(a = \argmax_{j \in \mathbb{A}} e_j + c_j\big).
\end{align*}
Choice probability vector $p(c)$ resides in $(A-1)$-simplex $\mathbb{P}$.

Next, define $I^{+}$ as the matrix that maps $(v_{1}, \dots, v_{A-1})'$ to $(v_{1}, \dots, v_{A-1}, 0)'$, and define $\Delta$ as the matrix that maps $(v_{1}, \dots, v_{A})'$ to $(v_{1} - v_{A}, \dots, v_{A-1} - v_{A})'$. Composing these two matrices yields a normalizing operator: matrix $I^{+}\Delta$ maps $(v_{1}, \dots, v_{A})'$ to $(v_{1} - v_{A}, \dots, v_{A} - v_{A})'$. Note that normalized value vector $I^{+}\Delta c$ yields the same choice probabilities as initial value vector $c$:
\begin{align*}
	p_{a}(I^{+}\Delta c) & = \Pr\big(a = \argmax_{j \in \mathbb{A}} e_j + c_j - c_{A}\big) \\
	& = \Pr\big(a = \argmax_{j \in \mathbb{A}} e_j + c_j\big)\\
	& = p_{a}(c).
\end{align*}
Hence, we have
\begin{align*}
	\nabla \nu(I^{+} \Delta c) = \nabla \nu(c).
\end{align*}
Next, for $v\in \mathbb{R}^{A-1}$ define
\begin{align*}
	\rho(v) & \equiv \nabla \nu(I^{+}v).
\end{align*}
This function maps differenced value vector $\Delta c \in \mathbb{R}^{A-1}$ to choice probability vector $p(c) \in \mathbb{P}$:
\begin{align*}
	\rho(\Delta c) &= \nabla \nu(I^{+} \Delta c)
	= \nabla \nu(c) \nonumber
	= p(c). \nonumber
\end{align*}
In fact, this mapping is invertible \citep{Hotz1993}, so there exists $\rho^{-1}$ that satisfies
\begin{align*}
	\rho^{-1}(p(c)) = \rho^{-1}(\rho(\Delta c)) = \Delta c. \nonumber
\end{align*}

The following result establishes that the random discrete problem admits an equivalent deterministic continuous problem.
\begin{proposition}\label{prop:defineEpsilon}
	For any $c \in \mathbb{R}^A$, the social surplus satisfies
	\begin{align*}
		\nu(c) & = \sup_{p \in \mathbb{P}} \epsilon(p) + p'c, \\
		\text{where} \quad \epsilon(p) & \equiv \nu(I^{+}\rho^{-1}(p)) - p'I^{+}\rho^{-1}(p)\\
		& = \int_{e \in \mathbb{R}^{A}} \Big(\sum_{a \in \mathbb{A}}\indicator{a = \argmax_{j \in \mathbb{A}} e_j + (I^{+}\rho^{-1})_{j}(p)} e_a\Big)g(e)de.
	\end{align*}
	Moreover, this optimization problem's maximizer,
	\begin{align*}
		p^{*} \equiv \argmax_{p \in \mathbb{P}} \epsilon(p) + p'c,
	\end{align*}
	satisfies the following first-order conditions
	\begin{align*}
		\rho^{-1}(p^{*}) = \Delta c.
	\end{align*}
\end{proposition}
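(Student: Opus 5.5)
The proof rests on convex duality: $\epsilon$ is the negative Legendre conjugate of $\nu$ restricted to the simplex. Throughout we use two earlier facts: the Williams--Daly--Zachary identity $\nabla\nu = p$, and the invertibility of $\rho$ on the interior of $\mathbb{P}$.

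\textbf{Step 1: the two expressions for $\epsilon$ agree.} Fix $p$ in the interior of $\mathbb{P}$ and set $c^{p}\equiv I^{+}\rho^{-1}(p)$, so that $p(c^{p})=\rho(\rho^{-1}(p))=p$. Then
\[
\nu(c^{p})=\int \sum_{a}\indicator{a=\argmax_j e_j+c^{p}_j}(e_a+c^{p}_a)\,g(e)\,de .
\]
Splitting the integrand, the $c^{p}_a$ part integrates to $\sum_a p_a(c^{p})c^{p}_a=p'c^{p}$. What remains is the stated integral of $e_a$, which gives the second expression for $\epsilon(p)$.

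\textbf{Step 2: $\nu(c)\ge \epsilon(p)+p'c$ for every $p$.} Fix $p$ and write $a^{*}(e)$ for the maximizer under $c^{p}$. Then
\[
\epsilon(p)+p'c=\int\big(e_{a^{*}(e)}+c_{a^{*}(e)}\big)g(e)\,de \le \int \max_a (e_a+c_a)\,g(e)\,de=\nu(c),
\]
where the equality uses $\Pr(a^{*}(e)=a)=p_a$. In words, $\epsilon(p)+p'c$ is the payoff of the (possibly suboptimal) decision rule that is optimal under $c^{p}$ but evaluated at the true values $c$.

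\textbf{Step 3: equality and the maximizer.} Take $p=p(c)$. Invertibility gives $\rho^{-1}(p(c))=\Delta c$, so $c^{p}=I^{+}\Delta c$. This shifts every component of $c$ by $-c_A$, so the argmax rule under $c^{p}$ coincides with that under $c$, and the inequality in Step 2 holds with equality. Hence the supremum is attained at $p^{*}=p(c)$, which gives $\rho^{-1}(p^{*})=\Delta c$. Uniqueness follows because full support of $g$ makes the inequality strict whenever $\Delta c^{p}\ne\Delta c$: the two decision rules then differ on a set of positive $g$-measure, and there the chosen action is strictly suboptimal.

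\textbf{Step 4: the first-order condition directly.} We also check the condition via calculus. Parametrize $p=\rho(v)$ with $v\in\mathbb{R}^{A-1}$. Using the first expression for $\epsilon$ together with $\nabla\nu=p$, the envelope-type computation gives
\[
\frac{\partial}{\partial v}\big[\nu(I^{+}v)-\rho(v)'I^{+}v+\rho(v)'c\big]=J_\rho(v)'\big(c-I^{+}v\big).
\]
Since $\mathbf 1'\rho\equiv 1$, the differentials $J_\rho(v)$ sum to zero across the $A$ components, i.e. $\mathbf 1' J_\rho(v)=0$. Combined with the invertibility of $\rho$, this should make $J_\rho(v)'(c-I^{+}v)=0$ equivalent to $\Delta c=v$.

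\textbf{Main obstacle: the boundary of the simplex.} $\rho^{-1}$ exists only on the interior of $\mathbb{P}$, because full support makes every choice probability strictly positive. So $\epsilon$ must be read as defined on the interior (or extended by lower semicontinuity), and the supremum taken accordingly. Step 2's argument only needs $p$ in the interior, and the maximizer $p(c)$ is itself interior. The supremum therefore needs no boundary analysis, provided boundary points are excluded or assigned the limit value.
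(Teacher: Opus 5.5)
Your proposal is correct and follows the paper's proof essentially step for step. The paper likewise decomposes $\nu$ via conditional expectations to obtain the integral form of $\epsilon$, and bounds $\epsilon(p)+p'c$ by $\nu(c)$ as the payoff of the rule that is optimal under $I^{+}\rho^{-1}(p)$ but evaluated at $c$. It gets equality exactly when $\Delta v=\Delta c$, hence $p^{*}=\rho(\Delta c)$; your full-support argument for strictness is more careful than the paper's one-line claim.

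Your Step 4 is redundant, since Step 3 already gives $\rho^{-1}(p^{*})=\Delta c$; the paper instead uses a Lagrangian in $p$, where the $\rho^{-1}$-Jacobian terms cancel outright. As written, Step 4 would also need $J_\rho(v)$ to have rank $A-1$, and bijectivity of $\rho$ alone does not supply that.
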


For a concrete example, suppose the elements of $e$ are independent standard Gumbels. In this case, $\nu$ and $\epsilon$ respectively coincide with the log-sum-exp function and the information entropy function (both increased by Euler--Mascheroni constant $\gamma$), and $\rho$ and $\rho^{-1}$ respectively coincide with the softmax function and the log-odds function:
\begin{align*}
	\nu(c) & = \gamma + \log\Big(\sum_{a \in \mathbb{A}}\exp(c_{a})\Big), \\
	\epsilon(p) & = \gamma - \sum_{a \in \mathbb{A}} p_{a}\log(p_{a}), \\
	\rho_{a}(\Delta c) & = \frac{\exp(c_{a})}{\sum_{j \in \mathbb{A}} \exp(c_{j})} ,\\
	\text{and} \quad \rho_{a}^{-1}(p) & = \log(p_{a}) - \log(p_{A}),\\
	\text{when} \quad e_a &\stackrel{\text{i.i.d.}}{\sim} \text{EV1}(0,1).
\end{align*}
It is straightforward to verify that Proposition \ref{prop:defineEpsilon} holds in this case.

\subsection{Dynamic Program}\label{subsec: ddc}

We are now ready to define the dynamic discrete choice problem. We observe an agent confronted with a discrete-time, infinite-horizon Markov decision process. In each period, the agent chooses action variable $a \in \mathbb{A}\equiv\{1,\dots,A\}$ in response to state variables $e$ and $x$. The agent observes $e \in \mathbb{R}^{A}$ and $x \in \mathbb{X} \equiv \{1,\dots,X\}$, but we do not observe the former, which will serve as our statistical error term. Variable $e$ is a multivariate random vector with continuous density function $g$, which has full support over $\mathbb{R}^A$. Taking action $a$ in state $(e, x)$ sets the probability density of the next-period state, $(e', x')$, to $f_a(x'|x)g(e')$, and yields contemporaneous utility $e_a + u_a^{\theta}(x)$. Our objective is to estimate utility parameter $\theta \in \Theta \subset \mathbb{R}^{t}$, which will ultimately represent the weights of a neural network. In each period, the agent chooses the action that yields the largest expected discounted utility, under discount factor $\beta \in [0,1)$. 

We could frame $\beta$ as a parameter to estimate, but doing so slightly complicates our procedure (see Appendix Section \ref{subsec: df}), so we will follow the usual practice, and take it as given. We similarly omit unobserved heterogeneity from our baseline model; however, accommodating it is straightforward as our estimators are compatible with standard two-stage procedures (see Appendix Section \ref{subsec: uh}). Also, transition kernel $f$ will almost certainly be parameterized, but these parameters are trivial to pre-estimate, so we take the transition probabilities as given, without loss of generality. 

The following Bellman equation specifies the agent's value function, which quantifies their expected discounted utility:
\begin{align}
	\bar{v}^{\theta}(e, x) &\equiv \max_{a \in \mathbb{A}} e_{a} + u_a^{\theta}(x) + \beta \sum_{x' \in \mathbb{X}} f_a(x'|x) \int_{e' \in \mathbb{R}^{A}} \bar{v}^{\theta}(e', x') g(e') de'.\label{eq:bellman1}
\end{align}

Our dynamic program has state space $\mathbb{R}^A \times \mathbb{X}$ and action space $\mathbb{A}$. But infinite state spaces are tricky, since they require interpolation, and discrete action spaces are a nuisance, since they do not yield first-order conditions, as \citet[p. 1000-1]{rust} explains:
\begin{quote}
	Unfortunately, since the controlled process $\{i_{t}, x_{t}\}$ is the solution to a discrete stochastic control problem, one will rarely find a closed-form solution for its probability density or any sort of ``first order condition'' convenient for estimation. 
\end{quote}
Fortunately, Proposition \ref{prop:defineEpsilon} enables us to translate our finite-action-space, infinite-state-space dynamic program to an analogous infinite-action-space, finite-state-space dynamic program. This transformation will yield first-order conditions, and will allow us to disregard the unobserved state variable. We drop this error term by integrating over it, defining a new value function that characterizes the agent's expected discounted utility, conditional on $x$, but not on $e$:
\begin{align}
	v^{\theta}(x) & \equiv \int_{e \in \mathbb{R}^{A}} \bar{v}^{\theta}(e, x) g(e)de \nonumber\\
    & = \int_{e \in \mathbb{R}^{A}} \Big(\max_{a \in \mathbb{A}} e_{a} + u_a^{\theta}(x) + \beta \sum_{x' \in \mathbb{X}} f_a(x'|x) \int_{e' \in \mathbb{R}^{A}} \bar{v}^{\theta}(e', x') g(e')de'\Big) g(e)de \nonumber\\
    & = \int_{e \in \mathbb{R}^{A}} \Big(\max_{a \in \mathbb{A}} e_{a} + u_a^{\theta}(x) + \beta \sum_{x' \in \mathbb{X}} f_a(x'|x) v^{\theta}(x')\Big) g(e)de \nonumber\\
    & = \nu\Big(u^{\theta}(x) + \beta \sum_{x' \in \mathbb{X}} f(x'|x)v^{\theta}(x')\Big) \nonumber\\
	& = \sup_{p \in \mathbb{P}} \epsilon(p) + p'u^{\theta}(x) + \beta \sum_{x' \in \mathbb{X}} p' f(x'|x) v^{\theta}(x') \label{eq:bellman2},
\end{align}
where $u^{\theta}(x)$ and $f(x'|x)$ are length-$A$ vectors with $a$th elements $u_{a}^{\theta}(x)$ and $f_{a}(x'|x)$. 

Whereas Bellman equation \eqref{eq:bellman1} characterizes a dynamic program with state $(e, x) \in \mathbb{R}^{A}\times \mathbb{X}$ and action $a \in \mathbb{A}$, Bellman equation \eqref{eq:bellman2} characterizes a dynamic program with state $x \in \mathbb{X}$ and action $p \in \mathbb{P}$. Thus, rather than elements of $\mathbb{A}$, we can equivalently suppose that the agent chooses \emph{conditional choice probability} (CCP) measures over $\mathbb{A}$. Under this framing, the policy functions that govern the agent's behavior are mappings from $\mathbb{X}$ to $\mathbb{P}$. We will use $\mathbb{X}^{\mathbb{P}}$ to denote the set of such mappings. Further, we will use $p$ to denote a generic element of $\mathbb{P}$ (i.e., a single CCP vector), and use $P$ to denote a generic element of $\mathbb{X}^{\mathbb{P}}$ (i.e., a collection of CCP vectors indexed by $\mathbb{X}$).

Adhering to the convention of using capital letters for objects that span over $\mathbb{X}$, define $E(P)$, $P_{a}$, $U_{a}^{\theta}$, and $V^{\theta}$ as the length-$X$ vectors with $x$th elements $\epsilon(P(x))$, $P_{a}(x)$, $u_{a}^{\theta}(x)$, and $v^{\theta}(x)$, and define $F_{a}$ as the $X\times X$ matrix with $xx'$th element $f_{a}(x'|x)$. Next, define $U_{P}^{\theta} \equiv E(P) + \sum_{a \in \mathbb{A}} \text{diag}(P_{a}) U_{a}^{\theta}$ and $F_{P} \equiv \sum_{a \in \mathbb{A}} \text{diag}(P_{a}) F_a$ as the utility function and transition kernel associated with policy $P$. And, finally, define $C_{V}^{\theta}$ as the $X \times A$ matrix with $a$th column $U_{a}^{\theta} + \beta F_{a}V$. Each element of $C_{V^{\theta}}^{\theta}$ is a \emph{choice specific value} (CSV)---i.e., an expected discounted utility, net of the current error term. 

Proposition \ref{prop:defineEpsilon} establishes that these CSVs underlie the first-order conditions associated with Bellman equation \eqref{eq:bellman2}:
\begin{align}
	\rho^{-1}(P^{\theta}) & = C_{V^{\theta}}^{\theta} \Delta'. \label{eq:focForTheDP}
\end{align} 
The expression above relies on the convention that $\rho^{-1}$ and $\rho$ are evaluated row-wise when applied to a matrix. For example, the $x$th row of system \eqref{eq:focForTheDP} specifies that
\begin{align*}
	\rho^{-1}(P^{\theta}(x)) & = \Delta u^{\theta}(x) + \beta \sum_{x' \in \mathbb{X}} \Delta f(x'|x) v^{\theta}(x'),
\end{align*} 
and the $a$th column specifies that
\begin{align*}
	\rho_{a}^{-1}(P^{\theta}) & = U_{a}^{\theta} - U_{A}^{\theta} + \beta(F_{a} - F_{A})V^{\theta}. 
\end{align*} 
System \eqref{eq:focForTheDP} implies the following optimal policy:
\begin{align}
	P^{\theta} & \equiv P_{V_{\theta}}^{\theta} \label{eq:explicitOptimalPolicy}\\
    \text{where} \quad P_{V}^{\theta} & \equiv \rho(C_{V}^{\theta} \Delta').  \nonumber
\end{align} 
In the expression above, $P_{V}^{\theta} \in \mathbb{X}^{\mathbb{P}}$ specifies the optimal policy to follow today when $V \in \mathbb{R}^{X}$ specifies the value tomorrow.  
And whereas $P_{V}^{\theta}$ maps value functions to policy functions, $V_{P}^{\theta}$ maps policy functions to value functions:
\begin{align}
	V_{P}^{\theta} & \equiv U_{P}^{\theta} + \beta F_{P} V_{P}^{\theta}\label{eq:bellman4}\\
	&= (I - \beta F_{P})^{-1} U_{P}^{\theta}. \nonumber
\end{align}
Specifically, $V_{P}^{\theta}$ characterizes the value of following policy $P$ forever. Composing ${P}_{V}^{\theta}$ and $V_{P}^{\theta}$ yields the policy iteration operator:
\begin{align*}
    P_{P}^{\theta} \equiv P_{V_{P}^{\theta}}^{\theta}.   
\end{align*}
The expression above yields the optimal policy for the agent to follow today, given that they will follow policy $P$ forever thereafter. It underlies the policy iteration algorithm, which is second only to the celebrated value iteration algorithm, which iteratively applies the following operator:
\begin{align*}
    T_{V}^{\theta} & \equiv U_{P_{V}^{\theta}}^{\theta} + \beta F_{P_{V}^{\theta}} V.
\end{align*} 
If $V$ characterizes tomorrow's value, then $T_{V}^{\theta}$ characterizes today's value. Bellman equation \eqref{eq:bellman2} establishes that $V^{\theta}$ is the unique solution to the following fixed point equation:
\begin{align}
	V & = T_{V}^{\theta} \label{eq:bellman3}.
\end{align}

\section{Estimation}\label{sec: allEstimators}

\subsection{Nested Estimators}\label{subsec: classical_est}

\citet{rust} devised the first dynamic discrete choice estimator. His nested fixed point (NFXP) algorithm maximizes the empirical likelihood of the observed actions:
\begin{align*}
	\hat{\theta}_{\text{NFXP}} & \equiv \argmax_{\theta \in \Theta}  \mathcal{L}_{\text{NFXP}}^{\theta} \nonumber\\
    \text{where} \quad \mathcal{L}_{\text{NFXP}}^{\theta} & \equiv \sum_{a \in \mathbb{A}} n_{a}' \log(P_a^{\theta})
\end{align*}
and $n_{a}$ is a length-$X$ vector whose $x$th element specifies the number of action-$a$, state-$x$ observations in our sample. Since this log likelihood depends on the optimal policy, which depends on the solution of \eqref{eq:bellman3}, Rust's algorithm solves a Bellman equation every time it evaluates the empirical likelihood function. In other words, it nests a dynamic programming fixed point problem within a maximum likelihood problem.

However, \eqref{eq:bellman3} is not the only nested fixed point equation that bedevils the NFXP algorithm, as the gradient of the likelihood function depends on the following $t$ fixed points:\footnote{
    When calculating this derivative, it is useful to note that $\tfrac{\partial}{\partial \theta}V^{\theta} = \tfrac{\partial}{\partial \theta}V_{P}^{\theta} \big|_{P = P^{\theta}}$, which follows from the envelope theorem.
}
\begin{align}
    \tfrac{\partial}{\partial \theta_{i}} V^{\theta} & = \sum_{a \in \mathbb{A}} \text{diag}(P_{a}^{\theta}) \tfrac{\partial}{\partial \theta_{i}}U_{a}^{\theta} + \beta F_{P^{\theta}} \tfrac{\partial}{\partial \theta_{i}} V^{\theta} . \label{eq:tfixedPoints}
\end{align}
Specifically, 
\begin{align}
	\tfrac{\partial}{\partial \theta} \mathcal{L}_{\text{NFXP}}^{\theta} & = \sum_{a \in \mathbb{A}} n_{a}' \text{diag}(P_a^{\theta})^{-1} \tfrac{\partial}{\partial \theta} P_a^{\theta}, \nonumber\\
    \text{where} \quad \tfrac{\partial}{\partial \theta} P_a^{\theta} & = \sum_{j \in \mathbb{A} \setminus \{A\}} \big(\tfrac{\partial}{\partial j} \rho_{a}(C_{V^{\theta}}^{\theta}\Delta')\big)  \big(\tfrac{\partial}{\partial \theta} U_{j}^{\theta} - \tfrac{\partial}{\partial \theta} U_{A}^{\theta} + \beta (F_{j} - F_{A}) \tfrac{\partial}{\partial \theta} V^{\theta}\big) \nonumber
\end{align}
and $\tfrac{\partial}{\partial j} \rho_{a}(M)$ denotes the $X \times X$ Jacobian of $\rho_{a}(M)$ with respect to the $j$th column of matrix $M$. Further, the Hessian matrix of the likelihood function depends on the $t(t+1)/2$ distinct fixed points defined in the following:
\begin{align}
    \tfrac{\partial^{2}}{\partial \theta_{i} \partial \theta_{j}}V^{\theta} & = 
    \sum_{a \in \mathbb{A}} \Big(\text{diag}(P_{a}^{\theta}) \tfrac{\partial^{2}}{\partial \theta_{i} \partial \theta_{j}}U_{a}^{\theta} +  \text{diag}(\tfrac{\partial}{\partial \theta_{j}}P_{a}^{\theta}) \tfrac{\partial}{\partial \theta_{i}}U_{a}^{\theta} + \beta \text{diag}(\tfrac{\partial}{\partial \theta_{j}}P_{a}^{\theta}) F_{a} \tfrac{\partial}{\partial \theta_{i}} V^{\theta}\Big) \nonumber\\
	& \qquad + \beta F_{P^{\theta}} \tfrac{\partial^{2}}{\partial \theta_{i} \partial \theta_{j}}V^{\theta}.\label{eq:manyFixedPoints}
\end{align}
Specifically, 
\begin{align}
	\tfrac{\partial^{2}}{\partial \theta_{i}\partial \theta_{j}} \mathcal{L}_{\text{NFXP}}^{\theta} & = \sum_{a \in \mathbb{A}} n_{a}' \text{diag}(P_a^{\theta})^{-1} \tfrac{\partial^{2}}{\partial \theta_{i}\partial \theta_{j}} P_a^{\theta} 
	- n_{a}' \text{diag}(P_a^{\theta})^{-2} \text{diag}(\tfrac{\partial}{\partial \theta_{j}}P_{a}^{\theta}) \tfrac{\partial}{\partial \theta_{i}} P_a^{\theta},\nonumber \\
	\text{where} \quad 
	\tfrac{\partial^{2}}{\partial \theta_{i}\partial \theta_{j}} P_a^{\theta} 
	& = 
	\sum_{\ell \in \mathbb{A} \setminus \{A\}} \sum_{k \in \mathbb{A} \setminus \{A\}} 
	\big(\tfrac{\partial^{2}}{\partial \ell\partial k} \rho_{a}(C_{V^{\theta}}^{\theta}\Delta')\big) 
	\text{diag}\big(\tfrac{\partial}{\partial \theta_{j}} U_{\ell}^{\theta} - \tfrac{\partial}{\partial \theta_{j}} U_{A}^{\theta} + \beta (F_{\ell} - F_{A}) \tfrac{\partial}{\partial \theta_{j}} V^{\theta}\big) \nonumber\\
	& \qquad
	\big(\tfrac{\partial}{\partial \theta_{i}} U_{k}^{\theta} - \tfrac{\partial}{\partial \theta_{i}} U_{A}^{\theta} + \beta (F_{k} - F_{A}) \tfrac{\partial}{\partial \theta_{i}} V^{\theta}\big) \nonumber\\
	& \quad + \sum_{k \in \mathbb{A} \setminus \{A\}} 
	\big(\tfrac{\partial}{\partial k} \rho_{a}(C_{V^{\theta}}^{\theta}\Delta')\big)  
	\big(\tfrac{\partial^{2}}{\partial \theta_{i} \partial \theta_{j}} U_{k}^{\theta} - \tfrac{\partial^{2}}{\partial \theta_{i} \partial \theta_{j}} U_{A}^{\theta} + \beta (F_{k} - F_{A}) \tfrac{\partial^{2}}{\partial \theta_{i} \partial \theta_{j}} V^{\theta}\big). \nonumber
\end{align}
Accordingly, applying a Newton step to the NFXP likelihood function has traditionally required solving the $t + t(t+1)/2$ fixed points defined in \eqref{eq:tfixedPoints} and \eqref{eq:manyFixedPoints}. \citets{p. 21}{nfxp_manual} algorithm avoids such Newton steps, since 
\begin{quote}
    Given that the likelihood function $L(\theta) = L(\theta, EV_\theta)$ depends on the expected value function $EV_\theta$ (see formula (2.13) in chapter 2), it is easy to see that use of Newton's method to determine the direction vector in (3.7) would require computing the second derivative of the expected value function, $\partial^2 EV_\theta / \partial \theta \partial \theta'$, which is quite difficult to compute.
\end{quote}
However, Rust's algorithm does solve the $t$ fixed points in \eqref{eq:tfixedPoints} for each gradient ascent step. 

To avoid solving a full dynamic program for each likelihood evaluation, \citet[p. 498]{Hotz1993} developed the conditional choice probability (CCP) estimator, which ``does not require econometricians to explicitly solve the valuation functions used to characterize optimal decision rules via backwards recursion methods." There are now many CCP estimators, but essentially all of them solve some version of the following problem:
\footnote{
    Although often expressed in slightly different forms, equation \eqref{eq:bellman4} is a core component of all general-purpose, infinite-horizon CCP estimators. For example, it corresponds to \cites{Aguirregabiria2002a} equation (8), \cites{Aguirregabiria2007} equation (14),  \cites{Pakes2007a} equation (9), 
    \cites{Pesendorfer2008} equation (6), \cites{Bajari2007} first equation on page 1343, \cites{Hotz1994} equation (4.2), and to the infinite-horizon analog of \cites{Hotz1993} equation (3.12). \cite{Aguirregabiria2002a, Aguirregabiria2007}, \cite{Pakes2007a} and \cite{Pesendorfer2008} re-express fixed point \eqref{eq:bellman4} in terms of $(I - \beta F_{\hat{P}})^{-1}$. But for large problems, it would generally be faster to repeatedly solve \eqref{eq:bellman4} than explicitly compute $(I - \beta F_{\hat{P}})^{-1}$. \cite{Bajari2007} and \cite{Hotz1994} use forward simulation to approximate the equivalent sequential representation, $V_P^\theta = \sum_{t=0}^\infty \beta^t F_P^t U_P^\theta$. Finally, \cite{Hotz1993} works with the finite-horizon analog of \eqref{eq:bellman4}. To see this with their notation, note that
    \begin{align}
        v_j(H_t) &= \sum_{s = t+1}^{T} \sum_{H_s \in \mathscr{A}_s(H_t)}U(p_s, H_s) F_j(H_{t+1}|H_t) \prod_{r = t+1}^{s-1}\Big(\sum_{k = 1}^J p_{rk}F_k(H_{r+1}|H_r)\Big) \tag{3.12}\\
        &= \sum_{H_{t+1} \in \mathscr{A}_{t+1}(H_t)} F_j(H_{t+1}|H_t) \Big[U(p_{t+1}, H_{t+1}) + \sum_{k = 1}^J p_{(t+1)k} v_k(H_{t+1}) \Big]. \notag
    \end{align}
    Denote the expected continuation value given $H_t$ with $V(H_t) \equiv \sum_{k = 1}^J p_{tk} v_k(H_{t})$, and the policy induced transition with $F_{pt}(H_{t+1}|H_t) \equiv \sum_{j=1}^{J} p_{tj} F_j(H_{t+1}|H_t)$. Then
    \begin{align*}
        V(H_t) = \sum_{H_{t+1} \in \mathscr{A}_{t+1}(H_t)} F_{pt}(H_{t+1}|H_t) \Big[U(p_{t+1}, H_{t+1}) + V(H_{t+1}) \Big].
    \end{align*}
    Index feasible histories at time $t$ as as $\{H^1_t, \dots, H^{n_t}_t\}$. Let $V_t \in \mathbb{R}^{n_t}$ have $i$th entry $V(H^i_t)$, and $U_{t+1} \in \mathbb{R}^{n_{t+1}}$ have $j$th entry $U(p_{t+1}, H^j_{t+1})$. Define $F_{pt}$ as the $n_t \times n_{t+1}$ matrix with $ij$th entry $F_{pt}(H^j_{t+1}|H^i_t)$. Then $V_t = F_{pt}U_{t+1} + F_{pt}V_{t+1}$. In a stationary environment with generic states, including geometric discounting $\beta \in [0,1)$, letting $T \to \infty$ gives the infinite–horizon analog: $V = \beta F_p U + \beta F_p V$.
}
\begin{align*}
	\hat{\theta}_{\text{CCP}} &\equiv \argmax_{\theta \in \Theta} \mathcal{L}_{\text{CCP}}^{\theta} \nonumber\\
    \text{where} \quad \mathcal{L}_{\text{CCP}}^{\theta} & \equiv \sum_{a \in \mathbb{A}} n_{a}'\log(P_{\hat{P}a}^{\theta}). 
\end{align*}
It is generally understood that CCP estimators do not suffer from nested fixed points. This belief is incorrect. For example, evaluating the CCP likelihood requires evaluating $P_{\hat{P}}^{\theta}$, which in turn requires solving fixed point equation \eqref{eq:bellman4}, with $P = \hat{P}$. However, CCP fixed point \eqref{eq:bellman4} is simpler than NFXP fixed point \eqref{eq:bellman3}: the former solves the value function that corresponds to policy $\hat{P}$, which is pre-estimated, whereas the latter solves the value function that corresponds to policy $P^{\theta}$, which is computed. Put differently, evaluating the CCP likelihood function requires implementing a single policy-iteration step, whereas solving the NFXP likelihood requires iterating policy-iteration steps to convergence.

The CCP likelihood function also has similar gradient and Hessian nested fixed points. Specifically, $\tfrac{\partial}{\partial \theta} \mathcal{L}_{\text{CCP}}^{\theta}$ and $\tfrac{\partial^{2}}{\partial \theta_{i}\partial \theta_{j}} \mathcal{L}_{\text{CCP}}^{\theta}$ depend on the solutions to 
\begin{align}
    \tfrac{\partial}{\partial \theta_{i}} V_{\hat{P}}^{\theta} & = \sum_{a \in \mathbb{A}} \text{diag}(\hat{P}_{a}) \tfrac{\partial}{\partial \theta_{i}}U_{a}^{\theta} + \beta F_{\hat{P}} \tfrac{\partial}{\partial \theta_{i}} V_{\hat{P}}^{\theta} \label{eq:tfixedPointsCCP} \\
    \text{and} \quad \tfrac{\partial^{2}}{\partial \theta_{i} \partial \theta_{j}}V_{\hat{P}}^{\theta} & = \sum_{a \in \mathbb{A}} \Big(\text{diag}(\hat{P}_{a}) \tfrac{\partial^{2}}{\partial \theta_{i} \partial \theta_{j}}U_{a}^{\theta}\Big) + \beta F_{\hat{P}} \tfrac{\partial^{2}}{\partial \theta_{i} \partial \theta_{j}}V_{\hat{P}}^{\theta}\label{eq:manyFixedPointsCCP},
\end{align}
as $\tfrac{\partial}{\partial \theta} \mathcal{L}_{\text{NFXP}}^{\theta}$ and $\tfrac{\partial^{2}}{\partial \theta_{i}\partial \theta_{j}} \mathcal{L}_{\text{NFXP}}^{\theta}$ depend on the solutions to \eqref{eq:tfixedPoints} and \eqref{eq:manyFixedPoints}. In fact, \eqref{eq:tfixedPointsCCP} and \eqref{eq:manyFixedPointsCCP} are no easier to solve than \eqref{eq:tfixedPoints} and \eqref{eq:manyFixedPoints}, conditional on a fixed policy. Thus, essentially all of the computational gains from the CCP estimator stem from avoiding the repeated computation of $P^{\theta}$.

\cites{Bray_strong_convergence} strong convergence (SC) estimator is the same as the CCP estimator, except it replaces nested fixed point \eqref{eq:bellman4} with the following nested fixed point:
\begin{align}
	\Psi V_{\hat{P}}^{\theta} & \equiv \Psi U_{\hat{P}}^{\theta} + \beta \Psi F_{\hat{P}} \Psi V_{\hat{P}}^{\theta}, \label{eq:sc}
\end{align}
where $\Psi$ is the matrix that maps $(v_{1}, \cdots, v_{X})$ to $(v_{1} - \bar{v}, \cdots, v_{X}  - \bar{v})$, for $\bar{v} \equiv \sum_{i = 1}^{X}v_{i}/X$. Specifically,
\begin{align*}
	\hat{\theta}_{\text{SC}} &\equiv \argmax_{\theta \in \Theta} \mathcal{L}_{\text{SC}}^{\theta} \nonumber\\
    \text{where} \quad \mathcal{L}_{\text{SC}}^{\theta} & \equiv \sum_{a \in \mathbb{A}} n_{a}'\log(P_{\Psi V^{\theta}_{\hat{P}}a}^{\theta}). 
\end{align*}
However, it is simple to show that $P_{\Psi V}^{\theta} = P_{V}^{\theta}$, which implies that $\mathcal{L}_{\text{SC}}^{\theta} = \mathcal{L}_{\text{CCP}}^{\theta}$. Thus, SC is simply a different implementation of CCP. But this implementation will be faster when the Markov chain under policy $\hat{P}$ is ergodic, in which case successive approximations of \eqref{eq:sc} will converge strictly faster than successive approximations of \eqref{eq:bellman4}, a property known as \emph{strong convergence} \citep[see][]{Bray_endogenous}.

We can leverage strong convergence for computing gradients and Hessians as well, since $\tfrac{\partial}{\partial \theta} \mathcal{L}_{\text{SC}}^{\theta}$ and $\tfrac{\partial^{2}}{\partial \theta_{i}\partial \theta_{j}} \mathcal{L}_{\text{SC}}^{\theta}$ depend on the solutions to 
\begin{align*}
    \Psi \tfrac{\partial}{\partial \theta_{i}} V_{\hat{P}}^{\theta} & = \sum_{a \in \mathbb{A}} \Psi \text{diag}(\hat{P}_{a}) \tfrac{\partial}{\partial \theta_{i}}U_{a}^{\theta} + \beta \Psi F_{\hat{P}} \Psi\tfrac{\partial}{\partial \theta_{i}}  V_{\hat{P}}^{\theta} \\
    \text{and} \quad \Psi \tfrac{\partial^{2}}{\partial \theta_{i} \partial \theta_{j}}V_{\hat{P}}^{\theta} & = \sum_{a \in \mathbb{A}} \Psi \Big(\text{diag}(\hat{P}_{a}) \tfrac{\partial^{2}}{\partial \theta_{i} \partial \theta_{j}}U_{a}^{\theta}\Big) + \beta \Psi F_{\hat{P}} \Psi \tfrac{\partial^{2}}{\partial \theta_{i} \partial \theta_{j}}V_{\hat{P}}^{\theta},
\end{align*}
which converge faster than the solution to \eqref{eq:tfixedPointsCCP} and \eqref{eq:manyFixedPointsCCP} when the state variables follow an ergodic Markov chain.  

So far we have made only incremental progress: the SC fixed points are easier to solve than the CCP fixed points, which in turn are easier to solve than the NFXP fixed points. However, a real breakthrough arises in one special case: when the utility function is linear. If $U^{\theta}_a = W_a\theta$, for some $X \times t$ matrix $W_a$, then the CCP and SC fixed points unnest, and these estimators become substantially more efficient (\citealp[p. 296]{aguirregabiria1999dynamics}; \citealp[p. 52]{Aguirregabiria2009}; \citealp[p. 53]{Bray_strong_convergence}). For the CCP case, note that $U^{\theta}_a = W_a\theta$ implies
\begin{align*}
    V_{\hat{P}}^{\theta} & = V_{\hat{P}}^{E} + V_{\hat{P}}^{W} \theta,\\
    \tfrac{\partial}{\partial \theta} V_{\hat{P}}^{\theta} & = V_{\hat{P}}^{W}, \\
    \text{and} \quad \tfrac{\partial^{2}}{\partial \theta \partial \theta}V_{\hat{P}}^{\theta} & = 0,\\
    \text{where} \quad V_{\hat{P}}^{E} & \equiv E(\hat{P}) + \beta F_{\hat{P}} V_{\hat{P}}^{E} \\
    \text{and} \quad V_{\hat{P}}^{W} & \equiv \sum_{a\in \mathbb{A}}\text{diag}(\hat{P}_{a}) W_{a} + \beta F_{\hat{P}} V_{\hat{P}}^{W}.\nonumber
\end{align*}
And for the SC case, we apply $\Psi$ where appropriate. As you see, there are still fixed points in the system above, but these fixed points do not depend on $\theta$. Hence, we could solve $V_{\hat{P}}^{E}$ and $V_{\hat{P}}^{W}$ upfront, after which $V_{\hat{P}}^{\theta}$, $\tfrac{\partial}{\partial \theta} V_{\hat{P}}^{\theta}$, and $\tfrac{\partial^{2}}{\partial \theta \partial \theta}V_{\hat{P}}^{\theta}$ are available in closed form. Hence, after solving the two fixed point equations above, we could evaluate $\mathcal{L}_{\text{CCP}}^{\theta}$, $\tfrac{\partial}{\partial \theta} \mathcal{L}_{\text{CCP}}^{\theta}$, and $\tfrac{\partial^{2}}{\partial \theta_{i}\partial \theta_{j}} \mathcal{L}_{\text{CCP}}^{\theta}$ for any number of $\theta$ values, with negligible marginal effort. Computing $V_{\hat{P}}^{E}$ and $V_{\hat{P}}^{W}$ may be costly in large problems, but once obtained, maximizing the CCP likelihood is trivial.\footnote{
    Another way to unnest the CCP fixed points would be to compute $(I - \beta F_{\hat{P}})^{-1}$ upfront. But this matrix inversion is not practical in general, as it requires $O(X^3) $ operations with standard algorithms (or $ O(X^{2.373}) $ operations with esoteric algorithms), whereas solving for $V_{\hat{P}}^{\theta}$ requires only $ O(X^2) $ operations \citep[p. 77]{Judd1998}.
}

Our contribution will be to extend this fixed point unnesting technique to dynamic programs with non-linear utility functions.

\subsection{Dual Representation} \label{sec: dual}

The following proposition will underpin our estimation approach.
\begin{proposition}\label{prop_wV}
    Computing $w'V_{P}^{\theta}$ for one weighting vector $w \in \mathbb{R}^X$ and many parameter vectors $\theta \in \Theta$ only requires solving a \emph{single} fixed point equation, since
    \begin{align}
        w'V_{P}^{\theta} & = \lambda' U^{\theta}_{P},\label{eq:strong_duality}\\
       \text{for} \quad \lambda & \equiv w + \beta F_{P}' \lambda. \label{dual_fp}            
    \end{align}
\end{proposition}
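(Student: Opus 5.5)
The identity is linear-algebraic, so the plan is to establish that $\lambda$ exists and is unique, and then compute $w'V_P^\theta$ by transposing the resolvent. First I will confirm that both fixed points are well defined. The matrix $F_P=\sum_a \text{diag}(P_a)F_a$ is row-stochastic, since each row is a convex combination of rows of the stochastic matrices $F_a$. Hence $\|F_P\|_\infty=1$ and $\|F_P'\|_1=1$. Because $\beta\in[0,1)$, the spectral radius of $\beta F_P$, and therefore of $\beta F_P'$, is at most $\beta<1$. So $I-\beta F_P$ and $I-\beta F_P'=(I-\beta F_P)'$ are invertible, and the Neumann series converge. Equation \eqref{eq:bellman4} then gives $V_P^\theta=(I-\beta F_P)^{-1}U_P^\theta$, and equation \eqref{dual_fp} has the unique solution $\lambda=(I-\beta F_P')^{-1}w$. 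Equivalently, $\lambda$ is the limit of successive approximations $\lambda\leftarrow w+\beta F_P'\lambda$, which contract in the $\ell_1$ norm.

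Next I will derive \eqref{eq:strong_duality} directly. Using $((I-\beta F_P)^{-1})'=(I-\beta F_P')^{-1}$, we get
\[
w'V_P^\theta=w'(I-\beta F_P)^{-1}U_P^\theta=\big((I-\beta F_P')^{-1}w\big)'U_P^\theta=\lambda'U_P^\theta.
\]
A version that avoids inverses is also available. Left-multiply \eqref{eq:bellman4} by $\lambda'$ and right-multiply the transpose of \eqref{dual_fp} by $V_P^\theta$:
\[
\lambda'V_P^\theta=\lambda'U_P^\theta+\beta\lambda'F_PV_P^\theta,\qquad \lambda'V_P^\theta=w'V_P^\theta+\beta\lambda'F_PV_P^\theta.
\]
Subtracting the two equations cancels the common term and yields the identity.

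Finally, I will note the computational claim. The dual fixed point \eqref{dual_fp} involves only $w$, $\beta$, and $F_P$, and none of these depends on $\theta$. So $\lambda$ is solved once. After that, evaluating $w'V_P^\theta$ for any $\theta$ reduces to the inner product $\lambda'U_P^\theta$, where $U_P^\theta=E(P)+\sum_a\text{diag}(P_a)U_a^\theta$ is explicit and requires no fixed point.

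I expect no real obstacle here. The only point that needs care is the invertibility of $I-\beta F_P$, which rests on the row-stochasticity of $F_P$ together with $\beta<1$.
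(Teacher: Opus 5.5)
Your proof is correct, but it takes a different route from the paper's. The paper writes $w'V_P^\theta$ as the value of the linear program $\max_V w'V$ subject to $V = U_P^\theta + \beta F_P V$. It then invokes the LP strong duality theorem to equate this with the value of the dual program $\min_\lambda \lambda'U_P^\theta$ subject to $\lambda = w + \beta F_P'\lambda$. That framing is mainly interpretive: it shows $\theta$ moving from the primal constraint into the dual objective, a theme the paper returns to in its conclusion.

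You instead work directly with the resolvent identity $((I-\beta F_P)^{-1})' = (I-\beta F_P')^{-1}$. Your inverse-free variant pairs the primal equation with $\lambda$ and the dual equation with $V_P^\theta$ and subtracts. That is exactly the weak-duality computation that underlies LP duality, and with equality constraints and free variables it holds with equality for any feasible pair.

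What your route buys is self-containedness. Strong duality requires the primal to be feasible with a finite optimum, and the paper takes this for granted. It also takes for granted that the dual fixed point has a unique solution. Both facts follow from the invertibility of $I-\beta F_P$, which you verify explicitly from the row-stochasticity of $F_P$ and $\beta<1$. Because both feasible sets are singletons, the max and min in the paper's LPs are vacuous, so your direct calculation is arguably the more transparent proof. You also get, as a bonus, that $\lambda$ can be computed by an $\ell_1$ contraction, which is how such fixed points are solved in practice.
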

\begin{proof}
    We can express $w'V_{P}^{\theta}$ as the objective of the following linear program:
        \begin{eqnarray*}
        \max_{V \in \mathbb{R}^{X}} \quad && w'V  \\
        \text{s.t.} \quad && V = U^{\theta}_{P} + \beta F_{P} V. 
        \end{eqnarray*}
    \noindent 
    The strong duality theorem asserts that this program has the same objective as the following dual linear program:
        \begin{eqnarray}
        \min_{\lambda \in \mathbb{R}^{X}} \quad && \lambda'U^{\theta}_{P} \notag\\
        \text{s.t.} \quad && \lambda = w + \beta F_{P}' \lambda. \notag   
        \end{eqnarray}
\end{proof}

We will use this proposition to create estimators whose fixed points unnest for all utility functions in the next section. But before that, we will illustrate the value of Proposition \ref{prop_wV} by showing how it accelerates the computation of $\tfrac{\partial}{\partial \theta_{i}} \mathcal{L}_{\text{NFXP}}^{\theta}$, $\tfrac{\partial^{2}}{\partial \theta_{i}\partial \theta_{j}} \mathcal{L}_{\text{NFXP}}^{\theta}$, $\tfrac{\partial}{\partial \theta_{i}} \mathcal{L}_{\text{CCP}}^{\theta}$, and $\tfrac{\partial^{2}}{\partial \theta_{i}\partial \theta_{j}} \mathcal{L}_{\text{SC}}^{\theta}$.

To begin, consider a dynamic program with transition function $\tilde{F}_{P} \equiv F_{P}$ and utility function $\tilde{U}_{P}^{\theta} \equiv \sum_{a \in \mathbb{A}} \text{diag}(P_{a}) \tfrac{\partial}{\partial \theta_{i}}U_{a}^{\theta}$. Line \eqref{eq:tfixedPoints} establishes that restricting this dynamic program to policy $\tilde{P} \equiv P^{\theta}$ yields value function $\tilde{V}_{\tilde{P}} \equiv \tfrac{\partial}{\partial \theta_{i}} V^{\theta}$. Next, define $\lambda$ as the solution to \eqref{dual_fp} under policy $\tilde{P}$ and weighting vector
\begin{align*}
    w'\equiv \sum_{a \in \mathbb{A}} n_{a}' \text{diag}(P_a^{\theta})^{-1}  \sum_{j \in \mathbb{A} \setminus \{A\}} \big(\tfrac{\partial}{\partial j} \rho_{a}(C_{V^{\theta}}^{\theta}\Delta')\big)  \beta (F_{j} - F_{A}).
\end{align*}
In this case, line \eqref{eq:strong_duality} establishes that 
\begin{align*}
    w'\tfrac{\partial}{\partial \theta_{i}} V^{\theta} = w'\tilde{V}_{\tilde{P}} = \lambda'\tilde{U}_{\tilde{P}} = \lambda'\sum_{a \in \mathbb{A}} \text{diag}(P_{a}^{\theta}) \tfrac{\partial}{\partial \theta_{i}}U_{a}^{\theta}.
\end{align*}
And, with this, we can express the NFXP likelihood gradient in a way that depends on the one fixed point in \eqref{dual_fp}, rather than $t$ fixed points in \eqref{eq:tfixedPoints}:
\begin{align*}
	\tfrac{\partial}{\partial \theta_{i}} \mathcal{L}_{\text{NFXP}}^{\theta} & = \lambda'\sum_{a \in \mathbb{A}} \text{diag}(P_{a}^{\theta}) \tfrac{\partial}{\partial \theta_{i}}U_{a}^{\theta} + \sum_{a \in \mathbb{A}} n_{a}' \text{diag}(P_a^{\theta})^{-1} \sum_{j \in \mathbb{A} \setminus \{A\}} \big(\tfrac{\partial}{\partial j} \rho_{a}(C_{V^{\theta}}^{\theta}\Delta')\big)  (\tfrac{\partial}{\partial \theta_{i}} U_{j}^{\theta} - \tfrac{\partial}{\partial \theta_{i}} U_{A}^{\theta}).
\end{align*}

We can speed up the computation of $\tfrac{\partial^{2}}{\partial \theta_{i}\partial \theta_{j}} \mathcal{L}_{\text{NFXP}}^{\theta}$ with the same trick. For this case, consider a dynamic program with state transition matrix $\breve{F}_{P} \equiv F_{P}$ and utility function 
\begin{align*}
    \breve{U}_{P}^{\theta} \equiv \sum_{a \in \mathbb{A}} \text{diag}(P_{a}) \tfrac{\partial^{2}}{\partial \theta_{i} \partial \theta_{j}}U_{a}^{\theta} +  \text{diag}(\tfrac{\partial}{\partial \theta_{j}}P_{a}^{\theta}) \tfrac{\partial}{\partial \theta_{i}}U_{a}^{\theta} + \beta \text{diag}(\tfrac{\partial}{\partial \theta_{j}}P_{a}^{\theta}) F_{a} \tfrac{\partial}{\partial \theta_{i}} V^{\theta}.
\end{align*}
Line \eqref{eq:manyFixedPoints} establishes that restricting this dynamic program to policy $\breve{P} \equiv P^{\theta}$ yields value function $\breve{V}_{\breve{P}} \equiv \tfrac{\partial^{2}}{\partial \theta_{i} \partial \theta_{j}}V^{\theta}$. Thus, for $w$ and $\lambda$ defined above, \eqref{eq:strong_duality} establishes that 
\begin{align*}
    w'\tfrac{\partial^{2}}{\partial \theta_{i} \partial \theta_{j}}V^{\theta} & = w'\breve{V}_{\breve{P}} \\
    & = \lambda'\breve{U}_{\breve{P}} \\
    & = \lambda'\sum_{a \in \mathbb{A}} \Big(\text{diag}(P_{a}^{\theta}) \tfrac{\partial^{2}}{\partial \theta_{i} \partial \theta_{j}}U_{a}^{\theta} +  \text{diag}(\tfrac{\partial}{\partial \theta_{j}}P_{a}^{\theta}) \tfrac{\partial}{\partial \theta_{i}}U_{a}^{\theta} + \beta \text{diag}(\tfrac{\partial}{\partial \theta_{j}}P_{a}^{\theta}) F_{a} \tfrac{\partial}{\partial \theta_{i}} V^{\theta}\Big).
\end{align*}
And, with this, we can express the NFXP likelihood Hessian in a way that depends on the one fixed point in \eqref{dual_fp} and the $t$ fixed points in \eqref{eq:tfixedPoints}, rather than the $t(t+1)/2$ fixed points in \eqref{eq:manyFixedPoints}: 
\begin{align*}
	&\tfrac{\partial^{2}}{\partial \theta_{i}\partial \theta_{j}}  \mathcal{L}_{\text{NFXP}}^{\theta} 
    = \lambda'\sum_{a \in \mathbb{A}} \Big(\text{diag}(P_{a}^{\theta}) \tfrac{\partial^{2}}{\partial \theta_{i} \partial \theta_{j}}U_{a}^{\theta} +  \text{diag}(\tfrac{\partial}{\partial \theta_{j}}P_{a}^{\theta}) \tfrac{\partial}{\partial \theta_{i}}U_{a}^{\theta} + \beta \text{diag}(\tfrac{\partial}{\partial \theta_{j}}P_{a}^{\theta}) F_{a} \tfrac{\partial}{\partial \theta_{i}} V^{\theta}\Big)\\
	& \quad - \sum_{a \in \mathbb{A}} n_{a}' \text{diag}(P_a^{\theta})^{-2} \text{diag}(\tfrac{\partial}{\partial \theta_{j}}P_{a}^{\theta}) \tfrac{\partial}{\partial \theta_{i}} P_a^{\theta}
    + \sum_{a \in \mathbb{A}} n_{a}' \text{diag}(P_a^{\theta})^{-1} \\
    & \qquad \Big(\sum_{\ell \in \mathbb{A} \setminus \{A\}} \sum_{k \in \mathbb{A} \setminus \{A\}} 
	\big(\tfrac{\partial^{2}}{\partial \ell\partial k} \rho_{a}(C_{V^{\theta}}^{\theta}\Delta')\big) \text{diag}\big(\tfrac{\partial}{\partial \theta_{j}} U_{\ell}^{\theta} - \tfrac{\partial}{\partial \theta_{j}} U_{A}^{\theta} + \beta (F_{\ell} - F_{A}) \tfrac{\partial}{\partial \theta_{j}} V^{\theta}\big) \\
    & \qquad  \big(\tfrac{\partial}{\partial \theta_{i}} U_{k}^{\theta} - \tfrac{\partial}{\partial \theta_{i}} U_{A}^{\theta} + \beta (F_{k} - F_{A}) \tfrac{\partial}{\partial \theta_{i}} V^{\theta}\big) + \sum_{k \in \mathbb{A} \setminus \{A\}} 
	\big(\tfrac{\partial}{\partial k} \rho_{a}(C_{V^{\theta}}^{\theta}\Delta')\big)  
	\big(\tfrac{\partial^{2}}{\partial \theta_{i} \partial \theta_{j}} U_{k}^{\theta} - \tfrac{\partial^{2}}{\partial \theta_{i} \partial \theta_{j}} U_{A}^{\theta}\big)\Big). 
\end{align*}

The $\tfrac{\partial}{\partial \theta_{i}} \mathcal{L}_{\text{CCP}}^{\theta}$, $\tfrac{\partial}{\partial \theta_{i}} \mathcal{L}_{\text{SC}}^{\theta}$, $\tfrac{\partial^{2}}{\partial \theta_{i}\partial \theta_{j}} \mathcal{L}_{\text{CCP}}^{\theta}$, and $\tfrac{\partial^{2}}{\partial \theta_{i}\partial \theta_{j}} \mathcal{L}_{\text{SC}}^{\theta}$ simplifications are analogous: Proposition \ref{prop_wV} decreases the number of fixed points required to evaluate the gradients from $t$ to 1, and decreases the number of fixed points required to evaluate the Hessians from $t(t+1)/2$ to $t + 1$. These reductions are meaningful when the utility function is parameterized by the weights of a neural network, in which case $t$ could be in the hundreds or thousands.

\subsection{Unnested Fixed Point (UFXP) Estimator} \label{sec: ufxp}

Proposition \ref{prop_wV} reduces the number of nested fixed points in NFXP and CCP, but it does not eliminate them. For example, the $\lambda_{i}$ used to streamline the computation of $\tfrac{\partial}{\partial \theta_{i}} \mathcal{L}_{\text{NFXP}}^{\theta}$ depends on $\theta$, and therefore must be recomputed at each gradient ascent step. The difficulty is that NFXP, CCP, and SC do not naturally align with Proposition \ref{prop_wV}, since their objective functions involve many distinct linear combinations of the value function. For example, the CCP likelihood depends on $(F_a(x) - F_A(x)) V_{\hat{P}}^{\theta}$ for every $x \in \mathbb{X}$ and every $a \in \mathbb{A} \setminus \{A\}$, where $F_a(x)$ denotes the $x$th row of $F_a$. Consequently, unnesting the CCP fixed points via Proposition \ref{prop_wV} would require solving \eqref{dual_fp} for $X(A-1)$ different weighting vectors $w$.

Rather than fit Proposition \ref{prop_wV} to pre-existing estimators, as we did in the previous section, we will now fit an estimator to Proposition \ref{prop_wV}. Whereas NFXP and CCP operate in the CCP space---depending on $P^{\theta}$ or $P_{\hat{P}}^{\theta}$---our estimator operates in the CSV space---depending on $C_{V_{\hat{P}}^{\theta}}^{\theta}$. Put differently, NFXP and CCP exploit the optimal policy, line \eqref{eq:explicitOptimalPolicy}, while we exploit the associated first-order conditions, line \eqref{eq:focForTheDP}. This shift is crucial: the optimal policy is nonlinear in the value function, whereas the first-order conditions are linear in value function. This linearity will enable us to construct an empirical objective that depends on only a small number of linear combinations of the value function.

The first-order conditions in \eqref{eq:focForTheDP} imply that
\begin{align*}
	\rho^{-1}(\hat{P}) -  C_{V_{\hat{P}}^{\theta}}^{\theta} \Delta' \rightarrow 0 \quad \text{as} \quad \hat{P} \rightarrow P^{\theta}.
\end{align*} 
Our unnested fixed point (UFXP) estimator minimizes the sum of squares of $m > t$ random linear functionals of these limit conditions (although $m = t + 1$ suffices, we usually set $m$ slightly larger):
\begin{align*}
    \hat{\theta}_{\text{UFXP}} & \equiv \argmin_{\theta \in \Theta} \mathcal{Q}_{Z}^{\theta} \\
    \text{where} \quad \mathcal{Q}_{Z}^{\theta} & \equiv \sum_{i = 1}^{m}\operatorname{tr}(Z_{i}'(\rho^{-1}(\hat{P}) -  C_{V_{\hat{P}}^{\theta}}^{\theta} \Delta'))^{2}
\end{align*}
and $Z_{i}$ is an $X \times (A-1)$ matrix comprised of independent standard normal random variables, for $i \in \{1, \cdots, m\}$.

Objective $\mathcal{Q}_{Z}^{\theta}$ fits seamlessly with Proposition \ref{prop_wV}. To see this, note that
\begin{align}
    \operatorname{tr}(Z_{i}'&(\rho^{-1}(\hat{P}) -  C_{V_{\hat{P}}^{\theta}}^{\theta} \Delta')) \nonumber\\
    & = \sum_{a \in \mathbb{A}\setminus \{A\}} Z_{ia}' \big(\rho_{a}^{-1}(\hat{P}) - U_{a}^{\theta} + U_{A}^{\theta} - \beta(F_{a} - F_{A})V_{\hat P}^{\theta}\big) \nonumber\\
    & = w_{i}'V_{\hat P}^{\theta} + \sum_{a \in \mathbb{A}\setminus \{A\}} Z_{ia}' (\rho_{a}^{-1}(\hat{P}) - U_{a}^{\theta} + U_{A}^{\theta})  \nonumber\\
    & = \lambda_{i}'U_{\hat P}^{\theta} + \sum_{a \in \mathbb{A}\setminus \{A\}} Z_{ia}' (\rho_{a}^{-1}(\hat{P}) - U_{a}^{\theta} + U_{A}^{\theta}), \nonumber \\
    \text{where} \quad w_{i}' & \equiv -\sum_{a \in \mathbb{A}\setminus \{A\}}\beta Z_{ia}'(F_{a} - F_{A}), \label{eq:defineWi}
\end{align}
$\lambda_{i}$ is the corresponding fixed point of \eqref{dual_fp}, and $Z_{ia}$ is the $a$th column of $Z_{i}$. The key, of course, is replacing $w_{i}'V_{\hat P}^{\theta}$ with $\lambda_{i}'U_{\hat P}^{\theta}$, as doing so exchanges $\theta$-dependent fixed point \eqref{eq:bellman4} for $\theta$-independent fixed point \eqref{dual_fp}. Accordingly, after solving $\lambda_{1}, \cdots, \lambda_{m}$, we can evaluate our estimator's objective for any number of $\theta$ values, with negligible marginal cost. These $m$ dual-variable vectors also yield the objective's gradient and Hessian matrix, which also do not require any fixed point computations to evaluate (see Appendix Section \ref{app:ufxp_deriv} for the explicit expressions of $\tfrac{\partial}{\partial \theta} \mathcal{Q}_{Z}^{\theta}$ and $\tfrac{\partial^2}{\partial \theta_j \partial \theta_k} \mathcal{Q}_{\text{Z}}^{\theta}$). 

The following theorem establishes the consistency and asymptotic normality of our UFXP estimator.\label{pg:cons_prop} 
\begin{theorem}\label{p:consistency} Define $N(x) \equiv \sum_{a \in \mathbb{A}} n_{a}(x)$ as the number of state-x observations, $N \equiv \sum_{x \in \mathbb{X}}N(x)$ as the total number of observations, and $\eta(x) \equiv \lim_{N \rightarrow \infty} N(x)/N$ as the limiting proportion of state-$x$ observations. Suppose the following conditions hold:

Compactness: The true parameter $\theta$ is an interior point of the compact space $\Theta \subset \mathbb{R}^t$.

Smoothness: The utility function, $u^{\theta}$, is continuously differentiable in $\theta$ over $\Theta$. 

Error-distribution regularity: The additive error $e \in \mathbb{R}^A$
has a strictly positive and continuously differentiable joint density $g$ on $\mathbb{R}^A$ with integrable first moment and gradient.

Identification: The map $\theta \mapsto C_{V_{P^{\theta}}^{\theta}}^{\theta} \Delta'$ is injective.

Full-rank: The Jacobian $\tfrac{\partial}{\partial \theta} \text{\emph{vec}} (C_{V_{P}^{\theta}}^{\theta} \Delta')\Big|_{P = P^{\theta}}$ has full rank $t$ over $\Theta$.

Random projection: $Z_1, \dots, Z_m$ (where $m > t$) are a collection of $X \times (A-1)$ matrices whose entries are i.i.d. draws from a distribution that is absolutely continuous with respect to Lebesgue measure. For each state $x \in \mathbb{X}$, let $z(x)$ be the $(A-1) \times m$ matrix where the $i$-th column is the transposed $x$-th row of $Z_i$.

\noindent In this case, with probability one over the distribution of $\{Z_i\}_{i=1}^m$, we have the following as $N \to \infty$
\begin{enumerate}
    \item  $\hat{\theta}_{\text{UFXP}} \stackrel{P}{\rightarrow} \theta$ and
    \item $\sqrt{N}(\hat{\theta}_{\text{UFXP}} - \theta)  \stackrel{d}{\rightarrow} \mathcal{N}(0, \Sigma_{\text{UFXP}})$,
    \end{enumerate}
where 
\begin{eqnarray*}
\Sigma_{\text{UFXP}} &\equiv& \Big((D^{\theta})'D^{\theta}\Big)^{-1} (D^{\theta})' \Bigg( \sum_{x \in \mathbb{X}} \frac{z(x)' \Gamma(x)\Sigma(x)\Gamma(x)' z(x)}{\eta(x)} \Bigg) D^{\theta} \Big((D^{\theta})'D^{\theta}\Big)^{-1}, \\
D^{\theta} &\equiv& \sum_{x \in \mathbb{X}} z(x)'\Big(\Delta \tfrac{\partial}{\partial \theta} u^{\theta}(x) + \beta \sum_{x' \in \mathbb{X}} \Delta f(x'|x) \tfrac{\partial}{\partial \theta} v^{\theta}_{P^{\theta}}(x') \Big),\\
\Sigma(x) &\equiv& \operatorname{diag}(P^{\theta}(x)) - P^{\theta}(x) P^{\theta}(x)',\\
\Gamma(x) &\equiv& \tfrac{\partial}{\partial P(x)} \rho^{-1}(P^{\theta}(x)).
\end{eqnarray*}
\end{theorem}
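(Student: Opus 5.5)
We treat $\hat{\theta}_{\text{UFXP}}$ as a classical minimum-distance (extremum) estimator. The plan is to verify the standard consistency conditions of Newey and McFadden: uniform convergence of the objective to a limit that has a unique minimizer at the truth. The key observation is that $\mathcal{Q}_Z^{\theta}$ depends on the data only through $\hat{P}$, via the continuous map
\[
G(\hat{P},\theta)\equiv \rho^{-1}(\hat{P}) - C^{\theta}_{V_{\hat{P}}^{\theta}}\Delta' .
\]
Write $\mathcal{Q}_Z^\theta = \|\mathcal{Z}'\operatorname{vec}(G(\hat{P},\theta))\|^2$, where $\mathcal{Z}$ stacks the vectorized $Z_i$ as columns.

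First, $\hat{P}(x)=n(x)/N(x)\to P^{\theta_0}(x)$ almost surely, by the law of large numbers and $\eta(x)>0$. The error-distribution regularity implies that $\rho^{-1}$ is $C^1$ on the interior of the simplex; this is an inverse-function-theorem argument using the nonsingular Jacobian of $\rho$, which follows from the full-support, smooth density. In addition, $P\mapsto V_P^\theta=(I-\beta F_P)^{-1}U_P^\theta$ is $C^1$ in $(P,\theta)$ because $\beta<1$ keeps $I-\beta F_P$ invertible. Together with compactness of $\Theta$, these facts give
\[
\sup_\theta |\mathcal{Q}_Z^\theta - \mathcal{Q}_0^\theta| \to_p 0,
\qquad
\mathcal{Q}_0^\theta \equiv \|\mathcal{Z}'\operatorname{vec}(G(P^{\theta_0},\theta))\|^2 .
\]

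For identification of the limit, \eqref{eq:focForTheDP} gives $\rho^{-1}(P^{\theta_0})=C^{\theta_0}_{V^{\theta_0}}\Delta'$, so $\mathcal{Q}_0^{\theta_0}=0$. We must show that $\mathcal{Q}_0^\theta>0$ for $\theta\neq\theta_0$, almost surely in $Z$. This is the main obstacle. The identification assumption, stated for the map $\theta\mapsto C^{\theta}_{V_{P^\theta}^\theta}\Delta'$, is naturally read at a fixed policy $P=P^{\theta_0}$: the relevant map is $h(\theta)\equiv\operatorname{vec}(C^{\theta}_{V^\theta_{P^{\theta_0}}}\Delta')$, and we need $h(\theta)\ne h(\theta_0)$ to imply $\mathcal{Z}'(h(\theta)-h(\theta_0))\ne 0$.

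A random projection to $m>t$ dimensions preserves injectivity of a $t$-dimensional $C^1$ image, almost surely, by a Whitney/Sard-type argument. The secant set $\{(h(\theta)-h(\theta'))/\|\cdot\|\}$ is the image of a $2t$-dimensional set, but we only need secants through $\theta_0$, which form a $t$-dimensional family. Hence the set of $\mathcal{Z}$ whose kernel meets this family has Lebesgue measure zero when $m>t$; an absolutely continuous distribution for $Z$ then gives probability zero. Near $\theta_0$ the rays degenerate to tangent directions, and we handle them with the full-rank assumption. The same measure-zero argument shows that $D^{\theta}=\mathcal{Z}'\,\partial_\theta h$ has rank $t$, so $D'D$ is invertible. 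Combining uniform convergence, compactness, and continuity then yields part 1.

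For part 2, we take a mean-value expansion of the first-order condition $\partial_\theta \mathcal{Q}_Z^{\hat\theta}=0$, which is valid because $\theta_0$ is interior. This gives
\[
\sqrt N(\hat\theta-\theta_0) = -(\hat D'\hat D)^{-1}\hat D'\,\mathcal{Z}'\sqrt N \operatorname{vec}G(\hat P,\theta_0),
\]
with $\hat D\to_p D^{\theta_0}$ by consistency and continuity. The remaining step is the delta method on $\sqrt N(\hat P(x)-P^{\theta_0}(x))\to\mathcal{N}(0,\Sigma(x)/\eta(x))$, independently across $x$ by the multinomial CLT. The term $\rho^{-1}(\hat P)$ contributes $\Gamma(x)$. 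The derivative of $C^{\theta_0}_{V^{\theta_0}_{P}}\Delta'$ in $P$ at $P^{\theta_0}$ vanishes by the envelope argument already used in the footnote to \eqref{eq:tfixedPoints}: the derivative of $V_P^\theta$ in $P$ at the optimum is zero, since $P^{\theta}$ maximizes the Bellman right-hand side of \eqref{eq:bellman2} pointwise. This is exactly why $\Sigma_{\text{UFXP}}$ contains only $\Gamma(x)$. Summing over $x$ with $z(x)$ and applying Slutsky's lemma gives the stated sandwich form.
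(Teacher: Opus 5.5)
Your proposal is correct and has the same skeleton as the paper's proof. Consistency comes from Newey--McFadden with limit objective $\|\mathcal{Z}'\operatorname{vec}G(P^{\theta_0},\theta)\|^2$, unique minimization from a random-projection argument, and asymptotic normality from a mean-value expansion. In that expansion the envelope theorem removes the $\hat P$-dependence of $C^{\theta_0}_{V^{\theta_0}_{\hat P}}\Delta'$, so only $\Gamma(x)$ enters the delta method.

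\textbf{Where you differ: the random-projection lemma.} The paper shows $h_0(\Theta)$ is an embedded $t$-dimensional submanifold, via a smooth proper injective immersion. It then applies the parametric transversality theorem to $F(u,Z)=Z'u$ on $h_0(\Theta)\setminus\{\mathbf{0}\}$. You count dimensions in parameter space instead. To make your ``Whitney/Sard'' sentence rigorous:
\begin{itemize}
\item Set $k(\theta)=h(\theta)-h(\theta_0)$ and $s(\theta)=k(\theta)/\|k(\theta)\|$ on $\Theta\setminus\{\theta_0\}$.
\item Write the bad matrices locally as $B(\theta)W$, with $B(\theta)$ a $C^1$ basis of $s(\theta)^{\perp}$ and $W\in\mathbb{R}^{(D-1)\times m}$.
\item The bad set is then a countable union of $C^1$ images of $(t+(D-1)m)$-dimensional sets in $\mathbb{R}^{Dm}$, hence Lebesgue-null exactly when $m>t$.
\end{itemize}

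This route needs only $C^1$ dependence on $\theta$ and $k(\theta)\neq\mathbf{0}$ away from $\theta_0$. It does not need an immersion or embedding of $h_0(\Theta)$; the full-rank assumption, evaluated at $P^{\theta}$ rather than $P^{\theta_0}$, does not literally supply that away from $\theta_0$. It also does not need $C^\infty$ smoothness: Lee's theorems are for smooth maps, while the hypothesis gives only $C^1$. Your tangent-direction remark is then unnecessary for consistency.

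Two further points in your favour:
\begin{itemize}
\item You check that $D'D$ is a.s.\ invertible, which the paper uses in $\Sigma_{\text{UFXP}}$ without proof.
\item Your uniform-convergence argument, via joint continuity of $(P,\theta)\mapsto G(P,\theta)$ on a compact set, is sounder than the paper's ``pointwise convergence plus continuity on a compact set.''
\end{itemize}

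\textbf{One step to fix.} Do not reinterpret the identification assumption at the fixed policy $P^{\theta_0}$, since that changes the hypothesis. The stated assumption already gives what you need:
\begin{itemize}
\item Suppose $h(\theta)=h(\theta_0)=\operatorname{vec}\rho^{-1}(P^{\theta_0})$, and let $V\equiv V^{\theta}_{P^{\theta_0}}$. Then $P^{\theta}_{V}=\rho(C^{\theta}_{V}\Delta')=P^{\theta_0}$.
\item Hence $T^{\theta}_{V}=U^{\theta}_{P^{\theta_0}}+\beta F_{P^{\theta_0}}V=V$.
\item Uniqueness of the Bellman fixed point gives $V=V^{\theta}$ and $P^{\theta}=P^{\theta_0}$.
\item Therefore $C^{\theta}_{V^{\theta}_{P^{\theta}}}\Delta'=\rho^{-1}(P^{\theta})=\rho^{-1}(P^{\theta_0})=C^{\theta_0}_{V^{\theta_0}_{P^{\theta_0}}}\Delta'$, and the stated injectivity forces $\theta=\theta_0$.
\end{itemize}
The paper skips this same step when it asserts $h_0(\theta)=\mathbf{0}$ if and only if $\theta=\theta_0$.

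Finally, with $\hat D\to\mathcal{Z}'\partial_\theta h(\theta_0)$, the leading sign in your expansion should be $+$. This does not affect $\Sigma_{\text{UFXP}}$.
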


\subsection{Optimal Unnested Fixed Point (OUFXP) Estimator}

\cites{cameron} equation (6.24) suggests that we can improve our estimator's asymptotic performance by replacing random weights $z$ with the following ``optimal'' weights:
\begin{align*}
     z^{\theta}(x) \equiv \Big(\frac{\hat\Gamma(x)\hat\Sigma(x)\hat\Gamma(x)'}{N(x)/N}\Big)^{-1}\Big(\Delta \tfrac{\partial}{\partial \theta} u^{\theta}(x) + \beta \sum_{x' \in \mathbb{X}} \Delta f(x'|x) \tfrac{\partial}{\partial \theta} v^{\theta}_{\hat P}(x') \Big),
\end{align*} 
\noindent where $\hat \Sigma(x)$ and $\hat \Gamma(x)$ are analogs to $\Sigma(x)$ and $\Gamma(x)$ with $\hat P(x)$ replacing $P^{\theta}(x)$.

Unfortunately, the optimal weights depend on $\theta$, so using them requires a two-step estimator. In the first step, we calculate $\hat{\theta}_{\text{UFXP}}$, and in the second step we calculate
\begin{align*}
    \hat{\theta}_{\text{OUFXP}} & \equiv \argmin_{\theta \in \Theta} \mathcal{Q}_{Z^{\hat{\theta}_{\text{UFXP}}}}^{\theta}.
\end{align*}

Note, if the utility function is linearly parameterized ($U^{\theta}_a = W_a\theta$) then the OUFXP estimator simplifies to:
\begin{align*}
    \hat{\theta}_{\text{OUFXP}} \equiv \Big( \sum_{x \in \mathbb{X}} z^{\hat{\theta}_{\text{UFXP}}}(x)'\Delta W(x)  &- \Lambda \sum_{a\in \mathbb{A}}\text{diag}(\hat{P}_{a}) W_{a} \Big)^{-1} \Big(\sum_{x \in \mathbb{X}} z^{\hat{\theta}_{\text{UFXP}}}(x)'\rho^{-1}(\hat{P}(x)) + \Lambda E(\hat{P}) \Big),
\end{align*}
\noindent where $\Lambda$ is the $t \times X$ matrix whose $i$th row is $\lambda_i$, and $W(x)$ is the $A \times t$ matrix whose $a$th row corresponds to the $x$th row of $W_a$.\footnote{
    See \cite{MiessiSanches2016} and \cite{Dearing2019} for other closed-form estimators of dynamic programs with linearly parameterized utility functions.
} 

The following result justifies the name of our new estimator. 
\begin{theorem}\label{p:efficient}
    The optimal unnested fixed point estimator (OUFXP) is as asymptotically efficient as maximum likelihood.
\end{theorem}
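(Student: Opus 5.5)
The plan is to show that the asymptotic variance of $\hat{\theta}_{\text{OUFXP}}$ equals the inverse Fisher information of the NFXP likelihood $\mathcal{L}_{\text{NFXP}}^{\theta}$. I will first reduce OUFXP to the sandwich form of Theorem \ref{p:consistency}, then compute the information matrix of the MLE and check that the two coincide.

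\textbf{Step 1 (OUFXP as a Theorem \ref{p:consistency} estimator with estimated weights).} Write $G(x) \equiv \Delta \tfrac{\partial}{\partial \theta} u^{\theta}(x) + \beta \sum_{x'} \Delta f(x'|x) \tfrac{\partial}{\partial \theta} v^{\theta}_{P^{\theta}}(x')$ and $\Omega(x) \equiv \Gamma(x)\Sigma(x)\Gamma(x)'/\eta(x)$.

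The consistency of $\hat{\theta}_{\text{UFXP}}$ from Theorem \ref{p:consistency}, together with $\hat P \to P^{\theta}$, $N(x)/N \to \eta(x)$ and continuity, gives $z^{\hat{\theta}_{\text{UFXP}}}(x) \stackrel{P}{\rightarrow} z^{\theta}(x) = \Omega(x)^{-1} G(x)$. Here I use that $\tfrac{\partial}{\partial\theta}v^\theta_{\hat P}\to \tfrac{\partial}{\partial\theta}v^\theta_{P^\theta}$.

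I then rerun the linearization argument behind part 2 of Theorem \ref{p:consistency}, replacing the fixed $z(x)$ by its consistent estimate. In that expansion the estimated weights multiply a moment vector that is $O_P(N^{-1/2})$ at the truth, so their estimation error contributes only $o_P(N^{-1/2})$, which is the standard two-step GMM argument. The limiting variance is therefore the sandwich formula of Theorem \ref{p:consistency} evaluated at $z = z^{\theta}$.

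There is one point in that linearization I would check carefully: the moment uses $V_{\hat P}^{\theta}$, not $V_{P^{\theta}}^{\theta}$. By the envelope theorem, however, $\tfrac{\partial}{\partial P} V_P^{\theta}|_{P=P^{\theta}} = 0$, because $P^{\theta}$ satisfies the first-order conditions \eqref{eq:focForTheDP} via Proposition \ref{prop:defineEpsilon}. So the only first-order sampling noise comes from $\rho^{-1}(\hat P(x))$. By the delta method this noise is asymptotically independent across $x$, with covariance $\Omega(x)/N$; I take this to be exactly what the proof of Theorem \ref{p:consistency} establishes.

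\textbf{Step 2 (collapse the sandwich).} With $z(x) = \Omega(x)^{-1}G(x)$, the matrices in Theorem \ref{p:consistency} become
\[
D^{\theta} = \sum_x G(x)'\Omega(x)^{-1}G(x) \equiv \mathcal{J}
\]
and the middle term
\[
\sum_x G(x)'\Omega(x)^{-1}\Omega(x)\Omega(x)^{-1}G(x) = \mathcal{J}.
\]
Since $\mathcal{J}$ is symmetric, $\Sigma_{\text{OUFXP}} = (\mathcal{J}'\mathcal{J})^{-1}\mathcal{J}'\mathcal{J}\mathcal{J}(\mathcal{J}'\mathcal{J})^{-1} = \mathcal{J}^{-1}$. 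Invertibility of $\mathcal{J}$ follows from the full-rank condition and positive definiteness of $\Omega(x)$, which holds because $g$ has full support.

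\textbf{Step 3 (Fisher information).} Per observation, the MLE information is $\sum_x \eta(x)\, (\tfrac{\partial}{\partial\theta}\tilde P^{\theta}(x))' \Sigma(x)^{-1} \tfrac{\partial}{\partial\theta}\tilde P^{\theta}(x)$. Here $\tilde P$ collects the first $A-1$ choice probabilities, $\Sigma(x)$ is their multinomial covariance, and $\Gamma(x)$ is the Jacobian of $\rho^{-1}$ in those coordinates, which is the convention that makes $\Gamma(x)\Sigma(x)\Gamma(x)'$ nonsingular.

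Since $P^{\theta}(x) = \rho(C_{V^{\theta}}^{\theta}(x)\Delta')$ and $\rho^{-1}$ inverts $\rho$, we get $\tfrac{\partial}{\partial\theta}\tilde P^{\theta}(x) = \Gamma(x)^{-1}\tfrac{\partial}{\partial\theta}(\Delta C(x))$. Again by the envelope theorem, $\tfrac{\partial}{\partial\theta}V^{\theta} = \tfrac{\partial}{\partial\theta}V^{\theta}_{P}|_{P=P^{\theta}}$, so $\tfrac{\partial}{\partial\theta}(\Delta C(x)) = G(x)$. The information therefore equals
\[
\sum_x \eta(x)\, G(x)'\Gamma(x)^{-\prime}\Sigma(x)^{-1}\Gamma(x)^{-1}G(x) = \sum_x G(x)'\Omega(x)^{-1}G(x) = \mathcal{J},
\]
and the MLE variance $\mathcal{J}^{-1}$ matches $\Sigma_{\text{OUFXP}}$.

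\textbf{Main obstacle.} I expect the main obstacle to be Step 1. Theorem \ref{p:consistency} is stated for absolutely continuous random $Z$ with $m>t$, whereas OUFXP uses data-dependent weights with $m=t$, so the second-step problem is exactly identified and the objective is zero at the solution. I would handle this by redoing the consistency argument directly. The population criterion $\sum_x z^{\theta}(x)'(\rho^{-1}(P^{\theta}(x)) - C(x)\Delta')$ at parameter $\vartheta$ has a locally unique zero at $\vartheta=\theta$ by the full-rank condition, and I would combine this with uniform convergence over compact $\Theta$. Global uniqueness requires the identification condition, applied to the optimally weighted moment. If this cannot be obtained globally, I would state the result for the root in a neighborhood of $\hat{\theta}_{\text{UFXP}}$, which is itself consistent.
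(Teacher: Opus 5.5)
Your argument is correct and has the same architecture as the paper's proof. You substitute the probability limit $z^{\theta}(x)$ of the optimal weights into the sandwich of Theorem \ref{p:consistency}. You then express the MLE information through the chain rule $\tfrac{\partial}{\partial\theta}P^{\theta}(x)=\tfrac{\partial\rho}{\partial v}J(x)$, with the envelope theorem identifying the paper's $J(x)$ with your $G(x)$, and you match the two.

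Where you differ is the choice of coordinates in the final step. The paper keeps $P(x)\in\mathbb{R}^{A}$. There $\Sigma(x)$ is singular, $\Gamma(x)$ is $(A-1)\times A$, and the information is $(\tfrac{\partial\rho}{\partial v})'\operatorname{diag}(P^{\theta}(x))^{-1}\tfrac{\partial\rho}{\partial v}$. It then needs three auxiliary facts ($\Gamma\tfrac{\partial\rho}{\partial v}=I_{A-1}$, $\tfrac{\partial\rho}{\partial v}\Gamma\Sigma=\Sigma$, and $(\tfrac{\partial\rho}{\partial v})'\mathbf{1}=\mathbf{0}$) to show that this matrix times $\Gamma\Sigma\Gamma'$ equals $I_{A-1}$. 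By dropping $P_A$, you make $\Gamma$ square and invertible and the multinomial information equal to $\Sigma^{-1}$, so the same identity becomes a one-line inversion.

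That reparametrization is harmless. Because $\Sigma(x)$ is supported on the tangent space of the simplex, $\Gamma\Sigma\Gamma'$ depends only on the restriction of $\Gamma$ to that tangent space. It is therefore identical, and nonsingular, in either convention. Your remark that the reduced convention is what makes it nonsingular is slightly off, though nothing in your argument hinges on it.

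You also supply two things the paper omits. First, you collapse the sandwich explicitly to $\mathcal{J}^{-1}$. Second, you observe that OUFXP is a two-step, exactly identified ($m=t$) problem. Neither the random-projection lemma nor the consistency argument of Theorem \ref{p:consistency} applies to it verbatim. The paper simply substitutes $z^{\theta}(x)$ and implicitly assumes second-step consistency. Your fallback to the root near $\hat{\theta}_{\text{UFXP}}$ matches the paper's practice of warm-starting OUFXP from the UFXP estimate, and it tightens the argument rather than leaving a gap.
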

\label{pg:given}

When interpreting Theorem \ref{p:efficient}, keep in mind that we have treated the model's state transition matrices as given. Thus, this theorem indicates that OUFXP is as asymptotically efficient as any other dynamic discrete choice estimator that first pre-estimates state transition probabilities (which is essentially all of them).

\section{Numerical Experiments}\label{sec: exp}
We will compare our estimators to other prominent dynamic discrete choice estimators with a simplified version of the structural econometric model of Section \ref{sec:sim_emp}. Our estimators will tackle this toy model with ease, but traditional estimators will struggle. The difficulty traditional estimators face with this simplified model suggests that the more elaborate version we present in Section \ref{sec:sim_emp} would be effectively unestimatable without our approach. 

Our model concerns a grocery store's infinite-horizon inventory management problem. The key object of interest is the inventory holding cost. Most stylized inventory models---such as the newsvendor, linear-quadratic, and $(s, S)$ specifications---capture inventory costs with one or two parameters. But with over a million observations in our actual empirical sample (which we will introduce in the next section), we can afford to estimate the inventory holding cost function nonparametrically. 

\begin{figure}[tp] 
    \centering
    \includegraphics[width=0.9\textwidth]{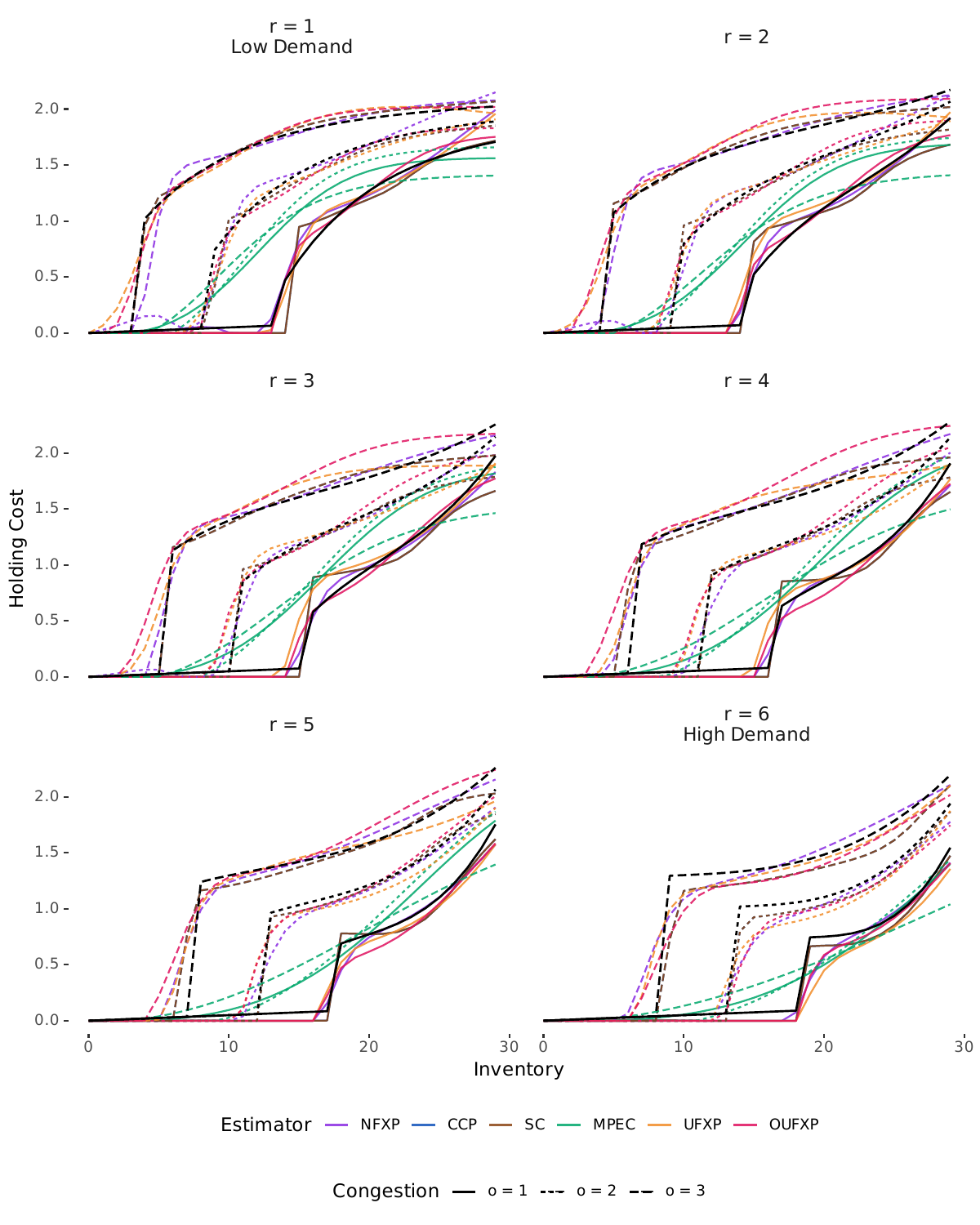}
    \caption{True and estimated holding cost functions for the 540-state model. The black lines depict the true holding cost function, $h(r, o, i)$, defined in Appendix Section \ref{app:sim_model}. The colored lines plot the best estimate, $\hat{h}$, produced by each estimator. We define an estimator's best estimate as the minimizer of $\vertii{\hat{h} - h}_{1}$ across optimization runs. The six panels represent demand states $r \in \{1, \dots, 6\}$. Within each panel, line types denote the congestion state $o \in \{1, 2, 3\}$.}
    \label{fig: holding}
\end{figure}

The store's holding cost function depends on three state variables: $r \in \{1, 2, 3, 4, 5, 6\}$ indexes the expectation of the current daily demand, $o \in \{1, 2, 3\}$ indexes the amount of other-product inventory held, and $i \in \{0, \cdots, 29\}$ records the number of units of the focal product held. The holding cost combines a baseline linear cost, a step penalty, and a weighted combination of concave and convex tails (see Appendix Section \ref{app:sim_model} for further details).

Figure~\ref{fig: holding} depicts holding cost $h(r, o ,i)$ (and our estimates of it, which we will discuss later). To be clear, this function is ad hoc: there is no economic rationale for this particular functional form. However, we believe this function represents roughly the correct level of complexity, as many operational factors could kink and twist a store's holding cost function: e.g., the focal-product holding cost could increase with the other-product inventory level, due to crowding in the backroom; the holding cost could increase with the demand level, as more popular products are allocated more valuable real estate in the back room; and the holding cost could be discontinuous, as overflow inventory is stored elsewhere, once the designated shelf is full. 

Importantly, the holding cost function is not only irregular, but is \emph{irregular in three dimensions}. This makes holding costs difficult to capture with a linear basis (e.g., splines perform poorly in higher dimensions). We believe most true, organic utilities would exhibit such multi-dimensional idiosyncrasies---a belief that motivates our embrace of neural networks.

The holding cost is just one part in a four-part utility function:
\begin{align}
	u_q(r, o, i) + e_{q} & \equiv - h(r, o ,i) - \eta \E((d-i)^{+} | r) - \kappa \indicator{q>0} + e_{q}. \nonumber
\end{align}
The second term in the expression above is the cost of unfulfilled demand, the third term is the cost of shipping, and the final term is a statistical error term. More specifically, $\eta$ is the loss associated with one unsatisfied customer, $\kappa$ is the cost of receiving one shipment of inventory, $d$ is the daily customer demand, $q$ is the amount of inventory the store orders from its supplier, and $e_{q}$ is an independent Gumbel shock associated with order quantity $q$. The utility function above is analogous to the one presented in Section \ref{sec:sim_emp}, which we develop in more detail.

The action variable is order quantity $q \in \mathbb{Q} \equiv \{0, 6, 12, 18\}$, which the store manager chooses with the aim of maximizing expected discounted utility, under discount factor $\beta = 0.9997$ (which corresponds to an annual discount factor of $0.9997^{365} = 0.896$).\footnote{Fixing this parameter is not necessary, as we could estimate it with the procedure outlined in Appendix Section \ref{subsec: df})} Finally, our model has the following transition kernel
\begin{align*}
	f_{q}(r', o', i'|r, o, i) & \equiv f^r(r'|r, i)f^o(o'|o)f_{q}^i(i'|r, i).
\end{align*}

In addition to $h$, we formally define $f^r$, $f^o$, and $f_{q}^i$ in Appendix Section \ref{app:sim_model}. The model we have described has $6 \cdot 3 \cdot 30 = 540$ distinct states. To gauge how estimation times vary with the size of the state space, we also consider a version of this model with $6 \cdot 3 \cdot 300 = 5400$ states (see Appendix Section \ref{app:sim_model}). 

\subsection{Estimation Details} \label{s:ed}

For a given version of our model, we begin by solving the agent's optimal policy, and using that to simulate a sample with 1,073,679 observations (the size of the panel we will use in Section~\ref{sec:sim_emp}). Next, we use this random sample to pre-estimate CCPs and transition probabilities (see Appendix Section \ref{app:sim_model} for estimation details). And finally, we estimate $\eta$, $\kappa$, and $h$ from our simulated sample, $\hat{P}$, $\hat{f}_{q}$, and $\beta$. 

Note, we treat $h$ as a model primitive, rather than the parameters that define it. We estimate this function with a multi-layer perceptron that receives six features---$r$, $o$, $i$, $ro$, $ri$, $oi$---and returns a corresponding holding cost. Our neural network has one output layer of width 1 and $\gamma_{1}$ hidden layers of width $\gamma_{2}$. We try three configurations: a wide architecture with $\gamma = (2, 6)$, a balanced architecture with $\gamma = (4, 4)$, and a deep architecture with $\gamma = (6, 3)$. The wide network comprises 91 parameters (78 weights and 13 biases), the balanced network comprises 93 parameters (76 weights and 17 biases), and the deep network comprises 85 parameters (66 weights and 19 biases). For each architecture, we consider two possible activation functions, ReLU and softplus, for a total of six neural network specifications. Finally, we apply an output transformation that enforces nonnegativity and makes the function evaluate to zero when $i = 0$. More specifically, we set $h(r, o, i) = \text{ReLU}(n(r, o, i) - n(r, o, 0))$, where $n(r, o, i)$ is the output of the neural network.  

Overall, specifying the neural network entails several design choices: the architecture, the activation function, the set of inputs (e.g., interactions $ro$, $ri$, and $oi$ rather than quadratics $r^{2}$, $o^{2}$, and $i^{2}$), and the final transformation function (e.g., the imposition of economic restrictions $h(r,o,0) = 0$ and $h(r,o,i) \ge 0$). Neural networks are highly customizable.

We estimate $\eta$, $\kappa$, and $h$ with the UFXP, OUFXP, NFXP, CCP, and SC estimators presented in Section \ref{sec: allEstimators}, as well as the mathematical program with equilibrium constraints (MPEC) estimator of \cites{mpec}.\footnote{The MPEC estimator avoids nested fixed points by recasting Bellman's problem from a subroutine required to evaluate the likelihood function into a constraint that need only hold at the optimum:
\begin{align*}
    (\hat{\theta}_{\text{MPEC}}, V^{\hat{\theta}_{\text{MPEC}}}) & \equiv \argmax_{\theta \in \Theta, V \in \mathbb{R}^{X}} \mathcal{L}_{\text{MPEC}}^{\theta, V}\\
     & \mathrm{s.t.} \quad V \quad \text{satisfies} \quad 
     \eqref{eq:bellman3}, \\
     \text{where} \quad \mathcal{L}_{\text{MPEC}}^{\theta, V} & \equiv \sum_{a\in\mathbb{A}} n_{a}' \log(P_{Va}^{\theta}).
\end{align*}
Whereas NFXP enforces \eqref{eq:bellman3} each time the likelihood is evaluated, MPEC enforces it only for the final parameters. Since it computes only one Bellman equation, this formulation seems easier to solve, but that is not necessarily the case, as \cite{comment_mpec} demonstrated.}
NFXP, CCP, SC, and MPEC are general-purpose estimators, capable of estimating any model that conforms to the framework presented in Section \ref{subsec: ddc}. These four are therefore natural benchmarks for our estimators, which are also general purpose. We do not compare UFXP and OUFXP to specialized estimators that leverage structural properties of the dynamic program.\footnote{
    Examples of such special-purpose estimators are those that leverage \emph{finite dependence}, which a dynamic program exhibits if there exists a policy under which the distribution of the state variables after a set number of transitions is independent of the current action \citep{Hotz1994, Arcidiacono2011}. Another example is the Euler equation approach of \cite{Aguirregabiria2013}, which replaces the inversion of $I-\beta F_{\hat P}$ with local first-order conditions that offer computational advantages only when the transition kernel exhibits strong local structure (such as sparse or low-dimensional action-dependent dynamics) so that the set of states that are reachable within two steps is small. Similarly, the pseudo-value function estimator of \cite{Dearing2019} is computationally advantageous only if some policy $P$ arranges matrix $I - \beta F_{P}$ in a form that is easy to invert, such as triangular, banded, block-diagonal, or arrow. Other estimators, such as MPEC, exploit sparsity in the state transition matrix.
} 
The model we estimate does not exhibit any exploitable special structure (while common in single-dimensional problems, such structure is rare in higher dimensions). 

We maximize $\mathcal{L}_{\text{NFXP}}^{\theta}$, $\mathcal{L}_{\text{CCP}}^{\theta}$, $\mathcal{L}_{\text{SC}}^{\theta}$, $-\mathcal{Q}_{Z}^{\theta}$, and $-\mathcal{Q}_{Z^{\hat{\theta}_{\text{UFXP}}}}^{\theta}$ with four optimizers: Knitro Interior-Point Direct barrier method \citep{knitro}, Adam (Adaptive Moment Estimation) adaptive gradient descent method \citep{adam}, L-BFGS (Limited-memory Broyden–Fletcher–Goldfarb–Shanno) quasi-Newton method \citep{lbfgs}, and Trust-region Newton method \citep{Nocedal2006Numerical}. (Following \cite{mpec}, we maximize $\mathcal{L}_{\text{MPEC}}^{\theta, V}$ with Knitro only, as the other three optimizers have difficulty with MPEC's non-linear constraint). 

We provide all four optimizers with analytical objectives and gradients, and additionally provide Trust and Knitro with analytical Hessians. We compute the NFXP, CCP, and SC gradients and Hessians with the dual fixed point representations facilitated by Proposition \ref{prop_wV} (see Section \ref{sec: dual}). Doing so reduces the number of fixed points solved by a factor of at least $t/2 \ge 85/2$. Consequently, we implement these benchmark estimators more efficiently than usual.

Since the estimators' objective functions are not concave, multiple starts will be required to achieve global optimality (or close to it). For all estimations of the 540-state model, we use 100 random starting points. We also use 100 random starting points for the 5400-state model; however, since this model is much more difficult to estimate, we limit our focus in this case to the L-BFGS and Trust optimizers, and to the balanced-architecture, softplus-activation network.\footnote{We exclude MPEC from the 5400-state experiments because it is incompatible with these optimizers, and casual experimentation suggests it performs as poorly with the 5400-state model as it does with the 540-state model.}

Following standard neural network training practices, we draw our initial neural network weights from a Kaiming uniform distribution \citep{kaiming15}, draw our neural network biases from a uniform distribution with bounds defined by the inverse square root of the number of inputs of each respective layer, and draw $\eta$ and $\kappa$ from a Uniform(0, 10) distribution. Finally, for MPEC, we draw the initial elements of $V$ independently from the standard normal distribution.

For the UFXP estimator's weighting matrices, $Z_1, \dots, Z_m$, we set $m = 100$. To give more weight to the more precisely estimated elements of $\hat{P}$, we draw the $(x,q)$th element of matrix $Z_i$ from an independent normal distribution with mean 0 and variance $\tfrac{n_{x,q}n_{x,0}}{n_{x,q} + n_{x,0}}$, where $n_{x,q}$ denotes the number of times action $q$ is observed in state $x$. For the OUFXP estimator, we construct weighting matrix $Z^{\hat{\theta}_{\text{UFXP}}}$ from the best first-stage UFXP estimate for each neural network configuration. Further, we warm-start each OUFXP optimization from its respective UFXP estimate rather than from random starting points.

We execute each estimation as an independent, single-threaded job allocated to a single CPU core on Northwestern University's Quest HPC cluster, using nodes equipped with Intel Xeon processors. We allocated 4 GB of RAM for the 540-state model and 64 GB for the 5400-state model. To ensure a fair comparison of computational time, we submitted the jobs in parallel and executed them in a randomized order across all combinations of estimators, neural network configurations, optimizers, and random seeds for initializations. Finally, the computer cluster subjected every estimation run to a seven-day time limit, from the start of the job.

\subsection{Results} \label{s:numericalSims}

We execute 12600 optimization runs for the 540-state model: 2400 runs (100 starting values times six neural network configurations times four optimizers) for each of NFXP, CCP, SC, UFXP, and OUFXP, and 600 runs (100 starting values times six neural network configurations times one optimizer) for MPEC. And we execute 1000 runs for the 5400-state model: 200 runs (100 starting values times one neural network configuration times two optimizers) for each of NFXP, CCP, SC, UFXP, and OUFXP. Only UFXP and OUFXP successfully return estimates for all optimization runs across both models. NFXP, CCP, and SC failed to converge within the seven-day time limit for \timeoutNfxpSmall, \timeoutCcpSmall, and \timeoutScSmall\ of the 540-state runs, respectively, and for \timeoutNfxpLarge, \timeoutCcpLarge, and \timeoutScLarge\ of the 5400-state runs. While all 600 MPEC runs for the 540-state model finished within the time limit, \mpecInfeasibleSmall\ of them failed to converge to a feasible solution satisfying \eqref{eq:bellman3}.

Figure \ref{fig: holding} demonstrates that our neural network approximations can accurately recover the retailer's complex holding cost function. The plots compare the true $h$ for the 540-state model with the best estimate, $\hat{h}$, from all successful runs of each estimator. We define ``best" as the estimate that minimizes $\vertii{\hat{h} - h}_{1}$, where $\vertii{\cdot}_{1}$ denotes the $\ell_1$ norm. The functional $R^{2}$ that treats each state as one observation indicates that the best NFXP, CCP, SC, UFXP, and OUFXP neural network approximations explain \rSqNfxpSmall, \rSqCcpSmall, \rSqScSmall, \rSqUfxpSmall, and \rSqOufxpSmall\ of the variation in $h$ for the 540-state model, and \rSqNfxpLarge, \rSqCcpLarge, \rSqScLarge, \rSqUfxpLarge, and \rSqOufxpLarge\ for the 5400-state model, respectively.\footnote{
The seven-day computational limit may have lowered the $R^2$ values for the NFXP and CCP estimators of the 5400-state model, as optimization runs that explored the parameter space more thoroughly were more likely to time out.
} In contrast, the best MPEC approximation for the 540-state model explains only \rSqMpecSmall\ of the variation in $h$. Hence, out of 600 MPEC runs, not a single one yielded a reasonable estimate of $h$.

\begin{figure}[tp] 
    \centering
    \includegraphics[width=0.9\textwidth]{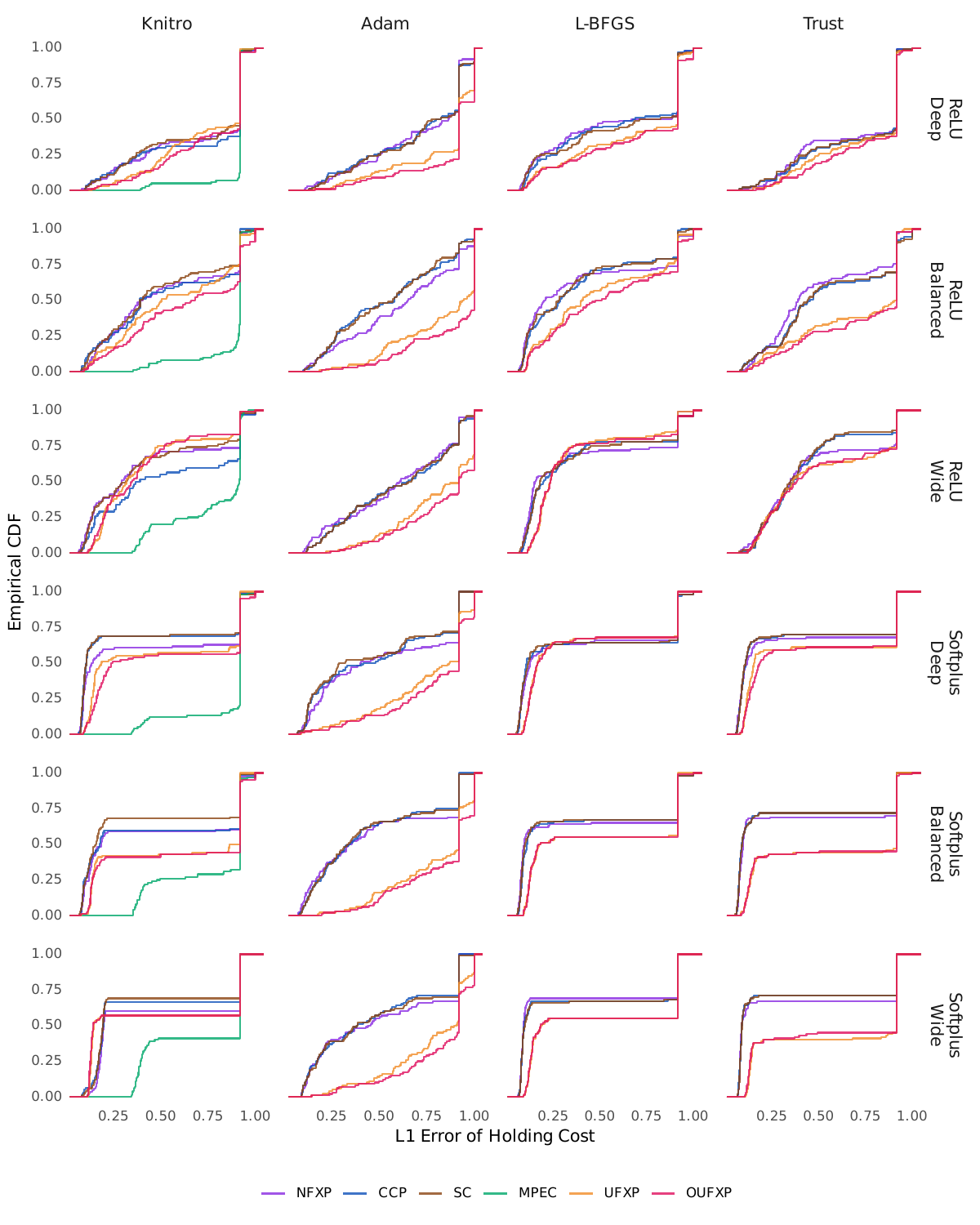}
    \caption{Empirical cumulative distribution functions (CDFs) of the holding cost estimation errors, $\vertii{\hat{h} - h}_{1}$, for the 540-state model. Each panel corresponds to the 100 random initializations that correspond to a given optimizer and neural network configuration. Line colors distinguish the estimators. The horizontal axis is capped at 1.0 for visual clarity.}
    \label{fig:holding_L1}
\end{figure}

Figure \ref{fig:holding_L1} plots the empirical CDF of the 540-state model's holding cost error across the 100 $\hat{\theta}$ initializations, for each neural network, optimizer, and estimator. These empirical CDFs demonstrate that we cannot attribute MPEC's poor performance to Knitro, as that optimizer performs well for the other five estimators. The MPEC objective---with its 540 additional parameters and constraints---simply seems harder to maximize. However, the MPEC estimator is not the only one that yields poor estimates: \badNfxpSmall, \badCcpSmall, \badScSmall, \badUfxpSmall, and \badOufxpSmall\ of NFXP, CCP, SC, UFXP, and OUFXP estimates are worse than the poor MPEC estimate depicted in Figure \ref{fig: holding}. 

This performance inconsistency underscores the need for multiple optimization starts. To make this point clear, we will define an estimate as ``inadequate'' if its holding cost error, $\vertii{\hat{h} - h}_{1}$, is 50\% larger than the smallest error for the given (neural network, optimizer, estimator) triple. For example, the minimum (Softplus Balanced, L-BFGS, NFXP) error is \exMinMaeSmall, so the (Softplus Balanced, L-BFGS, NFXP) estimates for which $\vertii{\hat{h} - h}_{1} \ge 1.5 \cdot \exMinMaeSmall = \exThreshSmall$ are inadequate. Overall, \inadeqNfxpSmall, \inadeqCcpSmall, \inadeqScSmall, \inadeqUfxpSmall\, and \inadeqOufxpSmall\ of NFXP, CCP, SC, UFXP, and OUFXP estimates are inadequate. Accordingly, ensuring less than a 1\% probability of reporting an inadequate estimate would require running the optimizer from at least $\lceil\log(0.01)/\log(\rawCcpSmall)\rceil =\startsCcpSmall$ starting points. 

\begin{table}[htbp]
  \centering
  \caption{Performance Ratios: Average Estimation Time of Benchmark Estimators Divided by Average Estimation Time of UFXP.}
  \label{tab:paired_speedup_combined}
  \tableSpeedupCombinedNew
  \vspace{1.5ex}
  \begin{minipage}{\textwidth}
    \footnotesize
    \textit{Note:} Values report the average estimation times of the benchmark estimators, divided by the average estimation time of UFXP, both when the optimizers are initialized with a single random start ($R=1$) and multiple optimization starts ($R=21$). These ratios combine three statistics: the Average Benchmark Estimation Time, the UFXP Fixed Point Time, and the Average UFXP Optimization Time. The Average Benchmark Estimation Time is the total time required for benchmark estimators (NFXP, CCP, SC, and MPEC) to obtain the parameters $\hat{\theta}$, truncated to a maximum of seven days, and averaged across 100 optimization starts. The UFXP Fixed Point Time is the total time required to solve for the dual variables $\lambda_1, \dots, \lambda_m$, which is done only once prior to optimization and shared across all starts. The Average UFXP Optimization Time is the total time required to obtain the parameters, $\hat{\theta}_{\text{UFXP}}$, given $\lambda_1, \dots, \lambda_m$, averaged across 100 optimization starts. No truncation was needed for either of the UFXP time components, as all completed within one week. Finally, we tabulate: $\cfrac{R \cdot \text{Average Benchmark Estimation Time}}{\text{UFXP Fixed Point Time} + R \cdot \text{Average UFXP Optimization Time}}$.
    \end{minipage}
\end{table}

Our unnested estimators are relatively more competitive when we initialize the optimizer with $R = 21$ starting points than when we initialize them with $R = 1$ starting point, as Table \ref{tab:paired_speedup_combined} demonstrates. This table reports the ratios of the average estimation times of NFXP, CCP, SC, and MPEC divided by the average estimation times of UFXP\footnote{See Table \ref{tab:oufxp_paired_speedup_combined} in Appendix for an OUFXP analog.} (recall that we truncate all estimation times to seven days, a transformation that works against us, since no UFXP run lasted longer than a week). Increasing $R$ makes UFXP relatively more efficient, as it enables the estimator to amortize the time spent solving dual-variable fixed points across more optimization starts. For example, for the (Softplus Balanced, L-BFGS, 5400-state) case, computing $\lambda_1, \dots , \lambda_m$ requires \ufxpFpTimeLarge\ seconds, whereas maximizing $-\mathcal{Q}_{Z}^{\theta}$ requires an average of \ufxpTimeSpBalLbfgsLarge\ seconds. Accordingly, the average estimation time when we initialize UFXP with $R$ starts is $\ufxpFpTimeLarge + \ufxpTimeSpBalLbfgsLarge R$. In contrast, the corresponding average estimation times when we initialize NFXP, CCP, and SC with $R$ starts are $\nfxpTimeSpBalLbfgsLarge R$, $\ccpTimeSpBalLbfgsLarge R$, and $\scTimeSpBalLbfgsLarge R$. Note that whereas the average UFXP time is \emph{affine} in $R$, the average NFXP, CCP, SC, and MPEC times are \emph{linear} in $R$, since each optimization run solves an independent set of fixed points.

But even without this multi-start advantage, the unnested estimator is still markedly more efficient: across all $R = 1$ cases, UFXP is between \minMeanRatioNfxpROne\ and \maxMeanRatioNfxpROne\ times faster than NFXP, between \minMeanRatioCcpROne\ and \maxMeanRatioCcpROne\ times faster than CCP, between \minMeanRatioScROne\ and \maxMeanRatioScROne\ times faster than SC, and between \minMeanRatioMpecROne\ and \maxMeanRatioMpecROne\ times faster than MPEC. And increasing the number of optimization starts accentuates this advantage: across all $R = 21$ cases, UFXP is between \minMeanRatioNfxpRTwentyOne\ and \maxMeanRatioNfxpRTwentyOne\ times faster than NFXP, between \minMeanRatioCcpRTwentyOne\ and \maxMeanRatioCcpRTwentyOne\ times faster than CCP, between \minMeanRatioScRTwentyOne\ and \maxMeanRatioScRTwentyOne\ times faster than SC, and between \minMeanRatioMpecRTwentyOne\ and \maxMeanRatioMpecRTwentyOne\ times faster than MPEC. (The MPEC times are moot, since that estimator did not yield sensible estimates.)

\begin{table}[htbp]
\centering
\caption{Fixed Point Counts: Span and Workload by Estimator.}
\label{tab:fp_counts}
\tableFpCountsHierarchical
\vspace{1.5ex}
\begin{minipage}{\textwidth}
    \textit{Note:} Workload measures the number of fixed points solved per optimization run. Span measures the number of fixed points that must be solved sequentially. All reported values are averages over 100 starts. For benchmark estimators using Trust or Knitro, the workload exceeds the span because analytical Hessians require $t$ additional fixed points, which can be solved in parallel. OUFXP counts include both the UFXP counts and two additional parallelizable fixed point computations required to build the optimal weights and the corresponding dual fixed points, whose sizes depend on the neural network. MPEC is excluded as it does not solve fixed points. We also exclude counts from any runs that failed to converge within the seven-day time limit. Because all UFXP and OUFXP runs successfully converged, these timeouts exclusively affected the benchmark estimators. Consequently, this exclusion biases the benchmark averages downward, as the runs with higher fixed point counts are more likely to time out.
\end{minipage}
\end{table}

Table \ref{tab:fp_counts} compares the estimators with two other performance metrics: the \emph{workload} and the \emph{span}. The workload is the average number of fixed points solved per optimization run. For example, UFXP has a workload of $m = 100$, since it solves $\lambda_{1}, \cdots, \lambda_{m}$ upfront and recycles these values across all $R$ jobs. The OUFXP workload is between 270 and 286, since the number of fixed points associated with $\tfrac{\partial}{\partial \theta}V^{\theta}_{\hat P}$ and with the second-stage dual variables varies with the neural network configuration.\footnote{
    The OUFXP workload calculation is predicated on solving the second-stage fixed points for only the best of the first-stage estimates.
} 
In contrast, the NFXP workload is between \minWorkNfxpSmall\ and \maxWorkNfxpSmall,\footnote{
    This actually understates the computational burden of NFXP. Following \cites{nfxp_manual} manual, we implement a ``polygorithm'' that combines value iterations (contraction mappings) and policy iterations (Newton-Kantorovich steps), only the latter of which comprise fixed point equations that contribute to our tables. For example, the average (Softplus Balanced, L-BFGS) NFXP run not only sequentially solves \workLbfgsSoftplusBalancedNFXPSmall\ fixed point equations, but also sequentially implements an additional \contrLbfgsSoftplusBalancedNFXPSmall\ Bellman contractions. 
} and the CCP workload is between \minWorkCcpSmall\ and \maxWorkCcpSmall. (And these workloads would have been much larger had we not used the dual-fixed point technique outlined in Section \ref{sec: dual} to streamline the corresponding gradient and Hessian computations.)

The second metric reported in Table \ref{tab:fp_counts} reflects the fact that equations that can be solved in parallel are less burdensome than those that must be solved in sequence. Specifically, the span records the average length of an optimization run's computational critical path, under maximal parallelization---that is, the expected number of batches of fixed point equations that must be solved in sequence. For example, UFXP always has a span of 1, since $\lambda_{1},\cdots,\lambda_{m}$ can be solved in parallel. In contrast, OUFXP always has a span of 3, because it solves the fixed points that define (i) the first-stage dual variables, (ii) the Jacobian matrix $\tfrac{\partial}{\partial \theta}V^{\theta}_{\hat P}$ (which influences $Z^{\hat \theta_{\text{UFXP}}}$), and (iii) the second-stage dual variables. The fixed points in each of these three stages can be parallelized, but one stage can't begin until the preceding stage has ended, so switching from UFXP to OUFXP triples the span. However, switching from OUFXP to NFXP or CCP increases the span by more than \emph{three orders of magnitude}.

\section{Empirical Application}\label{sec:sim_emp}

\subsection{Model}

We now present an empirical model that we believe would be intractable to estimate without our approach. Our model is an extension of the toy example studied in Section \ref{sec: exp}. Whereas we previously estimated a single-stage inventory model, we will now estimate a multi-echelon supply chain model. And whereas we previously had three state variables, we will now have six (the three from before, plus in-transit inventory, upstream inventory, and upstream demand). 

We will estimate our model with a 1,073,679-observation panel of daily data, spanning from February~2,~2011 to December~31,~2014. The sample describes the movement of 38 laundry detergent products through a supply chain that comprises one upstream distribution center (DC) and 67 downstream stores. (We focus on detergents, since their bulk makes them particularly difficult to store.) For each store--product--day, we observe: (i) retail prices, wholesale prices, and discounts, (ii) customer sales, (iii) the amount of inventory at the store, at the DC, and in the pipeline in between, and (iv) the store's inventory order, and the size and timing of the corresponding shipment from the DC. 

A large Shanghai retailer owns all stores in our sample, but each store is overseen by a manager whose performance bonuses and promotion prospects depend on the functioning of their particular store. The managers, therefore, prioritize local store performance over global chain performance. For example, managers will generally squabble over scarce upstream inventory \citep{ration_gaming}.

As before, the action variable is order quantity, $q \in \mathbb{Q} \equiv \{0, 6, 12, 20\}$, which the store manager chooses with the aim of maximizing expected discounted utility, under discount factor $\beta = 0.9997$. Each possible order quantity corresponds to an idiosyncratic utility shock, which we collect in a vector of independent Gumbels, $e \in \mathbb{R}^{|\mathbb{Q}|}$. Unobserved state vector $e$ is accompanied by observed state vector $x\equiv (r,o,i,j,k,\ell)$. The first three observed state variables are analogous to their Section \ref{sec: exp} counterparts: $r \in \{1, 2, 3, 4, 5\}$ indexes the mean demand for the focal detergent at the given store, $o \in  \{1, 2, 3\}$ indexes the aggregate inventory of all but the focal detergent at the given store, and $i \in \mathbb{I}$ denotes the inventory level of the focal detergent at the given store, where $|\mathbb{I}| = 100$ and $\max \mathbb{I} = 149$.\footnote{
Capping $i$ at 149 truncates this variable’s top 10\%. We perform this truncation because the right tail of $i$ spans an extremely large range. Indeed, we believe modeling the tail behavior of $i$ would likely require an entirely different approach. Thus, rather than estimate $h$ over the full domain of $i$, we more modestly estimate this function for non-tail values of $i$. To mitigate distortions associated with this truncation, we set the rows of $Z_1, \dots, Z_m$ that correspond to $i > \max \mathbb{I} - \max \mathbb{Q} = 149 - 20 = 129$ to zero. In other words, we assign zero empirical weight to observations for which the agent could order inventory beyond 149 units.
\label{footnote_i}
}

The final three state variables describe the supply chain for the focal product: $j \in \mathbb{Q}$ specifies the amount of inventory in transit between the DC and the given store, $k \in \{1, 2, 3\}$ indexes the amount of inventory held at the distribution center, and $\ell \in \{1,2,3\}$ indexes the mean demand faced by the DC (i.e., the expected sum of store orders). Upstream demand $\ell$ is relevant because it is predictive of future upstream inventory: the DC roughly follows an $(s, S)$ inventory model, with demand state $\ell$ determining the slope of the line from $S$ to $s$. Upstream inventory $k$ is relevant because the DC is less likely to fulfill an order when its stocks run low \citep[p.~456]{ration_gaming}: e.g., the DC fulfills 91\% of orders when its inventory level is above the first decile but only 34\% when its inventory level is below the first decile. And pipeline inventory $j$ is relevant because there is a lag between an order's placement and its fulfillment (unlike in the toy model): conditional on fulfillment, orders arrive after one day with probability 0.13, after two days with probability 0.67, and after three or more days with probability 0.20. To simplify our model dynamics, we suppose that all fulfillments take two days, so that today's $j$ reports the inventory the store will receive tomorrow in response to the order it placed yesterday. 

The total utility from the focal product in the given day is 
\begin{align*}
	\tilde{u}_q(r, o, i, j, d) + e_{q} & \equiv - h(r, o ,i) + \eta \min(d, i)  - \kappa \indicator{j>0} + e_{q},
\end{align*}
where $d$ is the daily customer demand, $e_{q}$ is an independent Gumbel shock associated with order quantity $q$, $\eta$ is the value of a sale, $h(r, o ,i)$ is the inventory holding cost, and $\kappa$ is the cost of receiving a shipment of inventory. Unfortunately, this utility does not fit into the framework presented in Section \ref{subsec: ddc}, as it depends on demand $d$. However, we can integrate over this random variable to create the following equivalent utility function:
\begin{align}
	u_q(r, o, i, j) + e_{q} & \equiv - h(r, o ,i) - \eta \E((d-i)^{+} | r) - \kappa \indicator{j>0} + e_{q}. \nonumber
\end{align}
We can replace $\tilde{u}$ with $u$, because maximizing the expected discounted sum of $\tilde{u}_q(r, o, i, j, d) + e_{q}$ yields the same policy as maximizing the expected discounted sum of $u_q(r, o, i, j) + e_{q}$. 

Utility function $u$ is the same as that used in Section \ref{sec: exp}, except the shipping cost depends on $j$ rather than $q$, since not all orders translate into shipments. As before, the object of interest is holding cost function $h$, which we will model with neural networks. Note that state variables $k$ and $\ell$ do not influence current utilities, but they do influence future utilities, via the following transition kernel:
\begin{align*}
    f_{q}(x'|x) \equiv f^r(r'|r)f^o(o'|o)f^i(i'|r, i, j)f^j(j'|k, q)f^k(k'|k,\ell)f^\ell(\ell'|\ell).
\end{align*}
As you see, today's $\ell$ influences tomorrow's $k$, today's $k$ influences tomorrow's $j$, and today's $j$ influences tomorrow's $i$. 

Transition kernels $f^r$, $f^o$, $f^k$, $f^\ell$ are general, but transition kernels $f^i$ and $f^j$ reflect the following laws of motion:
\begin{align*}
    i' & = \min(\max \mathbb{I},(i - d)^{+} + j), \\
    \text{and} \quad j' & = bq,
\end{align*}
where random demand $d$ resolves from general distribution $f^{d}(d | r)$ and order fulfillment indicator variable $b$ resolves from a Bernoulli distribution $f^{b}$ whose mean depends on $k$.

\subsection{Estimation Details}

Our sample does not include mean demands, so we begin by pre-estimating the expected inventory draw-downs of the DC and each store, by product and day. 

We estimate the mean downstream demand using a neural network: First, we input a rich set of demand features ($y$) into a multilayer perceptron with two hidden layers of widths 64 and 32, and an output layer of width 1. Second, we exponentiate the neural network's output and treat the result as the mean of a Poisson distribution to evaluate the empirical likelihood of the observed sales. Specifically, we model the reduced-form probability of observing sales quantity $s$ given features $y$ as $\exp(m(y))^s \exp(-\exp(m(y))) / s!$, where  $m(y)$ is the network's output for feature vector $y$. We train this model by maximizing the corresponding Poisson log-likelihood, $\sum_{y} \sum_{s} n_{ys} \log(\exp(m(y))^s \exp(-\exp(m(y))) / s!)$, where $n_{ys}$ is the number of observations with feature set $y$ and sales $s$. Finally, we set the expected mean demands to the corresponding fitted values, $\exp(m(y))$, and then map each mean demand to a demand index $r \in \{1, 2, 3, 4, 5\}$ (see Appendix Section \ref{subsec: det_est_details} for details).

We estimate the DC's mean demand analogously, but for this case we start with the upstream demand, which is the sum of store orders, and end with upstream demand state index $\ell \in \{1,2,3\}$.

In addition to pre-estimating demand means, to construct $r$ and $\ell$, we must also pre-estimate CCPs and state transition probabilities. For CCPs, we have too many states to use a simple frequency estimator, so we again use a neural network: First, we input the state vectors $x \in \mathbb{X}$ into a multilayer perceptron with two hidden layers of widths 64 and 32, and an output layer of width 4 (since $|\mathbb{Q}| = 4$). Second, we apply a softmax function to the neural network's output, and use the result to evaluate the empirical log-likelihood of the observed order quantities. Specifically, we model the reduced-form probability of ordering $q$ in state $x$ as $\exp(m_q(x)) / \sum_{q' \in \mathbb{Q}} \exp(m_{q'}(x))$, where $m_q(x)$ is the $q$th element of the network's output vector at state $x$. We train this model by maximizing the corresponding log-likelihood, $\sum_{x \in \mathbb{X}} \sum_{q \in \mathbb{Q}} n_{xq} \log(\exp(m_q(x)) / \sum_{q' \in \mathbb{Q}} \exp(m_{q'}(x)))$, where $n_{xq}$ is the number of state-$x$ and order-$q$ observations. Finally, we set $\hat{P}$ to the corresponding fitted values. 

To be clear: we do \emph{not} need our unnested estimators to accommodate these incidental networks, which are trained in an offline pre-estimation stage. 

After pre-estimating CCPs, we then pre-estimate state transition probabilities. We estimate Poisson arrival rates for $f^d$ and Bernoulli probabilities for $f^{b}$, then combine these with the corresponding laws of motion to estimate $f^{i}$ and $f^{j}$ on the discretized state space (see Appendix Section \ref{subsec: det_est_details} for details). We estimate the other transition kernels, $f^r$, $f^o$, $f^k$, and $f^{\ell}$ nonparametrically with their empirical distributions. For example, upstream inventory $k$ gradually decreases, as the DC sends shipments to stores, and then jumps up, when the DC receives a shipment from its supplier; rather than separately estimate the DC's inbound and outbound shipments, we estimate the aggregate inventory adjustment---i.e., the final next-period inventory level, $k'$, given the current-period inventory level, $k$, and demand rate, $\ell$. As before, we combine estimates $\hat{f}^r$, $\hat{f}^o$, $\hat{f}^k$, $\hat{f}^\ell$, $\hat{f}^d$, and $\hat{f}^{b}$ into a corresponding $\hat{f}_{q}$ estimate.

After estimating the CCPs and the transition kernel, we are ready to estimate $\theta$. We estimate these structural parameters with UFXP. We decompose $h$ into a base inventory cost and a multiplicative congestion penalty: $h(r, o, i) = h^{i}(r, i) (1 + h^{o}(r, o))$, where functions $h^{i}(\cdot)$ and $h^{o}(\cdot)$ stem from separate neural networks, which together comprise 77 parameters (60 weights and 17 biases).

We set $h^{i}(r, i) \equiv \sum_{j = 1}^{i} w_j n(r, j)^2$, where $n(r, j)$ is the output of a multilayer perceptron that uses a single hidden layer of width 11 to map $r$, $i$, and interaction $ri$ to a scalar output, and $w_j$ is the fraction of discrete inventory levels that correspond to inventory state $j$. For example, $w_{80} = 2 / 150$: the denominator reflects the 150 possible inventory levels (0–149 units) and the numerator reflect the two levels (80 and 81 units) that map to state $i = 80$. By design $h^{i}(r, i)$ is non-decreasing in $i$, and is zero when $i$ is zero.

We set $h^{o}(r, o) \equiv \sum_{k = 2}^{o} m(r, k)^2$, where $m(r, k)$ is the output of another multilayer perceptron that uses a single hidden layer of width 4 to map $r$, $o$, and interaction $ro$ to a scalar output. Similarly, by design $h^{o}(r, o)$ is non-decreasing in $o$, and is zero when $o$ is one. 

We initialize our neural networks with 1000 random starting values, following the methodology outlined in Section \ref{s:ed}. We use the Trust optimizer and recycle the same $\lambda_{1},\cdots, \lambda_{m}$ dual fixed points for each of the $R = 1000$ minimizations of $\mathcal{Q}_{Z}^{\theta}$. We construct the $m = 300$ weighting matrices, $Z_1, \dots, Z_m$, by drawing their elements from a standard normal distribution (with the proviso that the rows that correspond to $i > \max\mathbb{I} - \max \mathbb{Q} = 149 - 20 = 129$ are zeroed out, as explained in footnote \ref{footnote_i}).

We use the computer setup described in Section \ref{s:ed}, to make the estimation times roughly comparable to those reported in Section \ref{s:numericalSims}. 

\subsection{Results} \label{s:empRes}

We run the estimation with 1000 random starts. Our unnesting approach makes this multi-start strategy computationally feasible: it takes \detergentFpTimeMins\ minutes to solve dual fixed points $\lambda_{1}, \cdots, \lambda_{m}$, but only \detergentAvgTimeMins\ minutes, on average, to convert them into a set of estimates, $\hat{\theta}_{\text{UFXP}}$. 

\begin{figure}[tp] 
    \centering
    \includegraphics[width=0.9\textwidth]{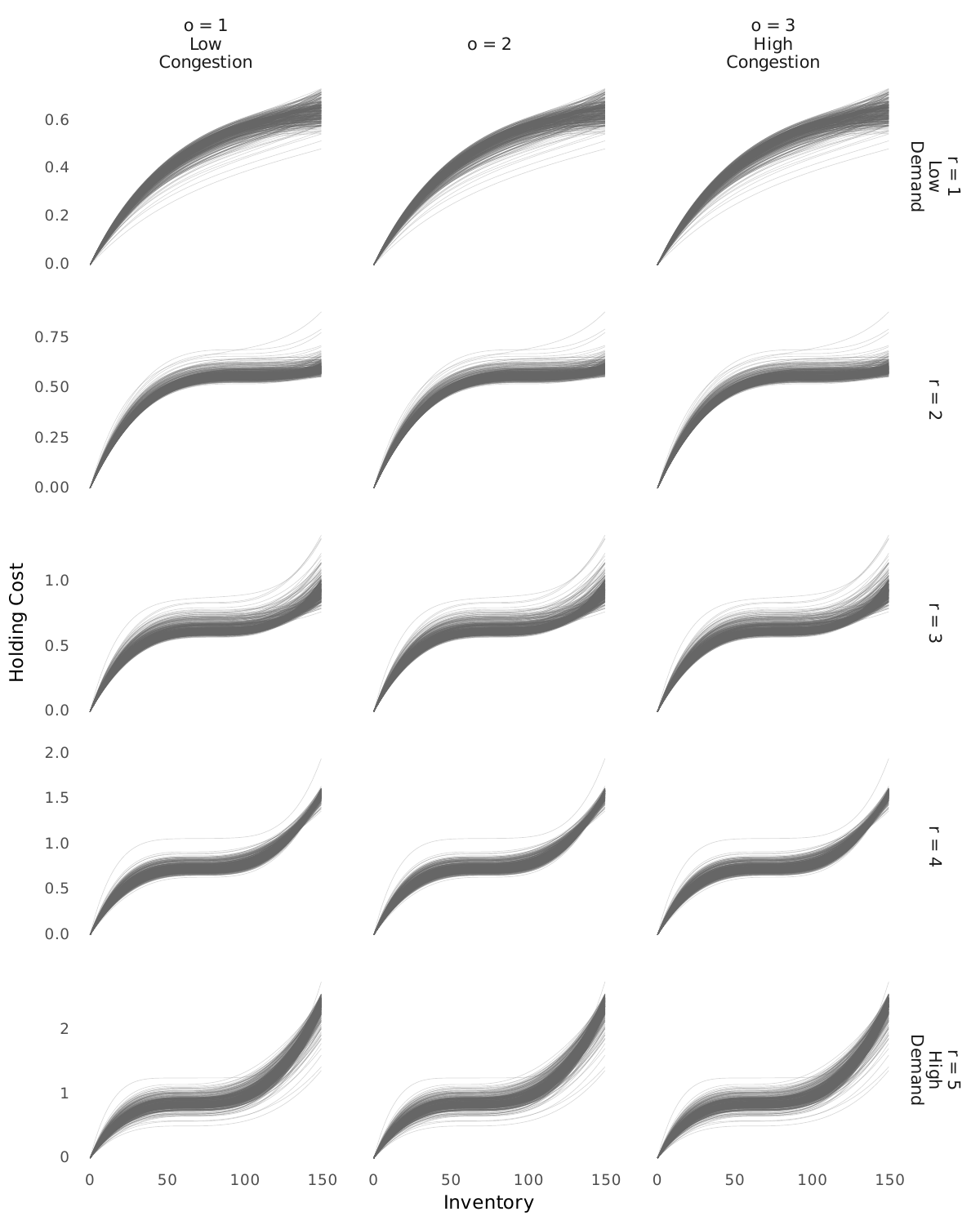}
    \caption{Ensemble of \detergentNearBestCount\ estimated holding cost functions whose UFXP objectives fall within 0.5\% of the minimum attained across 1000 random starts. Each panel depicts $\hat{h}(r,o,i)$ for a fixed demand state, $r$, a fixed congestion state, $o$, and all inventory states $i \in \mathbb{I}$.}
    \label{fig:ensemble}
\end{figure}

Our estimates illustrate the intrinsic strangeness of neural networks. In most traditional settings, the estimator's criterion function will feature a unique global optimum and a smattering of clearly inferior local optima (if any). This is not the case for neural networks. Rather than a single point, neural networks often have a high-dimensional valley of nearly equivalent solutions. And this is precisely what we find: out of our 1000 optimization runs, \detergentNearBestCount\ yield a value that is within 0.5\% of overall minimum $\mathcal{Q}_{Z}^{\hat{\theta}}$. Because optimality encompasses a region rather than a point, an honest assessment of the utility function requires an \emph{ensemble} of estimates, which reinforces the need for many optimization starts.

Across the ensemble, the estimated shortage cost $\hat{\eta}$ averages \detergentEtaMean\ with a standard deviation of \detergentEtaSd, and the fixed ordering cost $\hat{\kappa}$ averages \detergentKappaMean\ with a standard deviation of \detergentKappaSd. We plot the holding cost functions that correspond with our ensemble of \detergentNearBestCount\ near-optimal estimates in Figure \ref{fig:ensemble}. Despite some variation across the ensemble, an overall pattern clearly emerges: congestion state $o$ has a negligible impact on costs, whereas the demand state $r$ significantly alters the shape of $h$. The influence of $r$ is unsurprising because this variable captures structural heterogeneity across stores, with larger stores exhibiting consistently higher $r$ values. 

Overall, the estimated holding cost function proves to be neither convex nor concave, let alone linear or quadratic. This last fact contradicts most of the literature on inventories, which generally imposes linear or quadratic forms. 

\section{Conclusion}

The key to this work is the switch from the primal linear program
\begin{align*}
    \max_{V \in \mathbb{R}^{X}} \quad w'V \quad \text{s.t.} \quad V = U^{\theta}_{\hat{P}} + \beta F_{\hat{P}} V
\end{align*}
to the dual linear program
\begin{align}
    \min_{\lambda \in \mathbb{R}^{X}} \quad \lambda'U^{\theta}_{\hat{P}} \quad \text{s.t.} \quad \lambda = w + \beta F_{\hat{P}}' \lambda. \notag            
\end{align}
In the primal problem, utility parameter $\theta$ enters the constraint; in the dual, it enters the objective. Consequently, the primal solution varies with $\theta$, whereas the dual solution does not. By strong duality, both programs have the same value, implying
\begin{align*}
    w'V_{\hat{P}}^{\theta} = \lambda'U^{\theta}_{\hat{P}}.
\end{align*}
Thus, after solving dual variable vector $\lambda$, we can evaluate $w'V_{\hat{P}}^{\theta}$ for many values of $\theta$ with ease. 

This observation motivates a dynamic discrete choice estimator that depends on value function $V_{\hat{P}}^{\theta}$ only through inner products $w_i' V_{\hat{P}}^{\theta}$, for $i = 1, \dots, m$, where $m$ is not much more than the length of the utility parameter vector, $\theta$. After computing the corresponding dual variables, $\lambda_1, \dots, \lambda_m$, such an estimator would be trivial to evaluate.

Restricting attention to linear functionals of $V_{\hat{P}}^{\theta}$ precludes likelihood-based estimators, which are intrinsically nonlinear. Thus, we base our estimator on the following first-order conditions:
\begin{align*}
	\rho_{a}^{-1}(P^{\theta}) - U_{a}^{\theta} + U_{A}^{\theta} - \beta(F_{a} - F_{A})V^{\theta} = 0.
\end{align*} 
First, we approximate these first-order conditions by replacing $P^{\theta}$ and $V^{\theta}$ with $\hat{P}$ and $V_{\hat{P}}^{\theta}$. And then we minimize the squared Euclidean norm of a random low-dimensional compression of this approximation:
\begin{align*}
	\hat{\theta}_{\text{UFXP}} \equiv \argmin_{\theta \in \Theta} \sum_{i = 1}^{m} \Big(\sum_{a \in \mathbb{A}\setminus \{A\}} Z_{ia}' \big(\rho_{a}^{-1}(\hat{P}) - U_{a}^{\theta} + U_{A}^{\theta} - \beta(F_{a} - F_{A})V_{\hat P}^{\theta}\big)\Big)^{2}.
\end{align*} 
In the expression above, it suffices to set $Z_{ia}$ to a length-$X$ vector of independent standard normal random variables, for $i \in \{1, \cdots, m\}$, and to set $m$ as small as $t + 1$, where $t$ is the length of $\theta$. Expressed this way, our objective does not appear any easier to evaluate than the CCP estimator's objective. However, unlike the CCP objective, we can entirely remove the value function from ours, since
\begin{align*}
	\hat{\theta}_{\text{UFXP}} & = \argmin_{\theta \in \Theta} \sum_{i = 1}^{m} \Big(\lambda_{i}'U_{\hat P}^{\theta} + \sum_{a \in \mathbb{A}\setminus \{A\}} Z_{ia}' (\rho_{a}^{-1}(\hat{P}) - U_{a}^{\theta} + U_{A}^{\theta})\Big)^{2},\\
    \text{where}\quad \lambda_{i} & \equiv w_{i} + \beta F_{\hat{P}}' \lambda_{i} \\
    \text{and} \quad w_{i}' & \equiv -\sum_{a \in \mathbb{A}\setminus \{A\}}\beta Z_{ia}'(F_{a} - F_{A}) .
\end{align*} 
As you see, our estimator still depends on the $m$ fixed points that define $\lambda_{1}, \cdots, \lambda_{m}$, but it does not depend on any \emph{nested} fixed points (i.e., those whose solutions change with $\theta$). Of course, UFXP could not be efficient, as its random compression is ad hoc. However, the inefficient UFXP estimates give way to efficient OUFXP estimates.

Our unnested estimators are motivated by a related unnesting of the CCP fixed point that arises when the utility function is linear. In that case, the UFXP and CCP unnesting schemes are comparable. Our approach, therefore, meaningfully improves upon the state of the art only when the utility function is nonlinear. Fortunately, a promising class of nonlinear utility functions is drawing increasing attention: those arising from neural networks. 

Embedding a neural network in a dynamic discrete choice model sacrifices not just linearity, but all semblance of smoothness---needless to say, the objective function will not be concave. This means the estimation problem will require brute force---initializing the optimizer with many different starting points. Our technique makes this practical, as each start reuses the same dual fixed point solutions, $\lambda_1, \dots, \lambda_m$. In contrast, traditional estimators must start from scratch for each set of initial values. To make this concrete, note that solving the dual fixed points took \detergentFpTimeMins\ minutes for our 54000-state model, whereas converting these fixed points into utility parameter estimates took only \detergentAvgTimeMins\ minutes, on average. Hence, initializing our estimator with 1000 starting points took only $(1000 \cdot \detergentAvgTimeMins + \detergentFpTimeMins) / (\detergentAvgTimeMins + \detergentFpTimeMins) = \detergentProjRatio$\ times longer than initializing it with a single starting point. Compare this increase with the 1000-fold increase that would beset NFXP, CCP, SC, or MPEC if we started them with 1000 starting points. Our approach enables us to \emph{swarm} the objective function with hundreds of cheap gradient ascent paths. UFXP likewise amortizes fixed point computations across optimizers and neural network specifications, making large-scale optimization ensembles computationally tractable.

And our estimators not only enable faster ensembles---they also enable faster individual runs. First, the UFXP fixed points can be computed in parallel, whereas the NFXP and CCP fixed points must be computed in sequence. Second, UFXP solves little more than $t$ fixed points, whereas traditional NFXP and CCP implementations solve $t$ fixed points for \emph{each gradient ascent}, of which there could be hundreds. However, our dual fixed point technique reduces the number of fixed points required to evaluate the NFXP and CCP gradients from $t$ to one.

Now, for convenience, we will gather our our empirical results:
\begin{itemize}
    \item Initializing a nested estimator (NFXP, CCP, or SC) with 1000 starting values would increase the computational burden by a factor of 1000; in contrast, initializing UFXP with 1000 starting values increased the computational burden of our 54000-state model by a factor of \detergentProjRatio\ (Section \ref{s:empRes}).
    \item For the 540-state model, the median workload was \medWorkloadNestedSmall\ with the nested estimators, \medWorkloadUfxpSmall\ with UFXP, and \medWorkloadOufxpSmall\ with OUFXP (Section \ref{s:numericalSims}). 
    \item For the 540-state model, the median span was \medSpanNestedSmall\ with the nested estimators, \medSpanUfxpSmall\ with UFXP, and \medSpanOufxpSmall\ with OUFXP (Section \ref{s:numericalSims}). 
    \item For the 540-state model initialized with one starting value, the median estimation time was \medTimeNestedOneSmall\ hours with the nested estimators, \medTimeUfxpOneSmall\ hours with UFXP, and \medTimeOufxpOneSmall\ hours with OUFXP (Section \ref{s:numericalSims}).
    \item For the 5400-state model initialized with one starting value, the median estimation time was \medTimeNestedOneLarge\ hours with the nested estimators, \medTimeUfxpOneLarge\ hours with UFXP, and \medTimeOufxpOneLarge\ hours with OUFXP (Section \ref{s:numericalSims}).
    \item For the 540-state model initialized with 21 starting values, the median estimation time was \medTimeNestedTwentyOneSmall\ hours with the nested estimators, \medTimeUfxpTwentyOneSmall\ hours with UFXP, and \medTimeOufxpTwentyOneSmall\ hours with OUFXP (Section \ref{s:numericalSims}).
    \item For the 5400-state model initialized with 21 starting values, the median estimation time was \medTimeNestedTwentyOneLarge\ hours with the nested estimators, \medTimeUfxpTwentyOneLarge\ hours with UFXP, and \medTimeOufxpTwentyOneLarge\ hours with OUFXP (Section \ref{s:numericalSims}).
\end{itemize}

\bibliographystyle{ormsv080}
\bibliography{library.bib}

\clearpage

\appendix

\section*{Appendix}

\section{UFXP derivatives}\label{app:ufxp_deriv}

In Section \ref{sec: ufxp} of the main text, we established the UFXP objective function $\mathcal{Q}_{Z}^{\theta}$ and defined the $m$ dual-variable vectors $\lambda_i$. In this section, we provide the explicit derivations showing how these vectors yield the objective's gradient and Hessian matrix:

\begin{align*}
    \tfrac{\partial}{\partial \theta} \mathcal{Q}_{Z}^{\theta} & = 2 \sum_{i = 1}^{m} \operatorname{tr}(Z_{i}'(\rho^{-1}(\hat{P}) -  C_{V_{\hat{P}}^{\theta}}^{\theta} \Delta')) \\
    & \qquad \Big(\lambda_{i}' \sum_{a \in \mathbb{A}} \text{diag} (\hat P_a) \tfrac{\partial}{\partial \theta} U_{a}^{\theta}
    - \sum_{a \in \mathbb{A}\setminus \{A\}} Z_{ia}' (\tfrac{\partial}{\partial \theta}U_{a}^{\theta} - \tfrac{\partial}{\partial \theta} U_{A}^{\theta}) \Big) \\
    \text{and} \quad \tfrac{\partial^2}{\partial \theta_j \partial \theta_k} \mathcal{Q}_{\text{Z}}^{\theta} & = 2 \sum_{i=1}^m \Big( \lambda_{i}' \sum_{a \in \mathbb{A}} \text{diag} (\hat P_a) \tfrac{\partial}{\partial \theta_k} U_{a}^{\theta}
    - \sum_{a \in \mathbb{A}\setminus \{A\}} Z_{ia}' (\tfrac{\partial}{\partial \theta_k}U_{a}^{\theta} - \tfrac{\partial}{\partial \theta_k} U_{A}^{\theta}) \Big) \\
    &\qquad \Big(\lambda_{i}' \sum_{a \in \mathbb{A}} \text{diag} (\hat P_a) \tfrac{\partial}{\partial \theta_j} U_{a}^{\theta}
    - \sum_{a \in \mathbb{A}\setminus \{A\}} Z_{ia}' (\tfrac{\partial}{\partial \theta_j}U_{a}^{\theta} - \tfrac{\partial}{\partial \theta_j} U_{A}^{\theta}) \Big) \\
    & \qquad + 2 \sum_{i=1}^m \operatorname{tr}(Z_{i}'(\rho^{-1}(\hat{P}) -  C_{V_{\hat{P}}^{\theta}}^{\theta} \Delta'))\\
    & \qquad  \Big( \lambda_i' \sum_{a \in \mathbb{A}} \text{diag}(\hat{P}_a) \tfrac{\partial^2}{\partial \theta_j \partial \theta_k}U_a^\theta - \sum_{a \in \mathbb{A}\setminus \{A\}} Z_{ia}' \Big( \tfrac{\partial^2}{\partial \theta_j \partial \theta_k}U_a^\theta - \tfrac{\partial^2}{\partial \theta_j \partial \theta_k}U_A^\theta \Big) \Big).
\end{align*}

\section{Unobserved Discount Factor}\label{subsec: df}

We have thus far taken $\beta$ as given. But we can implement versions of UFXP and OUFXP when $\beta$ is a parameter to be estimated. For simplicity, we will restrict attention to the case in which the Markov chain is ergodic under policy $\hat P$, in which case 
\begin{align*}
    \vertiii{(F_{\hat P}')^{t} - \pi_{\hat{P}} \boldsymbol{1}'} = o(\gamma^{t})   \quad \text{as} \quad t \rightarrow \infty \quad \text{for any} \quad \gamma > \gamma_{2}(F_{\hat P}),
\end{align*}
where $\pi_{\hat{P}}$ is the stationary distribution associated with $F_{\hat P}$, $\boldsymbol{1}$ is a vector of ones, $\gamma_{2}(F_{\hat P}) < 1$ is the modulus of the second largest eigenvalue of $F_{\hat P}$, and $\vertiii{\cdot}$ denotes the matrix spectral norm \citep{Bray_strong_convergence}. Further, $F_{a}$ being a stochastic matrix implies $F_{a} \boldsymbol{1} = \boldsymbol{1}$, which in turn implies
\begin{align*}
    \pi_{\hat{P}} \boldsymbol{1}' w_{i} & = - \sum_{a \in \mathbb{A}\setminus \{A\}}\beta \pi_{\hat{P}} \boldsymbol{1}' (F_{a}' - F_{A}') Z_{ia} \\
    & = - \sum_{a \in \mathbb{A}\setminus \{A\}}\beta \pi_{\hat{P}}  (\boldsymbol{1}' - \boldsymbol{1}') Z_{ia} \\
    & = 0.
\end{align*}
Now if we define
\begin{align}
    \ell_{i}^{t} & \equiv (F_{\hat P}')^{t}w_{i} = F_{\hat P}' \ell_{i}^{t-1}, \label{eq:defineEll}
\end{align}
then the preceding two results establish that 
\begin{align}
    \vertii{\ell_{i}^{t}} & = \vertii{(F_{\hat P}')^{t} w_{i}} \nonumber\\
    & = \vertii{((F_{\hat P}')^{t} - \pi_{\hat{P}} \boldsymbol{1}') w_{i}} \nonumber\\
    & \le \vertiii{((F_{\hat P}')^{t} - \pi_{\hat{P}} \boldsymbol{1}')} \vertii{w_{i}} \nonumber\\
    & = o(\gamma^{t}).\label{eq:littleogamma}
\end{align}
This is noteworthy, because the dual vector corresponding to the weighting vector defined in \eqref{eq:defineWi} satisfies
\begin{align*}
    \lambda_{i} & = \sum_{t=0}^{\infty} \beta^{t}\ell_{i}^{t} . 
\end{align*}
With this, \eqref{eq:littleogamma} indicates that
\begin{align*}
    \vertiib{\lambda_{i} - \sum_{t=0}^{T-1} \beta^{t}\ell_{i}^{t}} & = \vertiib{\sum_{t=T}^{\infty} \beta^{t}\ell_{i}^{t}} \\
    & \le \sum_{t=T}^{\infty} \vertii{\beta^{t}\ell_{i}^{t}} \\
    & = \sum_{t=T}^{\infty} o(\beta^{t} \gamma^{t}) \\
    & = o\Big(\frac{\beta^{T}\gamma^{T}}{1-\beta\gamma}\Big) \\
    & = o\Big(\frac{\gamma^{T}}{1-\gamma}\Big) .
\end{align*}
It follows that
\[
\left\|\lambda_i - \sum_{t=0}^{T-1} \beta^{t}\ell_{i}^{t}\right\| \lesssim \epsilon
\quad \text{when} \quad
T > \frac{\log(\epsilon) + \log(1-\gamma)}{\log(\gamma)}.
\]
Thus, we can implement UFXP by iterating \eqref{eq:defineEll} to compute $\ell_{i}^{0}, \ldots, \ell_{i}^{T-1}$ and then saving these $T-1$ fixed vectors to quickly approximate $\lambda_i$ for a given value of $\beta$ with
\[
\hat{\lambda}_i^{\beta} \equiv \sum_{t=0}^{T} \beta^{t} \ell_i^{t}.
\]
The OUFXP estimator admits the same approximation.

In principle, one could bound $T$ with a mixing-time argument such as Dobrushin’s condition; in practice, it should suffice to iterate \eqref{eq:defineEll} until $\|\ell_i^{t}\| < 10^{-9}$.

\section{Unobserved Heterogeneity}\label{subsec: uh}

Accommodating unobserved heterogeneity is straightforward, as our UFXP and OUFXP estimators are compatible with the two-stage procedure outlined in \citet[Section 6]{Arcidiacono2011}. This approach first applies the EM algorithm to the reduced-form CCPs to compute an ex post probability of each observation that belongs to each heterogeneity group. The approach then estimates the structural parameters separately under each group-specific weighting scheme. Because the first-stage classification is performed in reduced form, it is both fast and agnostic to the estimator used in the second stage.

\section{Numerical Experiment Details}\label{app:sim_model}

The store's holding cost function is as follows:

\begin{align*}
    h(r, o, i) \equiv& ci + (k^1_r + k^2_o) \indicator{i \geq \tau^1_r + \tau^2_o} \\
    &+ \omega_r \nu_o \bigg(1 - \exp\Big(-\delta\big(i - (\tau^1_r + \tau^2_o)\big)^{+}\Big)\bigg) \\
    &+ (1 - \omega_r) 
    \bigg(\alpha \Big(i - (\tau^1_r + \tau^2_o)\Big)^{+} + \psi_o \Big(\big(i - (\tau^1_r + \tau^2_o)\big)^{+}\Big)^3\bigg). \nonumber
\end{align*}

This cost function depends on three state variables: $r \in \{1, 2, 3, 4, 5, 6\}$ indexes the expectation of current daily demand at the beginning of the day, $o \in \{1, 2, 3\}$ indexes the amount of other-product inventory held at the beginning of the day, and $i \in \{0, \cdots, 29\}$ records the number of units of the focal product held at the beginning of the day. The holding cost combines a baseline linear cost, a step penalty, and a weighted combination of concave and convex tails. 

As you can see above, the focal-product storage cost depends on the focal-product demand via the indices of $\tau^1 \equiv (0, 1, 2, 3, 4, 5)$, $k^1 \equiv (0.0, 0.05, 0.10, 0.15, 0.20, 0.25)$, and $\omega \equiv (1.0, 0.8, 0.6, 0.4, 0.2, 0.0)$, and depends on the other-product inventory via the indices of $\tau^2 \equiv (14, 9, 4)$, $k^2 \equiv (0.4, 0.7, 1.0)$, $\nu \equiv (1.3, 1.1, 0.9)$, $\psi \equiv (7.5 \times 10^{-4}, 2.5 \times 10^{-4}, 10^{-4})$. Finally, we set $c \equiv 0.005$, $\delta \equiv 0.15$, and $\alpha \equiv 2 \times 10^{-5}$.

The action variable is order quantity $q \in \mathbb{Q} \equiv \{0, 6, 12, 18\}$, which the store manager chooses with the aim of maximizing expected discounted utility, under discount factor $\beta = 0.9997$. After $q$ is chosen, demand $d$ realizes from a Poisson distribution with mean $\mu_{r}$, where $\mu \equiv (4.5, 6, 7.5, 9, 10.5, 12)$. Next, the following day's state vector, $(r', o', i')$, resolves. The demand state $r$ follows one of two Markov chains, depending on whether the current focal inventory is below the threshold of 14 units:

$$\Pi_{\text{low}} = \begin{bmatrix}
0.95 & 0.05 & 0 & 0 & 0 & 0 \\
0.25 & 0.70 & 0.05 & 0 & 0 & 0 \\
0 & 0.25 & 0.70 & 0.05 & 0 & 0 \\
0 & 0 & 0.05 & 0.70 & 0.25 & 0 \\
0 & 0 & 0 & 0.05 & 0.70 & 0.25 \\
0 & 0 & 0 & 0 & 0.05 & 0.95
\end{bmatrix}, \quad
\Pi_{\text{high}} = \begin{bmatrix}
0.90 & 0.10 & 0 & 0 & 0 & 0 \\
0.25 & 0.65 & 0.10 & 0 & 0 & 0 \\
0 & 0.25 & 0.65 & 0.10 & 0 & 0 \\
0 & 0 & 0.05 & 0.65 & 0.30 & 0 \\
0 & 0 & 0 & 0.05 & 0.65 & 0.30 \\
0 & 0 & 0 & 0 & 0.05 & 0.95
\end{bmatrix}$$

The congestion level $o'$ is set to $o$ with probability $p_o \equiv 0.9$, and is drawn uniformly from $\{1, 2, 3\}$ otherwise, and $i' = \min(29, (i-d)^{+} + q)$ resolves from the classic inventory law of motion with a maximum inventory state cap. In summary, our model has the following transition kernel

\begin{align*}
	f_{q}(r', o', i'|r, o, i) & \equiv f^r(r'|r, i)f^o(o'|o)f_{q}^i(i'|r, i), \\
	\text{where} \quad f^r(r'|r, i) & \equiv \begin{cases}
    \Pi_{\text{low}}, & \text{if } i < 14 \\
    \Pi_{\text{high}}, & \text{if } i \geq 14,
    \end{cases}\\
	f^o(o'|o) & \equiv p_{o} \indicator{o' = o} + (1-p_{o})/3, \\
	f_{q}^i(i'|r, i) & \equiv \sum_{d} f^d(d|r) \indicator{i' = (i-d)^{+} + q}, \\
	\text{and} \quad f^d(d|r) & \equiv \text{Poisson}(d | \mu_{r}).
\end{align*}

We derive $\hat{P}$ in three steps. First, we compute the empirical state-action frequency estimates, applying a Laplace smoothing factor of $0.1$. Second, we compute another set of probabilities using a Gaussian kernel with bandwidth $1.0$. This smooths the probabilities  across inventory $i$ while holding other states fixed, weighting the neighboring estimates by their empirical observation counts. Finally, we compute a convex combination of the empirical frequency estimates and kernel-smoothed estimates. We weight the frequency estimates by $n_x / (n_x + s)$, where $n_x$ is the number of times state $x$ is observed, and $s$ is set to 2500. Thus, $\hat{P}$ relies more on the empirical frequencies where data is rich, and relies more on smoothed kernel estimates where data is sparse. We derive $\hat{f}_{q}$ from pre-estimates of $f^{r}$ and $f^{o}$ and $\mu$, the first two of which we glean from the empirical measures of $r'$ and $o'$ given $r$, $i$ and $o$, and the last of which we glean by maximizing the Poisson likelihood of $d$ given $r$.

To evaluate how estimation times scale with the size of the state space, we also construct a 5400-state version of this model. This larger model preserves the exact economic structure of the baseline model, but measures inventory in finer units, effectively scaling the inventory space by a factor of 10. Specifically, we expand the inventory state space to $i \in \{0, \dots, 299\}$, scale the order quantities to $\mathbb{Q} \equiv \{0, 60, 120, 180\}$, and scale the Poisson demand means to $\mu \equiv (45, 60, 75, 90, 105, 120)$.

To keep the holding cost economically comparable across the two models, we adjust the parameters of $h(r, o, i)$. The inventory step thresholds scale up by a factor of 10: $\tau^1 \equiv (0, 10, 20, 30, 40, 50)$ and $\tau^2 \equiv (140, 90, 40)$. Conversely, to offset the larger $i$ values, the linear and exponential cost coefficients scale down by a factor of 10 ($c \equiv 0.0005$, $\delta \equiv 0.015$, $\alpha \equiv 2 \times 10^{-6}$), and the cubic coefficient scales down by a factor of 1,000 to offset the cubed inventory terms: $\psi \equiv (7.5 \times 10^{-7}, 2.5 \times 10^{-7}, 10^{-7})$. Finally, the inventory threshold governing the demand state transition ($f^r$) shifts from 14 units to 149 units. All other parameters and probability distributions remain identical to the 540-state model.


\begin{table}[htbp]
\centering
\caption{Ratios of average estimation times of benchmark estimators to average OUFXP estimation time.}
\label{tab:oufxp_paired_speedup_combined}
\tableOufxpSpeedupCombinedNew
\vspace{1.5ex}
\begin{minipage}{\textwidth}
    \textit{Note:} Values report the average estimation time of the benchmark estimator (numerator), divided by the average estimation time of OUFXP (denominator), for single ($R=1$) and multiple ($R=21$) optimization starts. We define estimation time as the total computational cost of obtaining the parameters $\hat{\theta}$. For benchmark estimators, this consists entirely of optimization time—the time required to search the parameter space for the $\theta$ that maximizes the objective. For OUFXP, however, estimation time is the sum of the initial UFXP estimation and the subsequent second-stage estimation. Each stage comprises two parts: a one-time fixed point time to solve the dual variables $\lambda_1, \dots, \lambda_m$, and the subsequent optimization time. Accordingly, the average estimation time for OUFXP with $R$ optimization starts is the sum of the average stage-one (UFXP) and stage-two estimation times: $(\text{fixed point time}_1 + \text{fixed point time}_2) + (\text{average optimization time}_1 + \text{average optimization time}_2)R$, whereas the average estimation time for NFXP, CCP, SC, and MPEC with $R$ optimization starts is: $(\text{average optimization time}) R$, as their fixed points are nested within their optimization. All estimation times are truncated to seven days. This truncation did not increase any ratios, since all OUFXP runs completed within one week.
\end{minipage}
\end{table}

\section{Empirical Application Details} \label{subsec: det_est_details}

We estimate store demand means with a neural network. First, we winsorize sales at 12, and discard observations with fewer than 12 units of inventory at the beginning of the day. Dropping these observations mitigates demand censoring due to stock outs: demand is unlikely to exceed available inventory of 12, as only 4.8\% of sales quantities exceed 12 units. Second, we derive nine sales features from the resulting sample: (i) month dummy variables; (ii) store dummy variables; (iii) item dummy variables; (iv) today’s sales of the given item in the given store; (v) today’s sales of the given item across all stores; (vi) today’s sales of the given store across all items; (vii) tomorrow's retail price; (viii) tomorrow's wholesale price, and (ix) tomorrow's discount. Third, we input these features into a multilayer perceptron with two hidden layers of widths 64 and 32 and an output layer of width 1. Fourth, we exponentiate the neural network's output, and treat the result as the mean of a Poisson distribution, which we use to evaluate the empirical likelihood of tomorrow's sales quantity, given our nine input features. Finally, we find the model weights that yield the maximum likelihood, and define mean demand as the corresponding fitted values. 

We estimate DC demand means analogously, using the sum of store orders as the DC's demand. But for this case, we need no censoring correction, as we observe both the DC's realized demand (orders) and satisfied demand (shipments), unlike for the stores, where we observe only the satisfied demand (sales). We model the DC demand with the same 64-to-32-to-1 neural network, but with a different set of inputs: (i) month dummy variables; (ii) item dummy variables; (iii) tomorrow's average retail price across stores; (iv) tomorrow's average wholesale price across stores, and (v) tomorrow's average discount across stores. As before, we train this model by maximizing the empirical likelihood of the DC's demand, assuming a Poisson distribution with a feature-varying mean. And, as before, we set the mean demands to the fitted values.

After estimating demand means for the stores and DC, we construct the state space with discretization. We set $\mathbb{Q}$ to the top four order quantities, which represent 94.2\% of orders (the remaining 5.8\% are scattered across 254 different values). Because the downstream demand has a right-skewed distribution, we group these values into five equal-width bins after applying a log transformation, and index them with $r$. We group $o$, $k$, and $\ell$ into tertiles based on their empirical distributions. 

We allocate the finest grid to downstream focal-product inventory, since it is the most important variable for our purposes (nonparametrically estimating holding costs). We cap the observed inventories at the 90th percentile (149) and group them into 100 levels based on empirical quantiles.

We estimate the state transition kernel over the discretized space. This discretization necessitates a careful treatment of the inventory transition, $f^{\bar{i}}$, to ensure it accurately reflects the underlying law of motion: $i' = \min(\max \mathbb{I}, (i - d)^{+} + j)$. We define $\mathbb{B}_{\bar{i}}$ as the set (bin) of exact inventory levels $i$ that map to the discrete state $\bar{i}$, and $\mathbb{B}_{r}$ as the set of mean demand values $s$ that map to the discrete demand state $r$. 

While the model assumes demand $d$ follows a Poisson distribution for a given mean $s$, the effective distribution in state $r$ is a mixture across the empirical values $s \in \mathbb{B}_r$. We estimate the mixture probability mass function using the empirical frequencies. We use the resulting distribution, $\hat{f}^d(d|r)$, to compute both the expected probability of transitioning to a next-period state $\bar{i}'$, and the expected shortage:
\begin{align*}
    f^{\bar{i}}(\bar{i}' | r, \bar{i}, j) &\equiv \E_{\bar{i}}(\Pr (i' \in \mathbb{B}_{\bar{i}'} | r, i, j)) \\
    \E((d-i)^{+} | r) &\equiv \E_{\bar{i}}(\sum_{d = 0}^{\infty} \hat{f}^{d}(d | r) (d-i)^{+})\\
    \text{where} \quad \Pr (i' \in \mathbb{B}_{\bar{i}'} | r, i, j) &\equiv \sum_{d = 0}^{\infty} \hat{f}^{d}(d | r) \indicator{\min(\max \mathbb{I},(i - d)^{+} + j) \in \mathbb{B}_{\bar{i}'}}. 
\end{align*}
The expectation $\E_{\bar{i}}$ is taken with respect to the empirical frequency of exact inventory levels $i$ observed within the bin $\mathbb{B}_{\bar{i}}$. 

\section{Proofs} 

\begin{proof}[Proof of Proposition \ref{prop:defineEpsilon}]
	Define $\mathbb{P}_{a}(c) \equiv \E\big(e_{a} | a = \argmax_{j \in \mathbb{A}} e_j + c_j\big)$, in which case the social surplus can be expressed as follows:
	\begin{align*}
		\nu(c) & = \E\big(\max_{a \in \mathbb{A}} e_a + c_a \big) \\
		& = \E\Big(\sum_{a \in \mathbbm{A}}\indicator{a = \argmax_{j \in \mathbb{A}} e_j + c_j}(e_a + c_a) \Big) \\
		& = \sum_{a \in \mathbbm{A}}\E\big(\indicator{a = \argmax_{j \in \mathbb{A}} e_j + c_j}(e_a + c_a) \big) \\
		& = \sum_{a \in \mathbbm{A}}\Pr\big(a = \argmax_{j \in \mathbb{A}} e_j + c_j\big)\E\big(e_a + c_a | a = \argmax_{j \in \mathbb{A}} e_j + c_j\big) \\
		& = \rho(\Delta c)'(\mathbb{P}(c) + c) .
	\end{align*}
	And since $\Delta I^{+} = I$, it follows that 
	\begin{align*}
		\epsilon(p) & =\nu(I^{+}\rho^{-1}(p)) - p'I^{+}\rho^{-1}(p) \\
		& = \rho(\Delta I^{+}\rho^{-1}(p))'(\mathbb{P}(I^{+}\rho^{-1}(p)) + I^{+}\rho^{-1}(p)) - p'I^{+}\rho^{-1}(p)\\
		& = p'(\mathbb{P}(I^{+}\rho^{-1}(p)) + I^{+}\rho^{-1}(p)) - p'I^{+}\rho^{-1}(p)\\
		& = p'\mathbb{P}(I^{+}\rho^{-1}(p)).
	\end{align*}
	Thus, choosing $v$ so that $p = \rho(\Delta v)$ yields the following:
	\begin{align*}
		\epsilon(p) + p'c & = p'\mathbb{P}(I^{+}\rho^{-1}(p)) + p'c\\
		& = \rho(\Delta v)'(\mathbb{P}(I^{+}\Delta v) + c)\\
		& = \sum_{a \in \mathbb{A}}\Pr\big(a = \argmax_{j \in \mathbb{A}} e_j + v_j - v_{A}\big)\Big(\E\big(e_{a} | a = \argmax_{j \in \mathbb{A}} e_j + v_j - v_A\big) + c_{a}\Big) \\
		& = \sum_{a \in \mathbb{A}}\Pr\big(a = \argmax_{j \in \mathbb{A}} e_j + v_j\big)\Big(\E\big(e_{a} | a = \argmax_{j \in \mathbb{A}} e_j + v_j\big) + c_{a}\Big) \\
		& = \sum_{a \in \mathbb{A}}\E\Big(\indicator{a = \argmax_{j \in \mathbb{A}} e_j + v_j}(e_{a} + c_{a})\Big) \\
		& = \E\Big(\sum_{a \in \mathbb{A}}\indicator{a = \argmax_{j \in \mathbb{A}} e_j + v_j}(e_{a} + c_{a})\Big) \\
		& \le \E\big(\max_{a \in \mathbb{A}}e_{a} + c_{a}\big) \\
		& = \nu(c).
	\end{align*}
	Further, the inequality in the penultimate line holds with equality if and only if $\argmax_{a \in \mathbb{A}} e_a + v_a = \argmax_{a \in \mathbb{A}} e_a + c_a$, which holds if and only if $\Delta v = \Delta c$. This proves that the objective is maximized at $p = \rho(\Delta c)$.
	
	To establish the first-order conditions, consider the following Lagrangian:
	\begin{align*}
		\mathcal{L}(p, \lambda) \equiv  \epsilon(p) + p'c + \lambda (p'\boldsymbol{1} - 1).
	\end{align*} 
	Setting the partial derivative with respect to $p$ to zero yields:
	\begin{align*}
		0 & = \tfrac{\partial}{\partial p} \mathcal{L}(p, \lambda) \\
		& = \nabla \nu(I^{+}\rho^{-1}(p))' \tfrac{\partial}{\partial p}\big(I^{+}\rho^{-1}(p)\big) - \tfrac{\partial}{\partial p}\big(p'I^{+}\rho^{-1}(p)\big) + c' + \lambda \boldsymbol{1}' \\
		& = \rho(\rho^{-1}(p))' \tfrac{\partial}{\partial p}\big(I^{+}\rho^{-1}(p)\big) - (I^{+}\rho^{-1}(p))' - p' \tfrac{\partial}{\partial p}\big(I^{+}\rho^{-1}(p)) + c' + \lambda \boldsymbol{1}' \\
		& = p' \tfrac{\partial}{\partial p}\big(I^{+}\rho^{-1}(p)\big) - (I^{+}\rho^{-1}(p))' - p' \tfrac{\partial}{\partial p}\big(I^{+}\rho^{-1}(p)) + c' + \lambda \boldsymbol{1}' \\
		& =  - (I^{+}\rho^{-1}(p))' + c' + \lambda \boldsymbol{1}'.
	\end{align*}
	Finally, post-multiply the result above by $\Delta'$ yields the first-order conditions, since $\Delta \boldsymbol{1} = 0$ and $ \Delta I^{+} = I$.
\end{proof}

\begin{proof}[Proof of Theorem \ref{p:consistency}]

    Let $\theta_0$ denote the true parameter. We define the empirical objective function and its population counterpart as follows:
    \begin{align*}
    	\mathcal{Q}_{Z, N}^{\theta} &\equiv \sum_{i = 1}^{m}\operatorname{tr}(Z_{i}'(\rho^{-1}(\hat P) -  C_{V_{\hat{P}}^{\theta}}^{\theta} \Delta'))^{2}, \\
    	\text{and} \quad \mathcal{Q}_{Z, 0}^{\theta} &\equiv \sum_{i = 1}^{m}\operatorname{tr}(Z_{i}'(\rho^{-1}(P^{\theta_0}) -  C_{V_{P^{\theta_0}}^{\theta}}^{\theta} \Delta'))^{2},
    \end{align*}
    where $\hat{P}$ is a consistent estimator of the true CCPs, $P^{\theta_0}$, estimated from a sample of $N$ observations. To invoke the extremum-estimator consistency theorem of \cite{NEWEY19942111}, we must establish the following four conditions:
    \begin{enumerate}
    	\item[(i)] $\Theta$ is compact,
    	\item[(ii)] $\mathcal{Q}_{Z, 0}^{\theta}$ is continuous,
    	\item[(iii)] $\sup_{\theta \in \Theta} |\mathcal{Q}_{Z, N}^{\theta} - \mathcal{Q}_{Z, 0}^{\theta}| \stackrel{P}{\rightarrow} 0$,
    	\item[(iv)] $\theta_0$ is the unique minimizer of $\mathcal{Q}_{Z, 0}^{\theta}$.
    \end{enumerate}
    
    The first condition holds by the \emph{compactness} assumption, and the second condition holds by the \emph{smoothness} assumption.

    For the third condition, it is sufficient to show $\rho^{-1}(\hat P) - C_{V_{\hat{P}}^{\theta}}^{\theta} \Delta'$ converges uniformly to $\rho^{-1}(P^{\theta_0}) - C_{V_{P^{\theta_0}}^{\theta}}^{\theta} \Delta'$. By the triangle inequality,  

    \begin{align*}
    	\sup_{\theta \in \Theta} \|(\rho^{-1}(\hat P) - C_{V_{\hat{P}}^{\theta}}^{\theta} \Delta') - (\rho^{-1}(P^{\theta_0}) - C_{V_{P^{\theta_0}}^{\theta}}^{\theta} \Delta') \| &\leq \|\rho^{-1}(\hat P) - \rho^{-1}(P^{\theta_0})\| \\
    	&\quad + \sup_{\theta \in \Theta} \|C_{V_{\hat{P}}^{\theta}}^{\theta} \Delta' - C_{V_{P^{\theta_0}}^{\theta}}^{\theta} \Delta'\|.
    \end{align*}
    
    \noindent By the \emph{error distribution regularity} assumption, the true CCPs lie in the interior of the probability simplex and $\rho$ is continuously differentiable. The inverse function theorem implies that $\rho^{-1}$ is continuously differentiable and therefore locally Lipschitz on the set of interior CCPs. Thus, it follows that $\|\rho^{-1}(\hat P) - \rho^{-1}(P^{\theta_0})\| \stackrel{P}{\rightarrow} 0$. 
    
    Moreover, the \emph{smoothness} assumption ensures that $C_{V_{P}^{\theta}}^{\theta}$ is continuous with respect to both $P$ and $\theta$. Because $C_{V_{\hat{P}}^{\theta}}^{\theta}$ converges to $C_{V_{P^{\theta_0}}^{\theta}}^{\theta}$ pointwise and is continuous on a compact set by the \emph{compactness} assumption, it converges uniformly over $\Theta$. Thus, the third condition follows.

    For the fourth condition, note that the limit objective can be rewritten as $\mathcal{Q}_{Z, 0}^{\theta} = \|Z' h_0(\theta)\|^2$, where $Z \in \mathbb{R}^{D \times m}$ is formed by concatenating vectors $\text{vec}(Z_i)$ for $i=1,\dots,m$, and $h_0(\theta) \equiv \text{vec}(\rho^{-1}(P^{\theta_0}) - C_{V_{P^{\theta_0}}^{\theta}}^{\theta} \Delta')$, where $D = X(A-1)$. Because $h_0(\theta_0) = \mathbf{0}$, $\mathcal{Q}_{Z, 0}^{\theta_0} = 0$. By the \emph{identification} assumption, $h_0(\theta) = \mathbf{0}$ if and only if $\theta = \theta_0$. Thus, $\theta_0$ is the unique minimizer of $\mathcal{Q}_{Z, 0}^{\theta}$ provided that $Z' h_0(\theta) \neq \mathbf{0}$ for all $\theta \neq \theta_0$. This holds almost surely by the following lemma.

    \begin{lemma}
        For $m > t$, define the projection matrix $Z \in \mathbb{R}^{D \times m}$ to comprise \emph{i.i.d.} entries from a distribution that is absolutely continuous with respect to Lebesgue measure. Then $Z$ satisfies $\ker(Z')\ \cap\ h_0(\Theta) = \{\mathbf{0}\}$ almost surely.
    \end{lemma}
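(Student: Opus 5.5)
The plan is a dimension count. The bad event $\ker(Z')\cap h_0(\Theta)\neq\{\mathbf 0\}$ is the image of a Lipschitz map whose domain has dimension $t+m(D-1)$. That is strictly less than $mD$ exactly because $m>t$, so the event is Lebesgue-null. Since the law of $Z$ is a product of absolutely continuous laws, it is absolutely continuous on $\mathbb{R}^{D\times m}$, and the event then has probability zero.

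\emph{Step 1 (reduce to directions away from $\theta_0$).} The condition $Z'h=\mathbf 0$ is homogeneous in $h$, and $h_0(\theta)=\mathbf 0$ only at $\theta_0$ by the \emph{identification} assumption. Hence the bad event is
\begin{align*}
\mathcal{B}\equiv\{Z:\ \exists\,\theta\in\Theta\setminus\{\theta_0\}\ \text{with}\ Z's(\theta)=\mathbf 0\},\qquad s(\theta)\equiv h_0(\theta)/\|h_0(\theta)\|.
\end{align*}
Write $\Theta\setminus\{\theta_0\}=\bigcup_n K_n$ with $K_n\equiv\{\theta\in\Theta:\|\theta-\theta_0\|\ge 1/n\}$, which is compact. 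Continuity of $h_0$ gives $\inf_{K_n}\|h_0\|>0$.

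\emph{Step 2 (regularity of $s$).} The \emph{smoothness} assumption makes $U_a^\theta$ continuously differentiable. Because $V_{P^{\theta_0}}^{\theta}=(I-\beta F_{P^{\theta_0}})^{-1}U_{P^{\theta_0}}^{\theta}$ is linear in $U^\theta$, the map $h_0$ is $C^1$. Therefore $s$ is $C^1$, hence Lipschitz, on a neighborhood of each $K_n$. It suffices to show $\mathcal{B}_n\equiv\{Z:\exists\theta\in K_n,\ Z's(\theta)=\mathbf 0\}$ is Lebesgue-null for every $n$, and then take the countable union. The full-rank assumption is not needed here.

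\emph{Step 3 (parametrize the bad set).} $Z's=\mathbf 0$ means every column $z_i$ of $Z$ lies in $s^\perp$. Cover $K_n$ by finitely many compact patches on each of which some fixed coordinate $k$ of $s(\theta)$ satisfies $|s_k(\theta)|\ge 1/\sqrt{D}$; this is possible since $\|s\|=1$. On such a patch, the projections $(I-ss')e_j$ for $j\neq k$ span $s^\perp$. Gram--Schmidt on them gives a $C^1$ (hence Lipschitz) $D\times(D-1)$ basis $B(\theta)$ of $s(\theta)^\perp$. Then the part of $\mathcal{B}_n$ over that patch is contained in the image of
\begin{align*}
\Phi:\ (\theta,Y)\in K_n\times\mathbb{R}^{(D-1)\times m}\ \mapsto\ B(\theta)Y\in\mathbb{R}^{D\times m}.
\end{align*}
$\Phi$ is locally Lipschitz, and its domain has dimension $t+m(D-1)<mD$. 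A locally Lipschitz image of a subset of $\mathbb{R}^{k}$ with $k<mD$ has $\mathcal{H}^{mD}$-measure zero. To see this, exhaust the domain by countably many compact pieces (bounded $Y$): on each piece $\Phi$ is Lipschitz, and Lipschitz maps do not increase Hausdorff dimension. So every piece, and hence $\mathcal{B}_n$, is Lebesgue-null in $\mathbb{R}^{D\times m}$.

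The main obstacle is Step 3. One must produce a parametrization whose domain genuinely has dimension $t+m(D-1)$ rather than $t+mD$. The naive map $(\theta,W)\mapsto(I-s(\theta)s(\theta)')W$ has domain dimension $t+mD$ and loses the count. The local orthonormal frame for $s^\perp$ fixes this. A global frame may fail to exist, which is why the argument uses the finite patch cover. The other delicate point is the singularity of $s$ at $\theta_0$, which Step 1 sidesteps by excising shrinking neighborhoods of $\theta_0$.
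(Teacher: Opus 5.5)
Your proof is correct, and it takes a genuinely different route from the paper's.

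\textbf{The paper's route.} It works in the smooth category. It asserts that $h_0(\Theta)$ is a $t$-dimensional embedded submanifold, using smoothness, identification, full rank and compactness. It deletes $\mathbf{0}$ to obtain $M^0$, and observes that $F(u,Z)=Z'u$ is transverse to $\{\mathbf{0}\}$ on $M^0\times\mathbb{R}^{D\times m}$, because $\Delta Z\mapsto \Delta Z'u$ is onto when $u\neq\mathbf{0}$. It then applies the parametric transversality theorem and the transverse-preimage theorem: for almost every $Z$, $f_Z^{-1}(\{\mathbf{0}\})$ has dimension $t-m<0$, hence is empty.

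\textbf{Your route.} You parametrize the incidence set $\{(\theta,Z): Z'h_0(\theta)=\mathbf{0},\ \theta\neq\theta_0\}$ explicitly, using local frames $B(\theta)$ for $h_0(\theta)^{\perp}$. You show that its projection onto $Z$-space is a countable union of Lipschitz images of sets of dimension $t+m(D-1)<mD$. This is exactly the easy, dimension-deficient case of Sard's theorem on which parametric transversality rests. So the two proofs rest on the same dimension count; yours unpacks it.

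\textbf{What your route buys.} It needs weaker hypotheses. You use only $C^1$ (indeed locally Lipschitz) regularity of $h_0$ and the single fact $h_0^{-1}(\mathbf{0})=\{\theta_0\}$. The paper uses that fact too, and it does follow from identification, because a policy that reproduces itself under policy improvement is optimal. The paper's route is heavier in three ways:
\begin{itemize}
\item it borrows $C^\infty$ results, although the theorem assumes only continuous differentiability;
\item it needs $h_0$ to be an injective immersion onto an embedded submanifold, which is not immediate from the stated assumptions, since those concern $C^{\theta}_{V_{P}^{\theta}}\Delta'$ at $P=P^{\theta}$ rather than at the fixed $P^{\theta_0}$ that defines $h_0$;
\item a compact $\Theta$ with nonempty interior yields at best a manifold with boundary.
\end{itemize}
Your excision of shrinking neighborhoods of $\theta_0$ replaces the paper's deletion of $\mathbf{0}$ from the image. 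You also make explicit a step the paper leaves implicit: a Lebesgue-null set has probability zero under the absolutely continuous product law of $Z$.

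\textbf{Trade-off.} The paper's version buys brevity through off-the-shelf theorems. Yours buys transparency and robustness to the hypotheses actually stated, at the cost of the patch-and-frame bookkeeping.
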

    
    \begin{proof}
            We will use the following results:

    \begin{theorem}[Theorem 6.30 in \citet{lee2012smooth}]\label{thm 6.30}
        Suppose $N$ and $M$ are smooth manifolds and $S \subseteq M$ is an embedded submanifold. If $F: N \to M$ is a smooth map that is transverse to $S$, then $F^{-1}(S)$ is an embedded submanifold of $N$ whose codimension is equal to the codimension of $S$ in $M$.
    \end{theorem}

    \begin{theorem}[Theorem 6.35 (Parametric Transversality Theorem) in \citet{lee2012smooth}]\label{thm. 6.35}
        Suppose $N$ and $M$ are smooth manifolds, $X \subseteq M$ is an embedded submanifold, and $\{F_s: s \in S\}$ is a smooth family of maps from $N$ to $M$. If the map $F: N \times S \to M$ is transverse to $X$, then for almost every $s \in S$, the map $F_s: N \to M$ is transverse to $X$.
    \end{theorem}

    Mapping $h_0$ is a smooth, proper, injective immersion, by the \emph{smoothness}, \emph{identification}, and \emph{full-rank} assumptions. Because $\Theta$ is compact, $h_0$ is also a proper map; therefore, the image space, $h_0(\Theta)$, is a smooth $t$-dimensional embedded submanifold of $\mathbb{R}^D$, by Theorem 3.4.1 of \cite{shastri2011elements}. Let $M^0 \equiv h_0(\Theta) \setminus \{\mathbf{0}\}$. Since $M^0$ is open in $h_0(\Theta)$, $M^0$ is also a smooth manifold. 
    
    Let us define the functions $f_Z: M^0 \to \mathbb{R}^m$ and $F: M^0 \times \mathbb{R}^{D \times m} \to \mathbb{R}^m$ as $F(u, Z) = f_Z(u) = Z'u$, for $Z \in \mathbb{R}^{D \times m}$. It is straightforward that $F$ is a smooth map, and since restricting a smooth map to a smooth manifold preserves smoothness, $\{f_Z: Z \in \mathbb{R}^{D \times m}\}$ is a smooth family of maps $M^0 \to \mathbb{R}^m$. 

    Finally, we will show that $F$ is transverse to $\{\mathbf{0}\}$. Since $F$ is bilinear, its differential at $(u,Z)$ in the direction $(\delta u,\Delta Z)\in T_u M^0\times \mathbb{R}^{D\times m}$ is
    \[
    dF_{(u,Z)}(\delta u,\Delta Z)\;=\; Z'\,\delta u\;+\;\Delta Z'\,u\ \in\ \mathbb{R}^m.
    \]
    Fix any $y\in\mathbb{R}^m$. Because $u\neq \mathbf{0}$, the linear map $\Delta Z\mapsto \Delta Z' u$ is surjective: choose
    \[
    \Delta Z \;=\; \frac{u\,y'}{\|u\|^2}\ \in \ \mathbb{R}^{D\times m},
    \]
    for which $\Delta Z' u = y$. Taking $\delta u=0$ we get
    \[
    dF_{(u,Z)}(0,\Delta Z) \;=\; \Delta Z' u \;=\; y.
    \]
    Thus $dF_{(u,Z)}$ is onto $\mathbb{R}^m$ at every $(u,Z)$ with $u\in M^0$ and $F(u,Z)=\mathbf{0}$, so $F$ is transverse to $\{\mathbf{0}\}\subset\mathbb{R}^m$. Then, by Theorem~\ref{thm. 6.35}, $f_Z$ is transverse to $\{\mathbf{0}\}$ almost surely, and by Theorem~\ref{thm 6.30}, the image space $f_Z^{-1}(\{\mathbf{0}\})$ has dimension $t - m$. Thus, for $m > t$, $f_Z^{-1}(\{\mathbf{0}\})$ is empty and $\ker(Z')\ \cap\ h_0(\Theta) = \{\mathbf{0}\}$ holds for almost every $Z$.
    
    \end{proof}

    We will now establish the asymptotic normality of our estimator. First note that $N(x)\hat{P}(x)$ follows a multinomial distribution with parameters $N(x)$ and $P^{\theta_0}(x)$, thus $\sqrt{N(x)}(\hat{P}(x) - P^{\theta_0}(x)) \stackrel{d}{\rightarrow} \ \mathcal{N}(0, \Sigma(x))$ for any state $x \in \mathbb{X}$ by the multivariate central limit theorem. 
    
    Applying the delta method yields
    \begin{align*}
    \sqrt{N(x)} (\rho^{-1}(\hat{P}(x)) - C_{V_{\hat{P}}^{\theta_0}}^{\theta_0}(x) \Delta') &\stackrel{d}{\rightarrow} \mathcal{N}(0, \Gamma(x)\Sigma(x)\Gamma(x)').
    \end{align*}
    By the envelope theorem, the first-order effect of a change in the policy on the value function is zero at the optimal policy ($\tfrac{\partial}{\partial P}V_P\big|_{P = P^{\theta_0}} = \mathbf{0}$). Consequently, the derivative of the choice-specific value matrix with respect to $P(x)$ vanishes, and the Jacobian $\Gamma(x)$ simplifies to
    \begin{align*}
    \Gamma(x) &\equiv \tfrac{\partial}{\partial P(x)} \big(\rho^{-1}(P(x)) - C_{V_{P}^{\theta_0}}^{\theta_0}(x) \Delta'\big) \Big|_{P = P^{\theta_0}} \\
    &= \tfrac{\partial \rho^{-1}}{\partial P(x)} (P^{\theta_0}(x)).
    \end{align*}

    The first-order conditions yield
    \begin{align*}
        \mathbf{0} &= (\tfrac{\partial}{\partial \theta} \mathcal{Q}_{Z,N}^{\theta_0})' + \tfrac{\partial^2}{\partial \theta \partial \theta'} \mathcal{Q}_{Z,N}^{\bar{\theta}} (\hat{\theta}_{\text{UFXP}} - \theta_0),
    \end{align*}
    where $\bar{\theta}$ lies on the line segment between $\hat{\theta}_{\text{UFXP}}$ and $\theta_0$. Note that $\bar \theta \stackrel{P}{\rightarrow} \theta_0$, and $\mathcal{Q}_{Z,N}$ is continuous, thus $\tfrac{\partial^2}{\partial \theta \partial \theta'} \mathcal{Q}_{Z,N}^{\bar{\theta}} \stackrel{P}{\rightarrow} \tfrac{\partial^2}{\partial \theta \partial \theta'} \mathcal{Q}_{Z,N}^{\theta_0}$. Moreover, $\hat P \stackrel{P}{\rightarrow} P^{\theta_0}$, thus $\operatorname{tr}(Z_{i}'(\rho^{-1}(\hat{P}) -  C_{V_{\hat{P}}^{\theta_0}}^{\theta_0} \Delta')) \stackrel{P}{\rightarrow} 0$ for all $i = 1, \dots, m$. Therefore,

    \begin{align*}
    \tfrac{\partial}{\partial \theta} \mathcal{Q}_{Z,N}^{\theta_0} &= 2 \sum_{i = 1}^{m} \operatorname{tr}(Z_{i}'(\rho^{-1}(\hat{P}) -  C_{V_{\hat{P}}^{\theta_0}}^{\theta_0} \Delta')) (D^{\theta_0}_i + o_p(1)),\\
        \tfrac{\partial^2}{\partial \theta \partial \theta'} \mathcal{Q}_{Z,N}^{\bar{\theta}} &= 2 \sum_{i=1}^m (D^{\theta_0}_i)'D^{\theta_0}_i + o_p(1) ,\\
        \text{where} \quad D^{\theta_0}_i &= \lambda_{i}' \sum_{a \in \mathbb{A}} \text{diag} (P^{\theta_0}_a) \tfrac{\partial}{\partial \theta} U_{a}^{\theta_0}
    - \sum_{a \in \mathbb{A}\setminus \{A\}} Z_{ia}' (\tfrac{\partial}{\partial \theta}U_{a}^{\theta_0} - \tfrac{\partial}{\partial \theta} U_{A}^{\theta_0}).
    \end{align*}
    
    It follows that
    
    \begin{align*}
        \hat{\theta}_{\text{UFXP}} - \theta_0 
        &= \Big((D^{\theta_0})'D^{\theta_0}\Big)^{-1} (D^{\theta_0})' \sum_{x \in \mathbb{X}} z(x)' (\rho^{-1}(\hat{P}(x)) - C_{V_{\hat{P}}^{\theta_0}}^{\theta_0}(x) \Delta') \\&+ o_p(\|\hat{\theta}_{\text{UFXP}} - \theta_0\|),
    \end{align*}
    
    \noindent where $D^{\theta}$ is the $m \times t$ matrix formed by stacking the vectors $D^{\theta}_i$ for $i \in \{1, \dots, m\}$, $z(x)'$ is the $m \times (A-1)$ matrix formed by stacking the $x$th row of $Z_i$ for $i \in \{1, \dots, m\}$, and $C_{V}^{\theta}(x)$ is the $x$th row of $C_{V}^{\theta}$. Multiplying both sides by $\sqrt{N}$ and rewriting the scaling factor yields

    \begin{align*}
        \sqrt{N}(\hat{\theta}_{\text{UFXP}} - \theta_0) 
        &= \Big((D^{\theta_0})'D^{\theta_0}\Big)^{-1} (D^{\theta_0})'\\
        &\sum_{x \in \mathbb{X}} \sqrt{\frac{N}{N(x)}} z(x)' \sqrt{N(x)} (\rho^{-1}(\hat{P}(x)) - C_{V_{\hat{P}}^{\theta_0}}^{\theta_0}(x) \Delta') + o_p(1).
    \end{align*}

    Because observations across states are independent, the state-specific limit conditions $\sqrt{N(x)}(\rho^{-1}(\hat{P}(x)) - C_{V_{\hat{P}}^{\theta_0}}^{\theta_0}(x) \Delta')$ are asymptotically independent normal vectors. Since $N(x)/N \to \eta(x)$, Slutsky's theorem implies $\sqrt{N/N(x)} \stackrel{P}{\rightarrow} 1/\sqrt{\eta(x)}$. Therefore, the summation converges in distribution to a mean-zero normal random vector with variance 
    \begin{align*}
        \sum_{x \in \mathbb{X}} \frac{1}{\eta(x)} z(x)' \Gamma(x)\Sigma(x)\Gamma(x)' z(x).
    \end{align*}
    Pre-multiplying and post-multiplying this inner variance by the Jacobian sandwich matrices yields $\sqrt{N}(\hat{\theta}_{\text{UFXP}} - \theta_0) \stackrel{d}{\rightarrow} \mathcal{N}(0, \Sigma_{\text{UFXP}})$, completing the proof.

\end{proof}

\begin{proof}[Proof of Theorem \ref{p:efficient}]

    We begin by characterizing the asymptotic variance of the OUFXP estimator. As $\hat P \stackrel{P}{\rightarrow} P^{\theta}$ and $\hat{\theta}_{\text{UFXP}} \stackrel{P}{\rightarrow} \theta$, the Continuous Mapping Theorem implies that the state-specific OUFXP weights converge in probability to:
    \begin{align*}
        z^{\theta}(x) &\equiv \Big(\frac{\Gamma(x)\Sigma(x)\Gamma(x)'}{\eta(x)}\Big)^{-1} J(x), \\
        \text{where} \quad J(x) &\equiv \tfrac{\partial}{\partial \theta} (C_{V_{P^{\theta}}^{\theta}}^{\theta}(x) \Delta').
    \end{align*}

    Substituting $z^{\theta}(x)$ yields
    \begin{align*}
        \sqrt{N}&(\hat{\theta}_{\text{OUFXP}} - \theta)  \stackrel{d}{\rightarrow} \mathcal{N}(0, \Sigma_{\text{OUFXP}}), \\
        \text{where} \quad \Sigma_{\text{OUFXP}} & \equiv \Big(\sum_{x \in \mathbb{X}} \eta(x) J(x)' \Big(\Gamma(x)\Sigma(x)\Gamma(x)'\Big)^{-1} J(x)\Big)^{-1}.
    \end{align*}

    The maximum likelihood estimator has the following standard limit:
    \begin{align*}
        \sqrt{N}&(\hat{\theta}_{\text{MLE}} - \theta) \stackrel{d}{\rightarrow} \mathcal{N}(0, \Sigma_{\text{MLE}}),\\
        \text{where} \quad \Sigma_{\text{MLE}} & \equiv \Big(\lim_{N \rightarrow \infty} \tfrac{1}{N} \mathbb{E} \Big[ \big(\tfrac{\partial}{\partial \theta} \mathcal{L}^{\theta} \big)'\big(\tfrac{\partial}{\partial \theta} \mathcal{L}^{\theta} \big) \Big] \Big)^{-1} \\
        & = \Big( \sum_{x \in \mathbb{X}} \eta(x) \big(\tfrac{\partial}{\partial \theta} P^{\theta}(x)\big)' \operatorname{diag}(P^{\theta}(x))^{-1} \tfrac{\partial}{\partial \theta} P^{\theta}(x) \Big)^{-1} \\
        & = \Big( \sum_{x \in \mathbb{X}} \eta(x) J(x)' \tfrac{\partial \rho}{\partial v}(C_{V^{\theta}}^{\theta}(x) \Delta')'  \operatorname{diag}(P^{\theta}(x))^{-1} \tfrac{\partial \rho}{\partial v}(C_{V^{\theta}}^{\theta}(x) \Delta') J(x) \Big)^{-1}.
    \end{align*}

    To establish that $\Sigma_{\text{OUFXP}} = \Sigma_{\text{MLE}}$, what is left to show is
    \begin{align*}
        \Big(\tfrac{\partial \rho}{\partial v}(C_{V^{\theta}}^{\theta}(x) \Delta')'  \operatorname{diag}(P^{\theta}(x))^{-1} \tfrac{\partial \rho}{\partial v}(C_{V^{\theta}}^{\theta}(x) \Delta')\Big) \Big(\Gamma(x)\Sigma(x)\Gamma(x)'\Big) = I_{A-1},
    \end{align*}    
    \noindent where $I_{A-1}$ is the $(A-1) \times (A-1)$ identity matrix. We will use the following three properties:

    \begin{enumerate}
        \item $\Gamma(x) \tfrac{\partial \rho}{\partial v}(C_{V^{\theta}}^{\theta}(x) \Delta') = I_{A-1}$. \\
        Because the mappings $\rho$ and $\rho^{-1}$ are continuously differentiable inverses of one another over the interior of the probability simplex, the Inverse Function Theorem implies that their Jacobians are inverses when evaluated at corresponding points. Since $P^{\theta}(x) = \rho(C_{V^{\theta}}^{\theta}(x) \Delta')$, the product of $\Gamma(x) \equiv \tfrac{\partial \rho^{-1}}{\partial P(x)}(P^{\theta}(x))$ and the Jacobian of $\rho$ is the $(A-1) \times (A-1)$ identity matrix.
        
        \item $\tfrac{\partial \rho}{\partial v}(C_{V^{\theta}}^{\theta}(x) \Delta') \Gamma(x) \Sigma(x) = \Sigma (x)$. \\
        The matrix product $\tfrac{\partial \rho}{\partial v}(C_{V^{\theta}}^{\theta}(x) \Delta') \Gamma(x)$ acts as the identity mapping for any vector residing in the tangent space of the probability simplex (i.e., any vector whose elements sum to zero). Because $\Sigma(x) \equiv \operatorname{diag}(P^{\theta}(x)) - P^{\theta}(x) P^{\theta}(x)'$, the columns of $\Sigma(x)$ sum to zero ($\mathbf{1}'\Sigma(x) = P^{\theta}(x)' - P^{\theta}(x)' = \mathbf{0}$). Therefore, applying this operator to $\Sigma(x)$ returns $\Sigma(x)$.
        
        \item $\tfrac{\partial \rho}{\partial v}(C_{V^{\theta}}^{\theta}(x) \Delta')' \mathbf{1} = \mathbf{0}$. \\
        The mapping $\rho$ inherently outputs a valid probability vector in $\mathbb{P}$, meaning its elements must always sum to 1. Consequently, any marginal change in the input values must induce changes in the choice probabilities that sum to exactly 0. Thus, the columns of the Jacobian $\tfrac{\partial \rho}{\partial v}$ sum to zero ($\mathbf{1}' \tfrac{\partial \rho}{\partial v} = \mathbf{0}$), and the transpose of this condition yields the property.
    \end{enumerate}

    \begin{align*}
    &\Big(\tfrac{\partial \rho}{\partial v}(C_{V^{\theta}}^{\theta}(x) \Delta')'  \operatorname{diag}(P^{\theta}(x))^{-1} \tfrac{\partial \rho}{\partial v}(C_{V^{\theta}}^{\theta}(x) \Delta')\Big) \Big(\Gamma(x)\Sigma(x)\Gamma(x)'\Big)\\
    &= \tfrac{\partial \rho}{\partial v}(C_{V^{\theta}}^{\theta}(x) \Delta')'  \operatorname{diag}(P^{\theta}(x))^{-1} \Sigma(x) \Gamma(x)' \\
    &= \tfrac{\partial \rho}{\partial v}(C_{V^{\theta}}^{\theta}(x) \Delta')'  \operatorname{diag}(P^{\theta}(x))^{-1} \big(\operatorname{diag}(P^{\theta}(x)) - P^{\theta}(x) P^{\theta}(x)'\big) \Gamma(x)'\\
    &= \tfrac{\partial \rho}{\partial v}(C_{V^{\theta}}^{\theta}(x) \Delta')' (I - \operatorname{diag}(P^{\theta}(x))^{-1} P^{\theta}(x) P^{\theta}(x)') \Gamma(x)'\\
    &= \tfrac{\partial \rho}{\partial v}(C_{V^{\theta}}^{\theta}(x) \Delta')' \Gamma(x)' - \tfrac{\partial \rho}{\partial v}(C_{V^{\theta}}^{\theta}(x) \Delta')'\operatorname{diag}(P^{\theta}(x))^{-1} P^{\theta}(x) P^{\theta}(x)'\Gamma(x)'\\
    &= I_{A-1} - \tfrac{\partial \rho}{\partial v}(C_{V^{\theta}}^{\theta}(x) \Delta')' \mathbf{1} P^{\theta}(x)'\Gamma(x)'\\
    &= I_{A-1}.
    \end{align*}

    The first equality follows from Property 2. The fifth equality follows from Property 1 and the identity $\operatorname{diag}(P^{\theta}(x))^{-1} P^{\theta}(x) = \mathbf{1}$. The final equality follows from Property 3.
    
\end{proof}

\end{document}